\newtheorem{theorem}{Theorem}
\newtheorem{lemma}{Lemma}[section]
\newtheorem{corollary}{Corollary}[section]
\newtheorem{proposition}{Proposition}
\newtheorem{definition}{Definition}[section]
\newtheorem{example}{Example}
\newenvironment{remark}{\noindent{\bf Remark}
  \hspace*{1em}}{\bigskip}
\def\hlinewd#1{%
\noalign{\ifnum0=`}\fi\hrule \@height #1 %
\futurelet\reserved@a\@xhline}
\definecolor{lightgray}{rgb}{.97,.97,.97}
\newcounter{ouralgorithm}
\newcommand{\algbox}[2]{%
  \refstepcounter{ouralgorithm}
  \begin{center}
    \fcolorbox{black}{lightgray}{%
      \begin{minipage}{.9\textwidth}
        \textbf{Algorithm \theouralgorithm:} {#1}

        \vspace{-.75em} \hrulefill \vspace{0em}

        {#2}
      \end{minipage}
    }\\
  \end{center}
}
\pgfplotsset{compat=1.11}
\tikzstyle{startstop} = [rectangle, rounded corners, minimum width=3cm, minimum height=1cm,text centered, draw=black, fill=white!30]
\tikzstyle{arrow} = [thick,->,>=stealth]
\newcommand{\defeq}{\coloneqq}
\newcommand{\indic}[1]{1\!\left\{#1\right\}} 
\newcommand{\R}{\mathbb{R}}
\newcommand{\Z}{\mathbb{Z}}
\newcommand{\mc}[1]{\mathcal{#1}}
\newcommand{\norm}[1]{\left\|{#1}\right\|} 
\newtheorem{assumption}{Assumption}
\newcommand{\wopt}{w_{opt}}
\newcommand{\E}{\mathbb{E}} 
\providecommand{\argmin}{\mathop{\rm argmin}}
\providecommand{\minimize}{\mathop{\rm minimize}}
\title{The $s$-value: evaluating stability with respect to distributional shifts}
\author{%
  Suyash Gupta \\
  Department of Statistics\\
  Stanford University\\
  Stanford, CA 94305 \\
  \texttt{suyash028@gmail.com} \\
   \And
   Dominik Rothenh\"ausler \\
  Department of Statistics\\
  Stanford University\\
  Stanford, CA 94305 \\
  \texttt{rdominik@stanford.edu} \\
}
\begin{document}

\maketitle

\begin{abstract}
Common statistical measures of uncertainty such as $p$-values and confidence intervals quantify the uncertainty due to sampling, that is, the uncertainty due to not observing the full population. However, sampling is not the only source of uncertainty. In practice, distributions change between locations and across time. This makes it difficult to gather knowledge that transfers across data sets. We propose a measure of instability that quantifies the distributional instability of a statistical parameter with respect to Kullback-Leibler divergence, that is, the sensitivity of the parameter under general distributional perturbations within a Kullback-Leibler divergence ball. In addition, we quantify the instability of parameters with respect to directional or variable-specific shifts. Measuring instability with respect to directional shifts can be used to detect under which kind of distribution shifts a statistical conclusion might be reversed. We discuss how such knowledge can inform data collection for transfer learning of statistical parameters under shifted distributions. We evaluate the performance of the proposed measure on real data and show that it can elucidate the distributional instability of a parameter with respect to certain shifts and can be used to improve estimation accuracy under shifted distributions.
\end{abstract}


\section{Introduction}

Test data sets collected in different locations or at different time points often are drawn from different distributions, due to changing circumstances, changes in unmeasured confounders, 
time shifts in distribution, or distributional shifts in covariates \citep{Shimodaira00, DuchiNa18,DingVa16, EsfahaniKu18}. This makes it difficult to gather knowledge that transfers across data sets.
Statistical estimands such as a regression coefficient or the average treatment effect (ATE) may vary as the underlying distribution changes and hence, statistical findings (such as that the treatment effect is positive) may not replicate across data sets \citep{BasuSuHa17, GijsbertsGrHoEi15}.

In causal inference, the rapidly growing field of sensitivity analysis \citep{CornfieldHaHaLiShWy59, Rosenbaum87,DingVa16,zhao2019sensitivity,cinelli2020making} quantifies the stability of an estimate with respect to unobserved confounding. Roughly speaking, this line of work sees stability analysis as part of uncertainty quantification. Inspired by this line of work, we aim to bring a similar type of stability analysis to a wider range of statistical procedures.

In this paper, we propose a measure of instability, called the $s$-value, to investigate the stability of a given statistical parameter with respect to a shift in the underlying distribution (Figure~\ref{fig:param-change}). The $s$-value quantifies the minimum shift in distribution required to tilt the parameter to a given value, using Kullback-Leibler divergence. We also investigate the stability of parameters with respect to directional or variable-specific shifts. The proposed measure can be used as an exploratory tool to identify the kind of distribution shift that could reverse a statistical conclusion. We further discuss how $s$-values can be used to obtain improved estimates of statistical parameters under a shifted distribution with limited information about the new distribution.

\begin{figure}
    \centering
     \begin{tikzpicture}[node distance=0.4cm]
\node (Training-dist) [startstop] {Training distribution, $P_{0}$};
\node (Test-dist) [startstop, right of=Training-dist, xshift=4cm] {Shifted distribution, $P$};
\draw [arrow] (Training-dist) -- (Test-dist);

\node (Train-theta) [startstop, below of=Training-dist, yshift=-1cm] {Parameter, $\theta(P_0)>0$};
\draw [arrow] (Training-dist) -- (Train-theta);
\node (Test-theta) [startstop, below of=Test-dist, yshift=-1cm] {Parameter, $\theta(P)<0$};
\draw [arrow] (Test-dist) -- (Test-theta);
\end{tikzpicture}
\caption{Distribution shift can change the parameter of interest.}
\label{fig:param-change}
\end{figure}

\subsection{Our contribution}

We propose a measure of instability that quantifies the sensitivity of a one-dimensional statistical parameter to changes in the underlying probability distribution. We focus on shifts in distributions that are absolutely continuous with respect to the training distribution. Let $P_0 \in \mathcal{P}$ be the training distribution on the measure space $(\mathcal{Z},\mathcal{A})$, where $\mathcal{P}$ is the set of probability measures, $Z$ is a random element of $\mathcal{Z}$, and $\theta : \mathcal{P} \mapsto \mathbb{R}$ is the one-dimensional statistical parameter of interest. We are interested in the minimum amount of shift in distribution that changes the sign of the parameter. To this end, we define the stability value ($s$-value) for $\theta$ as

\begin{equation}
\label{eqn:r-value}
s(\theta,P_0)=\sup_{P \in \mc{P}}\exp{- D_{KL}(P\parallel P_0)} \hspace{0.1in} \text{s.t.} \hspace{0.1in} \theta(P) = 0,
\end{equation}
where $D_{KL}$ is the Kullback-Leibler divergence between $P$ and $P_0$ given by
\begin{align*}
D_{KL}(P\parallel P_0)=\int \log\left(\frac{dP}{dP_0}\right) dP.
\end{align*}
We provide some more discussion on the thought process that led to the proposed definition in Appendix, Section~\ref{sec:considerations}. Note that the $s$-value lies in $[0,1]$, with values close to $1$ indicating that a small shift in distribution may alter the findings, and hence, the finding is not distributionally stable. $S$-values close to $0$ indicate that the sign of the parameter is stable under distributional changes. In Section~\ref{sec:r-value}, we discuss estimation of $s$-values for parameters that are linear in the distribution $P$. The proposed procedure can be generalized to parameters defined via risk minimization, including parameters in generalized linear models (see Appendix, Section~\ref{sec:multidim_M_est}).

Considering overall distributional shift does not give information about what kind of distribution shifts the parameter is sensitive to. Hence, we also quantify the instability of parameters with respect to shifts in the distribution of certain exogenous or endogenous variables $E$, assuming that the conditional distribution of the remaining variables given $E$ is constant. Let $E$ be a random variable taking values in the space $\mathcal{E}$. We define the directional or variable-specific $s$-value as

\begin{equation}
\label{eqn:r-condn}
s_E(\theta,P_0)=\sup_{P \in \mc{P} : P(\cdot | E=e ) = P_0(\cdot | E=e) \text{ for all } e \in \mc{E}} \exp{-D_{KL}(P\parallel P_{0})} \hspace{0.1in} \text{s.t.} \hspace{0.1in} \theta(P) = 0.
\end{equation}

where $\mc{P}$ denotes the set of probability distributions over joint random variable $(Z, E)$.
If a practitioner discovers that a parameter is sensitive with respect to changes in the distribution of a certain variable $E$, this knowledge can be used to update the parameter estimate. We discuss how our method can be used to prioritize data collection about the new distribution in Section~\ref{sec:transfer-learning} and use the same to re-estimate the parameter under shifted distribution. The proposed procedure shows promise for the task of prioritizing data collection from the new distribution in the experiments.

\section{Related work}
Quantifying the uncertainty of statistical estimators is a crucial objective in statistics, typically accomplished using classical statistical measures such as $p$-values and confidence intervals to quantify sampling uncertainty. However, these methods typically rely on strong, potentially unjustified assumptions about fixed underlying distributions, which may lead to false discoveries. To improve reliability and reproducibility in statistical estimation, \citet{YuKu20} propose the predictability, computability, and stability (PCS) framework. While they investigate the stability of data results under data and method perturbations, we focus specifically on evaluating the stability of statistical parameters under distributional shifts.

Model misspecification can result in distributional instability. \citet{BujaBeBrGePiTrZhZh19} highlight fundamental issues with model misspecification or non-linearity in linear models. They propose reinterpreting population slopes as statistical functionals of data generating distributions and develop diagnostic tests for detecting model deviations. We introduce measures to illustrate coefficient instability under various distributional shifts for parametric and semi-parametric estimators.

\citep{NamkoongMaGl22} introduce a novel framework to analyze the stability of decision policies and prediction models under distribution shifts. Central to their approach is the notion of stability, characterized as the minimal alteration in the underlying environment required to push a system's performance beyond a specified threshold. In contrast, we focus on understanding stability of parameters with respect to shift in distribution.

In machine learning, there is much work on computing which data or features contribute to a prediction, e.g.\ using Shapley values \citep{lundberg2017unified,ghorbani2019data}. In contrast, we are interested in how distributional changes in features (or covariates) lead to parameter changes. 


Sensitivity analyses in the causal inference literature aim to investigate the stability of causal estimates with respect to unmeasured confounding \citep{CornfieldHaHaLiShWy59, Rosenbaum87,DingVa16,zhao2019sensitivity,cinelli2020making}. Our proposal can be seen as a version of sensitivity analysis for general estimands where we evaluate both the stability of an estimand with respect to the overall shift in distribution and the stability with respect to directional distribution shifts. 

Classical robust statistics \citep{Huber81} addresses robustness against contaminations and outliers using measures like leverage scores and influence functions to construct estimators that are not unduly influenced by such outliers. Influence functions play an important role in this work, since it corresponds to the functional derivative of a parameter with respect to the distribution. Different from classical robust statistics, in our case the perturbation is not a contamination but corresponds to an actual change in the underlying population. Rather than robustifying estimators, our aim is to equip practitioners with tools to detect sources of instability and facilitate the transfer of estimators across different settings. 


There has been a resurgence of research addressing the challenges posed by distributional shifts. This research has mostly focused on building distributionally robust estimators where more weight is given to the outliers by considering worst-case distributional shifts in a neighborhood of the training distribution \citep{DuchiNa18, SinhaNaDu18, EsfahaniKu18, Shafieezadeh-AbadehEsKu15, JeongNa20, CauchoisGuAlDu20, CauchoisGuDu21, CauchoisGuAlDu22, DuchiGuJiSu24, Gupta22, SubbaswamyAdSa21aistats}. In contrast, we quantify the stability of potentially non-linear statistical parameters under both overall and variable-specific  distributional shifts.

There is exciting empirical work by \citet{MartinNa22} who build a library of reference stability values based on survey data sets. They recommend thresholds for s-values based on empirical investigations of how data sets change between settings. This work allows for the contextualization of distributional stability values.

Closely related to our work are empirical likelihoods \citep{Owen01}. In the empirical likelihood framework, small overall distributional tilts are used to construct $p$-values and confidence intervals for a given parameter. In our work, we use distributional shifts of various strengths to evaluate the (directional) stability of an estimand with respect to distribution shifts.

\section{S-value of the mean}
\label{sec:r-value}
In this section, we discuss estimation of the $s$-value of the mean of a one-dimensional real-valued random variable followed by some examples. Estimation of $s$-values for more general settings, including parameters defined via risk minimization, is discussed in Appendix, Section~\ref{sec:multidim_M_est}. We first focus on the special case of mean estimation as it allows us to develop procedures that will be helpful in more general settings.

\subsection{Estimation of the $s$-value}
\label{sec:r-value-mean}

 Consider a one-dimensional real-valued random variable $Z \sim P_0$, where $P_0 \in \mc{P}$. We recall from \eqref{eqn:r-value} that the $s$-value for the mean $(\mu(P_0)=\E_{P_0}[Z])$ is defined  as
\begin{align}
\label{eqn:r-mean}
    s({\mu},P_0)=\sup_{P } \exp\{-  D_{KL}(P||P_0)\} \hspace{0.1in} \text{s.t.} \hspace{0.1in} \mathbb{E}_{P}[Z]=0.
\end{align}
In words, we are interested in finding the distribution closest to our training distribution $P_0$ under which the mean of the random variable is $0$. At first sight, $s$-values might seem difficult to estimate since the supremum in equation~\ref{eqn:r-mean} is taken over the infinite-dimensional space of probability distributions $\mc{P}$. However, it turns out that the $s$-value of the mean can be obtained by solving a one-dimensional convex optimization problem.
\begin{theorem}[Theorem 5.2, \citet{DonskerVa76}]
  \label{thm:Donsker-Varadhan}
  Let $Z \sim P_{0}$ be a real-valued random variable with mean $\mu(P_0)=\mathbb{E}_{P_0}[Z]$ and finite moment generating function on $\mathbb{R}$. Then, we have
\begin{align}
\label{eqn:variational-formula}
 s({\mu},P_0)= \inf_{\lambda}  \mathbb{E}_{P_0}[e^{\lambda Z}].
\end{align}
Further, if the infimum in \eqref{eqn:variational-formula} is attained at some $\lambda^* \in \R$ then the infimum in \eqref{eqn:r-mean} is attained at some probability distribution $Q$ given by
 \begin{align*}
 dQ(z)=\frac{e^{\lambda^* z}}{\E_{P_0}[e^{\lambda^* Z}]} dP_0(z) \text{ for all } z \in \R.
 \end{align*}

\end{theorem}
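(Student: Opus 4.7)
The plan is to recognize the claim as an instance of convex duality between a primal problem that minimizes $D_{KL}(P\|P_0)$ subject to the linear constraint $\mathbb{E}_P[Z]=0$ and a scalar dual over the Lagrange multiplier $\lambda$. After taking logarithms, equation~\eqref{eqn:variational-formula} becomes
\begin{align*}
\inf_{P:\,\mathbb{E}_P[Z]=0} D_{KL}(P\|P_0) \;=\; \sup_{\lambda \in \R}\bigl(-\log \mathbb{E}_{P_0}[e^{\lambda Z}]\bigr),
\end{align*}
so I would prove both inequalities separately and then verify that the tilted measure $Q$ in the theorem attains the primal optimum.

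For the weak-duality direction ($\geq$), I would fix any $\lambda \in \R$ and any feasible $P$ with $P\ll P_0$ (otherwise $D_{KL}(P\|P_0)=\infty$ and the bound is vacuous), and introduce the tilted probability measure $Q_\lambda$ with $dQ_\lambda/dP_0 = e^{\lambda Z}/\mathbb{E}_{P_0}[e^{\lambda Z}]$. The Gibbs inequality $D_{KL}(P\|Q_\lambda)\ge 0$, combined with the algebraic identity
\begin{align*}
D_{KL}(P\|Q_\lambda) \;=\; D_{KL}(P\|P_0) - \lambda\,\mathbb{E}_P[Z] + \log \mathbb{E}_{P_0}[e^{\lambda Z}],
\end{align*}
then yields $D_{KL}(P\|P_0) \geq -\log \mathbb{E}_{P_0}[e^{\lambda Z}]$ after imposing $\mathbb{E}_P[Z]=0$. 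Supremizing over $\lambda$ delivers the desired inequality.

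For strong duality, suppose $\lambda^*$ attains the infimum of $\psi(\lambda) := \mathbb{E}_{P_0}[e^{\lambda Z}]$. Since the MGF is assumed finite, $\psi$ is smooth and convex on a neighborhood of $\lambda^*$, so the first-order condition $\psi'(\lambda^*)=0$ reads $\mathbb{E}_{P_0}[Z\,e^{\lambda^* Z}] = 0$, which is exactly $\mathbb{E}_Q[Z]=0$ for the tilted measure $Q := Q_{\lambda^*}$. Hence $Q$ is primal-feasible, and the direct computation
\begin{align*}
D_{KL}(Q\|P_0) \;=\; \mathbb{E}_Q\bigl[\lambda^* Z - \log \psi(\lambda^*)\bigr] \;=\; -\log \psi(\lambda^*)
\end{align*}
matches the dual bound, closing the gap and identifying $Q$ as the primal optimum.

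The main obstacle is the regularity needed to differentiate $\psi$ at $\lambda^*$ and conclude that $\psi'(\lambda^*)=0$: one must ensure $\lambda^*$ lies in the interior of $\{\lambda : \psi(\lambda) < \infty\}$ so that dominated convergence permits passing the derivative inside the expectation, which is precisely where the finite-MGF hypothesis is used. Apart from this regularity step, the entire argument is a specialization of the Donsker--Varadhan variational principle to the linear functional $Z$.
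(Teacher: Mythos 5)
The paper does not prove this statement itself: it is imported verbatim as Theorem~5.2 of \citet{DonskerVa76}, and no proof appears in the appendix. So there is no in-paper argument to compare against; what you have written is the standard Lagrangian-duality / Gibbs-variational-principle derivation, and it is essentially correct. Your weak-duality step is sound: the identity $D_{KL}(P\|Q_\lambda) = D_{KL}(P\|P_0) - \lambda\,\mathbb{E}_P[Z] + \log\mathbb{E}_{P_0}[e^{\lambda Z}]$ is valid (note $Q_\lambda$ and $P_0$ are mutually absolutely continuous since the tilting density is strictly positive, so $P\ll P_0$ gives $P\ll Q_\lambda$ and Gibbs' inequality applies), and the finite-MGF hypothesis makes the finiteness domain of $\psi$ all of $\R$, so the differentiation-under-the-integral concern you flag at the end is automatically resolved. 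The strong-duality computation, showing $Q_{\lambda^*}$ is primal feasible via $\psi'(\lambda^*)=0$ and that $D_{KL}(Q_{\lambda^*}\|P_0) = -\log\psi(\lambda^*)$, correctly establishes both the equality and the attainment claim when $\lambda^*$ exists.

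The one genuine gap is that the theorem asserts the equality $s(\mu,P_0)=\inf_\lambda \mathbb{E}_{P_0}[e^{\lambda Z}]$ unconditionally, whereas your argument only closes the duality gap when the infimum is attained. For a finite-everywhere MGF, non-attainment occurs exactly when $\psi$ is monotone on all of $\R$, i.e.\ when $Z\ge 0$ a.s.\ or $Z\le 0$ a.s.\ (and $Z\not\equiv 0$); then $\inf_\lambda\psi(\lambda)=P_0(Z=0)$. In that case every feasible $P\ll P_0$ with $\mathbb{E}_P[Z]=0$ must be supported on $\{Z=0\}$, the optimal such $P$ is $P_0(\cdot\mid Z=0)$ with $D_{KL}=-\log P_0(Z=0)$ (and the feasible set is empty when $P_0(Z=0)=0$, consistent with $s=0$, as in the paper's positive-support example). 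Adding this two-line case analysis would make the proof of the displayed equality complete rather than conditional on attainment.
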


We note that $M_{Z}(\lambda)=\mathbb{E}[e^{\lambda Z}]$ is the moment generating function of $Z$. Since, $M_{Z}(0)=1$, we have $s({\mu},P_0) \in [0,1]$. In practice, we only have access to finitely many realizations of the data generating distribution. Let $P_n$ be the empirical distribution of $Z_i \overset{\text{i.i.d.}}{\sim} P_0$ for $i \in [n]$, we obtain an estimator of the $s$-value via the plugin estimator
\begin{align}
\label{eqn:variational-formula-finite}
\hat s({\mu},P_n)= \inf_{\lambda}  \mathbb{E}_{P_n}[e^{\lambda Z}] = \inf_{\lambda} \frac{1}{n} \sum_{i=1}^n e^{\lambda Z_i}.
\end{align}
Using classical results for $M$-estimators (see Chapter 5 of \citet{VanDerVaart98}), we show that $\hat s({\mu},P_n)$ is consistent and asymptotically normal in Appendix \ref{sec:s-val-mean}.

\paragraph{Directional $s$-values.}

The previous form of distributional stability might be very conservative. In practice, we do not expect all aspects of distribution to change from setting to setting. To allow for a more fine-grained evaluation of stability, we also consider directional shifts, which only change certain aspects of the distribution. In the following, we will make this more precise.

 Let $P_0$ be the joint distribution of the multivariate random variable $(Z,E)$ where $Z$ takes values in $\mc{Z} \subseteq \R $ and $E$ takes values in $\mc{E} \subseteq \R^p$ for some positive integer $p$. $E$ may be an exogenous or endogenous variable. We consider a directional shift, i.e.\ a situation where the marginal distribution of $E$ may change while keeping the conditional distribution of $Z$ given $E$ constant. To be more precise, we seek to estimate
 \begin{equation}
  \label{eqn:r-condn2}
    s_E(\theta,P_0)=\sup_{P \in \mc{P} :  P(\cdot | E=e ) = P_0(\cdot | E=e) \text{ for all } e \in \mc{E}} \exp\{-D_{KL}(P||P_{0})\} \hspace{0.1in} \text{s.t.} \hspace{0.1in} \theta(P) = 0.
  \end{equation}
We next show that $s_E$ is a solution to a one-dimensional convex optimization problem.
 The proof of the following result can be found in Appendix~\ref{sec:proof-cond}.

\begin{theorem}
\label{thm:Donsker-Varadhan-condn}
Let $P_0$ be the joint distribution function of the random variable $(Z,E)$ taking values in $ \mc{Z}\times\mc{E}$ with $\mu=\mathbb{E}_{P_0}[Z]$ and finite moment generating function on $\mathbb{R}$. Then,
\begin{align}
\label{eqn:variational-formula-cond}
 s_E(\mu,P_0)= \inf_{\lambda}  \mathbb{E}_{P_0}[e^{\lambda \mathbb{E}_{P_0}[Z \mid E]}].
\end{align}

Further, if the infimum in \eqref{eqn:variational-formula-cond} is attained at some $\lambda^* \in \R$ then the infimum in \eqref{eqn:r-condn} is attained at some probability distribution $Q$ given by
 \begin{align*}
 dQ(z,e)=\frac{e^{\lambda^* \mathbb{E}_{P_0}[Z \mid E=e]}}{\E_{P_0}[e^{\lambda^* \mathbb{E}_{P_0}[Z \mid E]}]} dP_0(z,e) \text{ for all } (z,e) \in \mc{Z}\times \mc{E}.
 \end{align*}
\end{theorem}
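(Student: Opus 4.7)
The plan is to reduce the constrained problem over joint distributions $P$ on $\mathcal{Z} \times \mathcal{E}$ to the unconstrained problem of Theorem~\ref{thm:Donsker-Varadhan} applied to the one-dimensional random variable $W := \mathbb{E}_{P_0}[Z \mid E]$ under the marginal $P_0^E$ of $E$. The key observation is that the constraint $P(\cdot \mid E=e) = P_0(\cdot \mid E=e)$ forces the Radon–Nikodym derivative $dP/dP_0$ to be a function of $E$ alone, namely $dP^E/dP_0^E$, where $P^E$ denotes the marginal of $E$ under $P$. This single fact drives the entire reduction.

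First, I would verify the three identities that make the reduction work. (i) For any admissible $P$, the chain rule for Radon–Nikodym derivatives and the equality of conditionals give $dP(z,e) = \tfrac{dP^E}{dP_0^E}(e)\,dP_0(z,e)$; in particular, $dP/dP_0$ depends only on $e$. (ii) Because $\log(dP/dP_0)$ is $\sigma(E)$-measurable, $D_{KL}(P \| P_0) = \mathbb{E}_P\!\left[\log \tfrac{dP^E}{dP_0^E}(E)\right] = D_{KL}(P^E \| P_0^E)$. (iii) By the tower property and the constraint on conditionals, $\mathbb{E}_P[Z] = \mathbb{E}_P\!\left[\mathbb{E}_P[Z\mid E]\right] = \mathbb{E}_P[W] = \mathbb{E}_{P^E}[W]$. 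These three identities reduce \eqref{eqn:r-condn2} to
\begin{equation*}
    s_E(\mu, P_0) \;=\; \sup_{Q^E \ll P_0^E} \exp\bigl\{-D_{KL}(Q^E \| P_0^E)\bigr\} \quad \text{s.t.} \quad \mathbb{E}_{Q^E}[W] = 0,
\end{equation*}
which is exactly the unconditional problem \eqref{eqn:r-mean} posed for the random variable $W$ with training distribution $P_0^E$.

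Next, I would check that Theorem~\ref{thm:Donsker-Varadhan} applies to $W$ under $P_0^E$, which amounts to verifying that $W$ has finite moment generating function. By conditional Jensen's inequality, $\mathbb{E}_{P_0^E}[e^{\lambda W}] = \mathbb{E}_{P_0}\!\left[e^{\lambda \mathbb{E}_{P_0}[Z\mid E]}\right] \leq \mathbb{E}_{P_0}\!\left[\mathbb{E}_{P_0}[e^{\lambda Z} \mid E]\right] = \mathbb{E}_{P_0}[e^{\lambda Z}]$, which is finite by hypothesis. Applying Theorem~\ref{thm:Donsker-Varadhan} directly yields $s_E(\mu, P_0) = \inf_\lambda \mathbb{E}_{P_0}[e^{\lambda \mathbb{E}_{P_0}[Z\mid E]}]$, establishing \eqref{eqn:variational-formula-cond}. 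If the infimum is attained at $\lambda^*$, the theorem also supplies the optimal marginal
\begin{equation*}
    \frac{dQ^E}{dP_0^E}(e) \;=\; \frac{e^{\lambda^* \mathbb{E}_{P_0}[Z\mid E=e]}}{\mathbb{E}_{P_0}\!\left[e^{\lambda^* \mathbb{E}_{P_0}[Z\mid E]}\right]}.
\end{equation*}
Translating back to the joint distribution via $dQ(z,e) = \tfrac{dQ^E}{dP_0^E}(e)\,dP_0(z,e)$ gives the stated closed form.

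The main obstacle is the reduction step itself: one must argue carefully that the constraint on conditional distributions truly collapses $dP/dP_0$ to a $\sigma(E)$-measurable function. Once that is clean, both the KL factorization and the identity $\mathbb{E}_P[Z] = \mathbb{E}_{P^E}[W]$ follow essentially for free, and the remainder is an invocation of the unconditional Donsker–Varadhan theorem stated above. No new convex analysis is required beyond what \eqref{eqn:variational-formula} already provides.
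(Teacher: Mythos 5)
Your proposal is correct and follows essentially the same route as the paper: both reduce the constrained problem to Theorem~\ref{thm:Donsker-Varadhan} applied to the random variable $\mathbb{E}_{P_0}[Z\mid E]$, using the identity $\mathbb{E}_P[Z]=\mathbb{E}_P[\mathbb{E}_{P_0}[Z\mid E]]$ and the chain rule for KL divergence (which you phrase as the density $dP/dP_0$ collapsing to a $\sigma(E)$-measurable function). Your explicit Jensen-inequality check that $\mathbb{E}_{P_0}[Z\mid E]$ inherits a finite moment generating function is a small addition the paper leaves implicit, but the argument is the same.
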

This result allows us to estimate the directional $s$-value.  Let $\hat f_n(E)$ be an estimator of $\mathbb{E}[Z \mid E]$. Then, we can define a plug-in estimator by setting
\begin{align}
\label{eqn:variational-formula-cond-finite}
 \hat s_E(\mu,P_n)= \inf_{\lambda}  \frac{1}{n} \sum_{i=1}^n e^{\lambda \hat f_n(E_i)}.
\end{align}

We prove the consistency of $\hat s_E(\mu,P_n)$ in Appendix, Section~\ref{sec:s-val-mean}. In the Appendix, we also discuss how to form a de-biased estimator of the directional $s$-value that is asymptotically normal.

\subsection{Examples}

\begin{example}[Distribution with positive support]
If $Z$ is a random variable that has positive support with probability 1, then $s({\mu},P_0)=0$, which reflects the fact that for any distribution shift within the KL-divergence ball, we will always have a positive mean.
\end{example}

\begin{example}[Gaussian distribution]
\label{exm:gaussian}
If $Z \sim N(\mu,\sigma^2)$, then $s({\mu},P_0)=e^{-\frac{\mu^2}{2 \sigma^2}}$. 
\end{example}

Thus, in the Gaussian case the stability measure $s$ is a monotonous transformation of the signal-to-noise ratio. High signal-to-noise ratio yields lower values of $s$ indicating stronger distributional stability. 

Let us now develop some intuition for directional shifts. First, we derive conditions under which the directional stability is zero, that is, conditions under which $s_E(\mu,P_0) = 0$.

\begin{example}[Directional stability]
  Let $\mathbb{E}_{P_0}[Z|E] > 0$. Then,
\begin{align*}
 s_E(\mu,P_0)= \inf_{\lambda}  \mathbb{E}_{P_0}[e^{(\lambda \mathbb{E}_{P_0}[Z \mid E])}] = \lim_{\lambda \rightarrow - \infty}  \mathbb{E}_{P_0}[e^{(\lambda \mathbb{E}_{P_0}[Z \mid E])}] = 0.
\end{align*}
\end{example}

\begin{example}[Average treatment effect] 
\label{exm:ate}
Here we consider estimating the causal effect of a treatment via the potential outcome framework \citep{NeymanDaSp90, Rubin74}. We have a  binary treatment random variable $A \in \{0,1\}$, potential outcomes $Y(1)$ and $Y(0)$ corresponding to the potential outcome under treatment and control respectively and some covariates $X$. Under the consistency assumption, we observe $Y(1)$ if $A=1$ and $Y(0)$ if $A=0$, i.e.\ $Y = A Y(1) + (1-A) Y(0)$. One can write the average treatment effect (ATE) as
\begin{equation*}
\tau =\E_{X\sim P_X}\E[Y(1)-Y(0) \mid X]= \E_{X\sim P_X}[\mu_{(1)}(X)-\mu_{(0)}(X)],
\end{equation*}
where $\mu_{(a)}(X)=E[Y(a)\mid X]$. 
Hence, if we only consider shifts in marginal distribution of covariates $X$ keeping the conditional distribution of other variables given the covariates as fixed, we obtain $s_X$-values as above with $Z = \mu_{(1)}(X)-\mu_{(0)}(X)$. In practice, $\mu_{(1)}(X)$ and $\mu_{(0)}(X)$ are often unknown. We can use plug-in estimators $\hat \mu_{(1)}(X)$ and $\hat \mu_{(0)}(X)$ to form the estimator 
\begin{equation*}
  \hat s_X(\tau,P_0) = \inf_{\lambda}  \frac{1}{n} \sum_{i=1}^n e^{\lambda ( \hat \mu_{(1)}(X_i) - \hat \mu_{(0)}(X_i) ) }.
\end{equation*} Consistency of this estimator can be shown with the same technique as Lemma~\ref{lem:consistency-mean-cond} in Appendix \ref{sec:s-val-mean}.

\end{example}


In statistical analysis, risk minimization is often used to define various parameters, including regression coefficients and general M-estimators. However, unlike parameters that can be represented as the mean of a random variable, these parameters lack a simple representation as they are not linear in the underlying probability distribution. This makes the optimization problem involved in obtaining $s$-values non-convex. To address this issue, we present methods for obtaining $s$-values for such parameters in Section \ref{sec:multidim_M_est} of the Appendix.

\section{ Parameter transfer using $s$-values}
\label{sec:transfer-learning}


In Section~\ref{sec:r-value}, we introduced $s$-values that measure the distributional stability of statistical parameters with respect to various shifts. In this section, we discuss how we can use $s$-values to guide further data collection. The above problem of re-estimating parameters under a shifted distribution is related to the transfer learning literature that overlaps with various fields including robust machine learning, causal inference, and conformal inference \citep{PanYa10, WenYuGr14, BarberCaRaTi19b}. Here, we discuss how $s$-values can guide transfer learning.

If a parameter is unstable with respect to a shift in marginal distribution of certain covariates, then knowledge about those covariates can be used to transfer parameters across distributions. As an example, assume that we have collected some data on a job program in New York. We now want to estimate how efficient this job program would be in Boston. We have not run this job program in Boston yet, so we do not know all covariates of the participants. However, we can find that the efficiency of the job program is likely unstable with respect to changes in the demographics of job seekers in Boston. How can we use this knowledge to estimate the efficiency of the job program in Boston, based on limited data about the population in Boston? In the following, we will discuss this problem in a formal framework. We discuss another important case. Researchers are often interested in a causal effect estimate for a new location. For some covariates such as age and education, partial data is available via surveys such as American National Election Study (ANES) or Cooperative Election Study (CES). Additional partial data can be cheaply obtained via Amazon Mechanical Turk. However, some covariates are hard to collect, since they require running a study in the new location. Our numerical results show that the proposed approach can help prioritize data collection. This may drastically reduce the cost compared to running full-scale replication studies.

Assume that we want to estimate a parameter $\theta(P_{\text{shift}})$ for $P_{\text{shift}} \neq P_{0}$, but we only have observations from $P_{0}$. In addition, we may be able to collect some information about $P_{\text{shift}}$, for example, observations of a subset of variables $X_S \in \mathbb{R}^{d}$. For example, one may know the age distribution of job seekers in Boston. Intuitively, we'd like to re-weight $P_{0}$ so that the distribution of age matches the distribution of age in Boston. However, there may be infinitely many choices of weights. These different choices of weights will correspond to different values of $\theta$. Thus, in practice, it is crucial to use a form of regularization when finding a re-weighted distribution ($P_{proj}$).

We can define $P_{proj}$ as the solution of the following optimization problem:

\begin{equation*}
P_{proj} = \arg \min_{P'} D_{KL} (P' | P_0) \text{ such that } P'(X_S = \cdot) = P_{\text{shift}}(X_S = \cdot),
\end{equation*}
where $D_{KL}$ is the Kullback-Leibler divergence. The objective is similar to \citet{Hainmueller12}, where the author proposes entropy balancing to achieve covariate balance between treated and control sets for estimating the average treatment effect. This objective can be solved explicitly, leading to a covariate shift setting \citep{PanYa10}. More specifically, a short calculation shows that if the minimum is finite, then

\begin{equation*}
\mathrm{d} P_{proj}(z,x_S) = \mathrm{d} P_0(z|x_S) \mathrm{d} P_\text{shift}(x_S).
\end{equation*}

Since data collection can be costly, one would like to prioritize collecting data that is relevant for the transfer learning task. If $s_{X_S}(\theta - c ,P) = 0$ for all $c \neq \theta(P)$, then the parameter is constant under shifts in the marginal distribution of $X_S$. On the other hand, if $s_{X_S}(\theta,P) \approx 1$, then small changes in the distribution of $X_S$ might induce a large change in the parameter $\theta(\cdot)$. These heuristics motivate the following approach:

\begin{enumerate}

  \item Find variables $X_{S}$ with respect to which the parameter of interest is most sensitive to as determined by directional $s$-values. Collect observations of $X_S$ under the shifted distribution. 

  \item Estimate $\theta(P_{\text{proj}})$.
\end{enumerate}
There are several existing approaches to deal with part 2. In particular, it is possible to estimate $\theta(P_{\text{proj}})$ for a large range of estimands $\theta(\bullet)$, with asymptotically normal and efficient estimators \citep{robins1994estimation,liu2020doubly,jin2023tailored}. We can leverage these existing estimators in our workflow so that we have statistical guarantees for all steps in this pipeline. 

We will investigate the empirical performance of this two-stage approach in Section~\ref{sec:experiments}.

\newcommand{\indep}{\perp \!\!\! \perp}

\section{Experiments}\label{sec:experiments}

In this section, we consider real-world data to illustrate the effectiveness of the proposed methods in elucidating the distributional instability of various statistical procedures. In addition, we evaluate the two-stage transfer learning procedure described in Section~\ref{sec:transfer-learning}.

We note that $s$-values can be used to create sensitivity plots. Under various distribution shifts, the parameter can attain a range of values. More concretely, for different choices of $E$ and an upper bound on the distribution shift $c \in \R$ we define upper and lower bounds for parameter values as follows:
 \begin{align}\label{eq:upper-lower}
  \begin{split}
   \theta_{\text{upper-bound}} & =  \sup \theta(P) \text{ such that }   \\
       & P \in \mathcal{P}: P( \cdot | E=e)= P_0( \cdot |E=e)   \text{ for all $e \in \mathcal{E}$  and } D_{KL}(P \| P_0) \le c  \\
   \theta_{\text{lower-bound}} &=  \inf \theta(P) \text{  such that }  \\
   &  P \in \mathcal{P}: P( \cdot | E=e) = P_0( \cdot |E=e)  \text{ for all $e \in \mathcal{E}$ and } D_{KL}(P \| P_0) \le c.
  \end{split}
 \end{align}
 In experiments, we plot estimated versions of these upper and lower bounds across $c$ for different choices of the variable $E$. Note that if $\theta(P_0)>0$, then $s(\theta,P_0)$ is the minimum $c$ for which $ \theta_{\text{lower-bound}}=0$.

\subsection{National supported work demonstration data (NSW)}

Here, we analyze the stability of the average treatment effect estimator in the presence of covariate shift using the NSW dataset \citep{Lalonde86}, which consists of $n=722$ participants randomly assigned to a treatment or control group (variable $A$) in an employment program field experiment conducted between January 1976 and July 1977. Covariates $X$ include `age', `education', `black', `hispanic', `married', `nodegree', and `re75', where `re75' denotes pre-intervention earnings in 1975. The outcome variable is `re78', corresponding to post-intervention earnings in 1978. We apply augmented inverse probability weighting (AIPW) using causal forests \citep{WagerAt18} to estimate the average treatment effect, resulting in an estimate of $820$ with a standard deviation of $492$. We evaluate the performance of the two-stage transfer learning procedure presented in Section~\ref{sec:transfer-learning}.


\paragraph{Distributional stability of average treatment effect.}

We investigate the stability of the average treatment effect estimator under distributional changes. Specifically, we examine how $\mathbb{E}_{P}[\tau(X)]$ changes when there is a shift in the underlying distribution $P$ of each predictor separately. We measure the distributional stability of $\mathbb{E}_{P}[\tau(X)]$ while keeping the conditional distribution of the other variables given the predictor constant. The study uses the NSW dataset \citep{Lalonde86}, where the outcome variable is `re78', and the covariates $X$ include `age', `education', `black', `hispanic', `married', `nodegree', and `re75', where `re75' denotes pre-intervention earnings in 1975. We estimate the average treatment effect using augmented inverse probability weighting (AIPW) implemented with causal forests \citep{WagerAt18}, resulting in an estimate of $820$ with a standard deviation of $492$. Our findings show that the $s$-values of average treatment effect conditional on `age', `education', `black', `hispanic', and `re75' are non-zero. We also note that the average treatment effect is unstable with respect to changes in the marginal distribution of `age', `education', and `re75' indicating that the average treatment effect can change its sign with a shift in the marginal distribution of these covariates ($s_X>0.85$). We present the directional $s$-values in Table~\ref{table:NSW}.

\begin{table}[ht]
\label{table:NSW}
\centering
\caption{S-values for NSW data set}
\begin{tabular}{lccccccc}
\toprule
Feature & Age & Education & Black & Hispanic & Married & Nodegree & Re75 \\
\midrule
Directional s-value & 0.97 & 0.91 & 0.52 & 0.54 & 0 & 0 & 0.96 \\
\bottomrule
\end{tabular}
\label{table:NSW}
\end{table}

%

\paragraph{Parameter transfer.}

In this section, we evaluate the two-stage transfer learning procedure described in Section~\ref{sec:transfer-learning} using the DJW subset of the original Lalonde data extracted by \citet{DehejiaWa99} to generate training and test datasets with different distributions. The remaining samples are referred to as DJWC. The DJW subset includes 185 treated and 260 control observations and has additional information on pre-interventional earnings in 1974. We present the pre-intervention characteristics of the two subsets in Appendix~\ref{sec:appendix-expt} and find that they differ in distribution along several variables, which are statistically significant. This split into training and test datasets allows us to evaluate the transfer learning method in a setting with strong covariate shift. We estimate the average treatment effect separately in the two subsets using causal forest \citep{WagerAt18}. The estimate on the DJW subset was 1636.7 with a standard deviation of 668.8, while on DJWC, it was -847.5 with a standard deviation of 657.2.

Next, we obtain our training set by adding some proportion $\alpha$ of randomly chosen samples from the DJW subset to the DJWC subset, where $\alpha$ takes values in the set ${0.05,0.1,0.2,0.3}$, and use the remaining samples as the test set. We use the procedure described in Section \ref{sec:transfer-learning} to obtain a projection of the training distribution that closely approximates the test distribution. We use two transfer methods: full transfer, where we use all the covariates for the transfer, and partial transfer, where we use only the subset of covariates with which the ATE is most unstable, namely `age', `education', and `re75'.
We display our results in Figure \ref{fig:Lalonde_transfer}, where we find that both transfer methods lead to lower ATE estimation error than the naive procedure. Further, we observe that there is not much gain with the full transfer method that uses all the covariates over the partial transfer method.

\begin{figure*}[!htb]
\captionsetup[subfigure]{labelformat=empty}
\centering
\begin{subfigure}{.55\textwidth}
  \centering
 \begin{overpic}[
  				scale=0.18]{%
     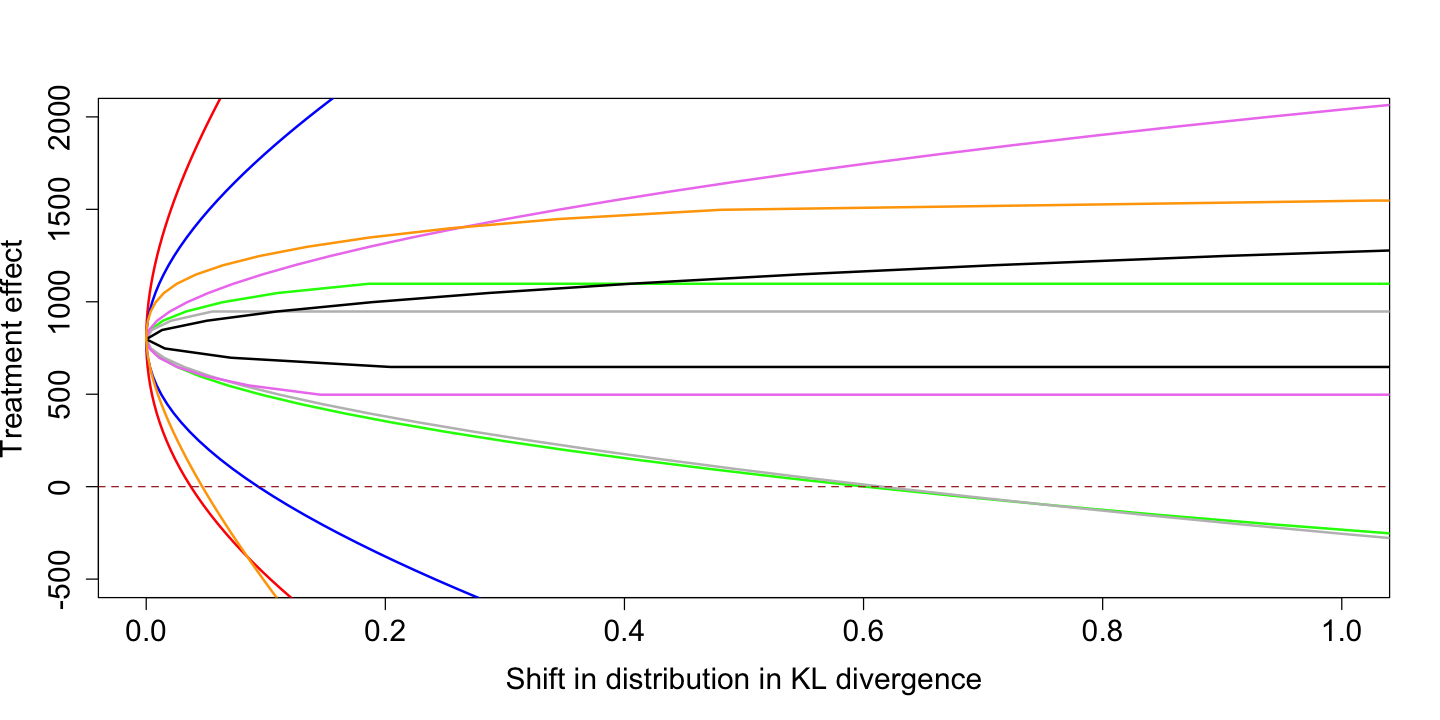}

   \put(1, -1){
      \tikz{\path[draw=white, fill=white] (0, 0) rectangle (10cm, 0.53cm)}
    }

     \put(5, 0){
        \small Shift in marginal distribution of a given covariate with respect to KL divergence }

  \end{overpic}
  \caption{ }

\end{subfigure}%
\begin{subfigure}{.58\textwidth}
  \centering
 \begin{overpic}[
  				scale=0.3]{%
     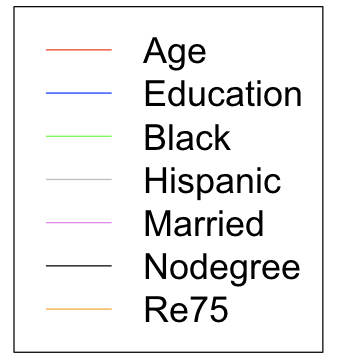}

  \end{overpic}
  \caption{ }
\end{subfigure}

\caption{The plot shows the estimated minimum and maximum value   of the average treatment effect for NSW data achievable when allowing for distribution shift in some covariate (cf. equation~\eqref{eq:upper-lower}).}
\label{fig:Lalonde-profile}
\end{figure*}

\begin{figure*}[!htb]
  \centering
 \begin{overpic}[
  				scale=0.115]{%
     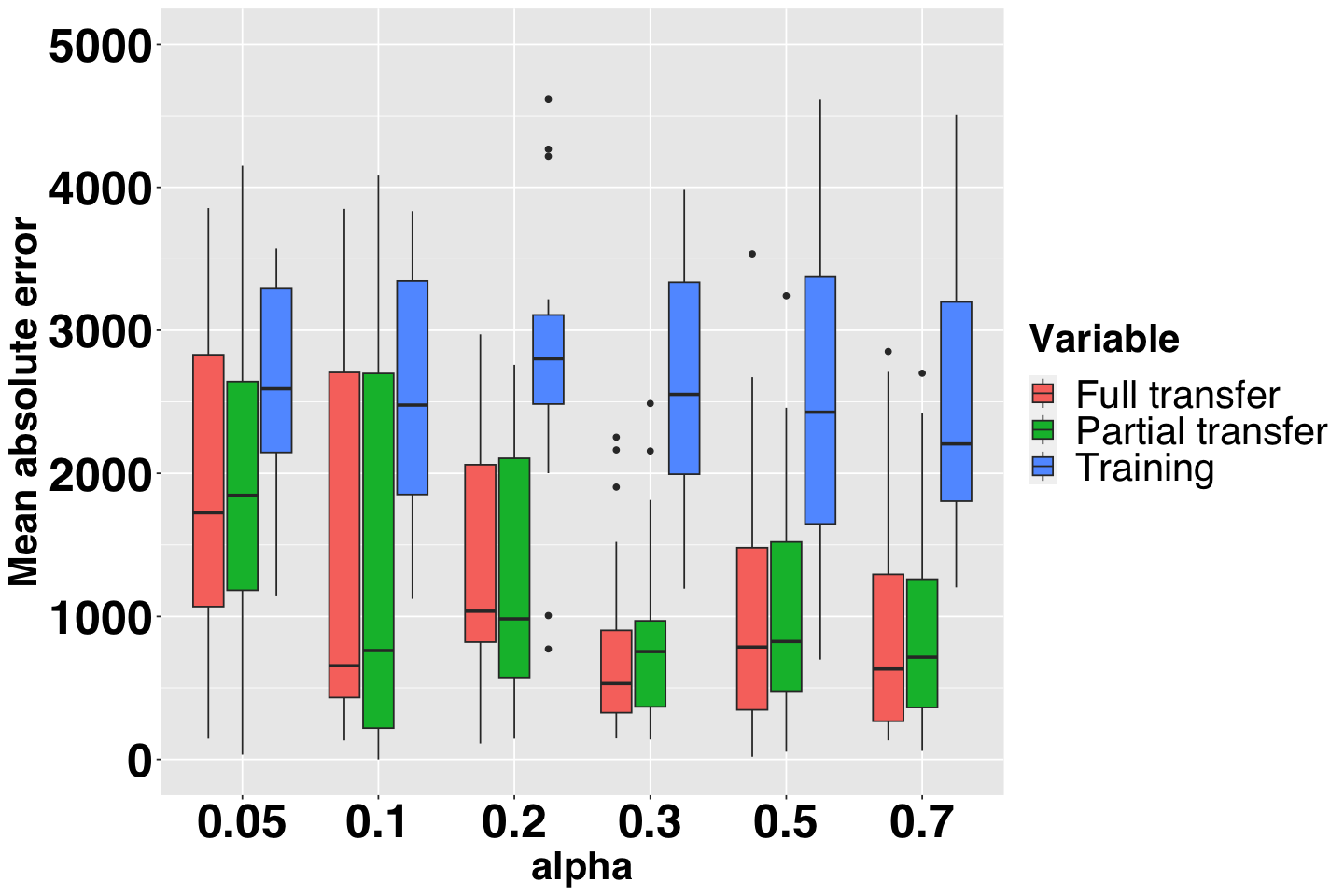}
   \put(-1.5,10){
      \tikz{\path[draw=white, fill=white] (0, 0) rectangle (.2cm, 2cm)}
    }
    \put(-6.5,20){\rotatebox{90}{
      $|\tau(P_{.})-\tau(P_{\text{test}})|$}
    }
    \put(15, -1.1){
      \tikz{\path[draw=white, fill=white] (0, 0) rectangle (7cm, .3cm)}
    }
    \put(35, -2){
      alpha}
  \end{overpic}
%
%
%
%
%
%
%
%
%

\caption{Parameter transfer on the NSW data set. The transfer procedure described in Section~\ref{sec:transfer-learning} compared to a naive procedure that uses only the training distribution, and a full transfer procedure that uses data on all covariates from the new distribution. The green, red and blue bars represent performance of transfer learning with partial, full new data, and naive method respectively. Error bars show the range of error over 20 repetitions.}
\label{fig:Lalonde_transfer}
\end{figure*}

\subsection{Wine Quality data set}
\label{sec:wine}
We evaluate the effectiveness of our method using the wine quality dataset from the UCI Machine Learning Repository \citep{CortezCeAlMaRe09, DuaGr19}. The dataset includes subgroups of red and white wines, each with $11$ chemical properties used as predictors. The response is a continuous quality assessment measured on a scale of $0$ to $10$. The dataset consists of $1599$ red wines and $4898$ white wines. We use all red wines as the training set and randomly select a proportion $\alpha$ from the white wines, where $\alpha \in \{0.01,0.05,0.1 \}$. The remaining observations are used for testing. We include a small proportion of white wines in the red wine training set to ensure that the shifted distribution is absolutely continuous with respect to the training distribution. This step is necessary to avoid making transfer learning very challenging when datasets deviate from this assumption.

\paragraph{Distributional stability of regression coefficients.}
We use directional $s$-values to assess the distributional stability of ordinary least-squares regression coefficients. 
We focus our analysis on the predictors "pH" and "density," although similar results can be obtained for other variables. Figure~\ref{fig:wine-profile} shows estimates of the minimum and maximum achievable value of a regression coefficient given a specific shift in the marginal distribution of a covariate (as defined in equation~\eqref{eq:upper-lower}). We observe that the coefficient of "pH" is unstable with respect to shifts in "fixed.acidity," "chlorides," "pH," "sulphates," and "alcohol" ($s_X>0.85$). The coefficient of "density" is unstable with respect to shifts in "volatile.acidity," "total.sulfur.dioxide," "sulphates," and "alcohol" $(s_X>0.85)$. We present the directional $s$-values in Tables~\ref{table:ph} and \ref{table:density}.

\paragraph{Parameter transfer.}
We employ \citet{jin2023tailored}'s transfer procedure to estimate the parameter under shifted distributions. This transfer procedure combines a re-weighting step with a bias-correction step for semi-parametrically efficient transfer. We compare three different estimators. The first one uses only the training distribution, the second transfers the parameter using the covariates found to be unstable in the previous step, and the third employs all covariates for transfer (full transfer). Figure~\ref{fig:wine_transfer} displays the estimation error of the parameter under the projected and training distributions. Both transfer learning methods have smaller errors than the naive estimator that uses only the training distribution. However, the full transfer method that employs all covariates does not yield much improvement over partial transfer. For $\alpha=0.01$, there is not much enhancement in the coefficient of "density," which might be due to a partial violation of the assumption that the test distribution is absolutely continuous with respect to the training distribution.

\begin{table}[ht]
\centering
\caption{Directional S-values for wine quality data set (parameter "pH")\\   \\
f.a = fixed.acidity, v.a = volatile.acidity, c.a = citric.acid, f.so2 = free.sulfur.dioxide, t.so2 = total.sulfure.dioxide, cl = chlorides, so4 = sulphates }
\begin{tabularx}{\textwidth}{l*{11}{X}}
\toprule
Feature & f.a & v.a & c.a & r.s & cl & f.so2 & t.so2 & density & pH & so4 & alcohol \\
\midrule
s-value & 0.86 & 0.81 & 0.65 & 0.83 & 0.94 & 0.55 & 0.8 & 0.83 & 0.97 & 0.97 & 0.88 \\
\bottomrule
\end{tabularx}
\label{table:ph}
\end{table}

\begin{table}[ht]
\centering
\caption{Directional S-values for wine quality data set (parameter "density")\\   \\
f.a = fixed.acidity, v.a = volatile.acidity, c.a = citric.acid, f.so2 = free.sulfur.dioxide, t.so2 = total.sulfure.dioxide, cl = chlorides, so4 = sulphates }
\begin{tabularx}{\textwidth}{l*{11}{X}}
\toprule
Feature & f.a & v.a & c.a & r.s & cl & f.so2 & t.so2 & density & pH & so4 & alcohol \\
\midrule
s-value & 0.80 & 0.94 & 0.83 & 0.81 & 0.78 & 0.81 & 0.93 & 0.84 & 0.79 & 0.9 & 0.98 \\
\bottomrule
\end{tabularx}
\label{table:density}
\end{table}

\begin{figure*}[!htb]
\captionsetup[subfigure]{labelformat=empty}
\centering
\begin{subfigure}{.55\textwidth}
  \centering
 \begin{overpic}[
  				scale=0.18]{%
     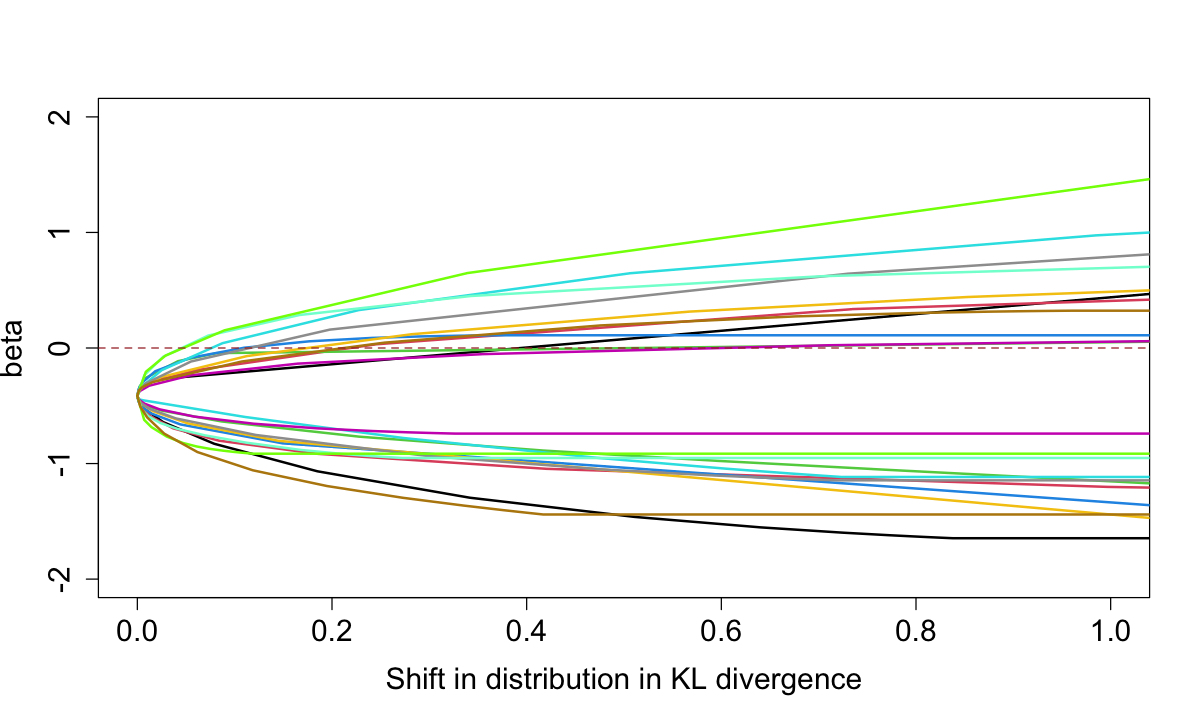}

   \put(1, -1){
      \tikz{\path[draw=white, fill=white] (0, 0) rectangle (10cm, 0.4cm)}
    }
    
         \put(5, 0){
        \small Shift in marginal distribution of a given covariate with respect to KL divergence }

 \put(-1.5,10){
      \tikz{\path[draw=white, fill=white] (0, 0) rectangle (.2cm, 4cm)}
    }
   \put(-3.5,20){\rotatebox{90}{
      $\beta_{\text{pH}}$}
    }
  \end{overpic}
  \caption{ }

\end{subfigure}%
\begin{subfigure}{.62\textwidth}
  \centering
 \begin{overpic}[
  				scale=0.22]{%
     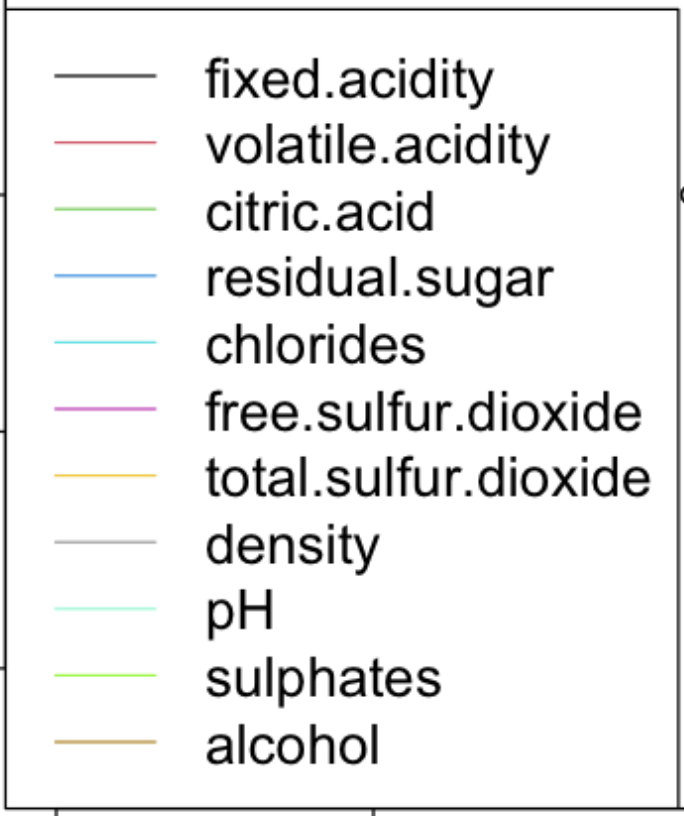}

  \end{overpic}
  \caption{ }

\end{subfigure}
\begin{subfigure}{.55\textwidth}
  \centering
 \begin{overpic}[
  				scale=0.18]{%
     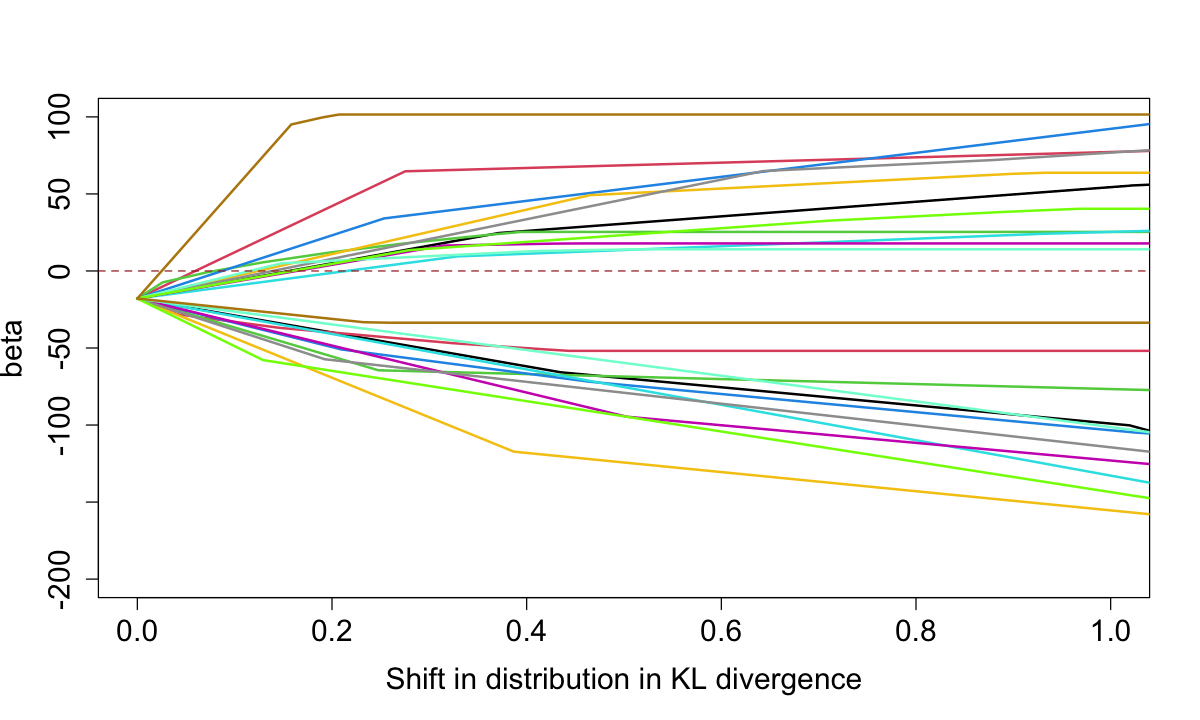}

   \put(1, -1){
      \tikz{\path[draw=white, fill=white] (0, 0) rectangle (10cm, 0.4cm)}
    }
  \put(5, 0){
        \small Shift in marginal distribution of a given covariate with respect to KL divergence }
 \put(-1.5,10){
      \tikz{\path[draw=white, fill=white] (0, 0) rectangle (.2cm, 6cm)}
    }
   \put(-3.5,20){\rotatebox{90}{
      $\beta_{\text{density}}$}
    }
  \end{overpic}
  \caption{ }

\end{subfigure}%
\begin{subfigure}{.62\textwidth}
  \centering
 \begin{overpic}[
  				scale=0.22]{%
     figures/wine_ph_red_train_legend_new.png}

  \end{overpic}
  \caption{ }

\end{subfigure}

\caption{The plot shows the estimated minimum and maximum value of the regression coefficient for wine quality data set achievable under a distribution shift in one covariate (as defined in equation~\eqref{eq:upper-lower}).}
\label{fig:wine-profile}
\end{figure*}

\begin{figure*}[!htb]
\centering
\begin{subfigure}{.5\linewidth}
  \centering
 \begin{overpic}[
  				scale=0.12]{%
     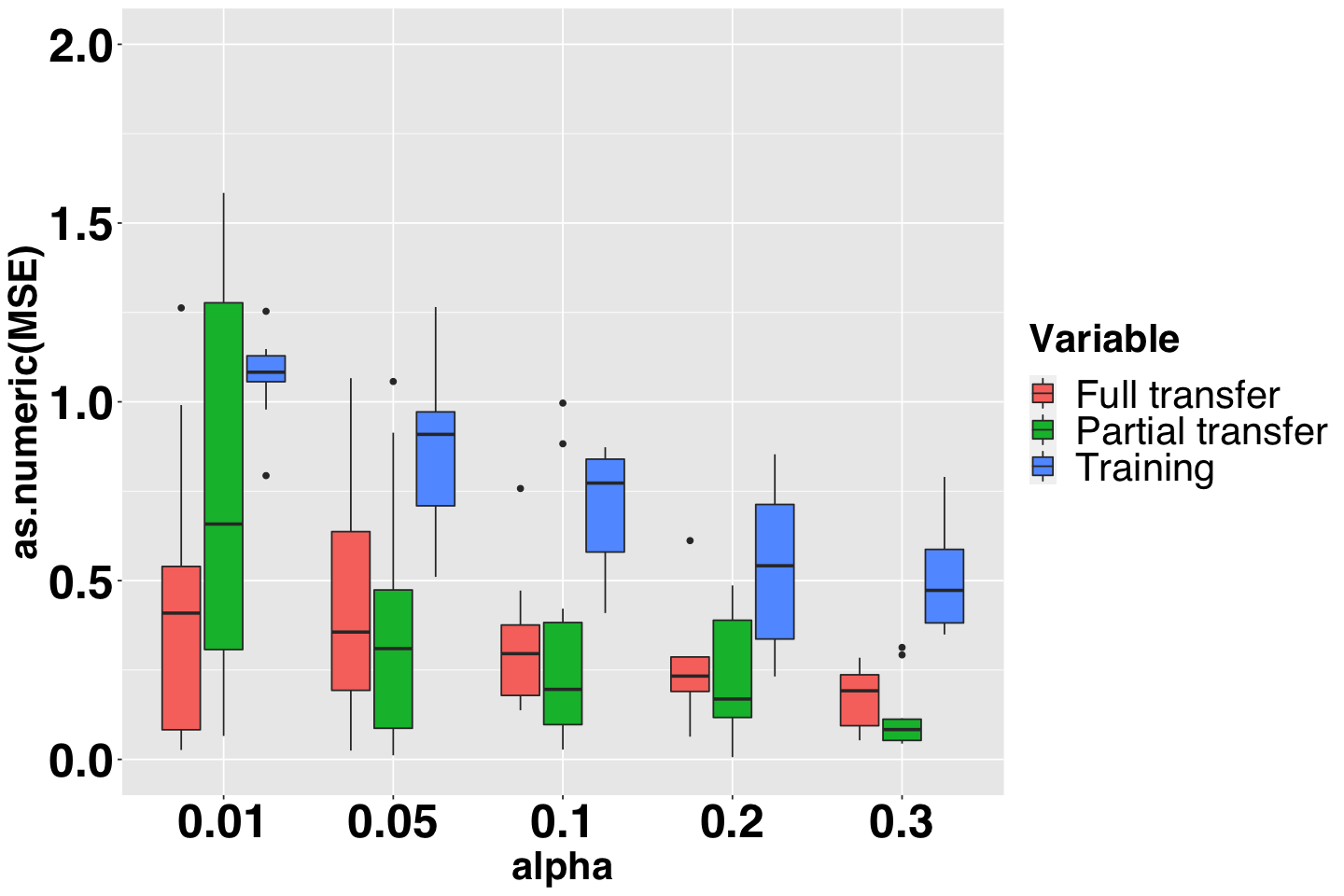}
   \put(-1.3,10){
      \tikz{\path[draw=white, fill=white] (0, 0) rectangle (.2cm, 4cm)}
    }
    \put(-2.5,10){\rotatebox{90}{
      $|\beta_{\text{pH}}(P_{.})-\beta_{\text{pH}}(P_{\text{test}})|$}
    }
    \put(15, -1.5){
      \tikz{\path[draw=white, fill=white] (0, 0) rectangle (7cm, .33cm)}
    }
    \put(35, -2){
      \small alpha}
  \end{overpic}
  \caption{ }
  \label{fig:relative}
\end{subfigure}%
\begin{subfigure}{.5\textwidth}
  \centering
 \begin{overpic}[
  				scale=0.12]{%
     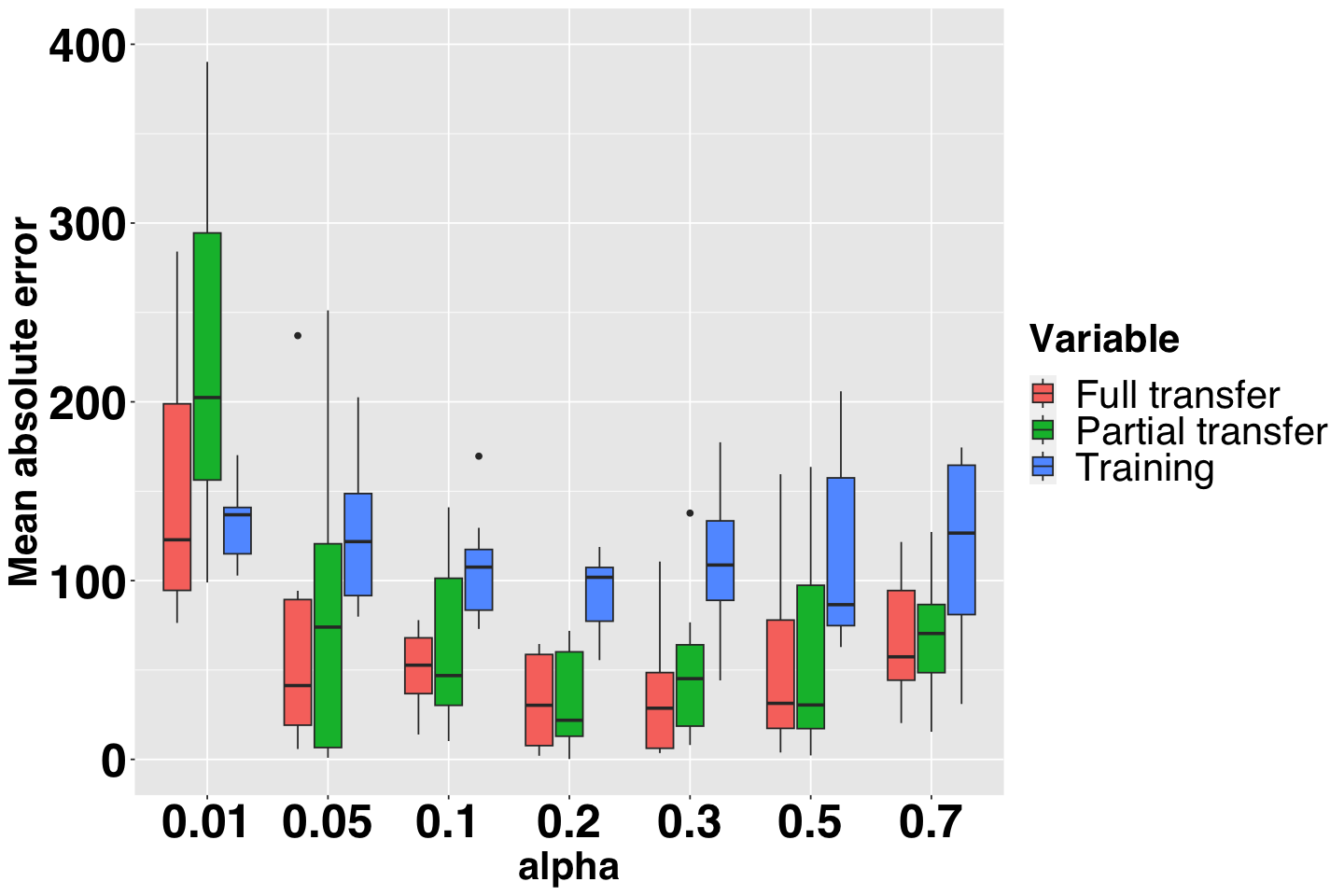}
 \put(-1.3,10){
      \tikz{\path[draw=white, fill=white] (0, 0) rectangle (.2cm, 4cm)}
    }
   \put(-2.5,4){\rotatebox{90}{
      $|\beta_{\text{density}}(P_{.})-\beta_{\text{density}}(P_{\text{test}})|$}
    }
    \put(15, -1.5){
      \tikz{\path[draw=white, fill=white] (0, 0) rectangle (7cm, .33cm)}
    }
    \put(35, -2){
      \small alpha}
  \end{overpic}
  \caption{ }
\end{subfigure}

\caption{ This figure shows the effectiveness of a two-stage transfer procedure for the wine quality data set. The green, red and blue bars represent performance of transfer learning with partial, full new data, and naive method respectively. Error bars show the range of error over 20 repetitions. Transfer learning outperforms the naive method in almost all cases.}
\label{fig:wine_transfer}
\end{figure*}

\section{Discussion}
\label{sec:conclusion}

The generalizability and replicability of statistical findings are crucial in scientific research. However, classical statistical measures only account for uncertainty due to sampling and not other sources of variation, such as distributional shift. Since distributions are expected to vary between settings and locations, it is essential to understand how statistical parameters are affected by such shifts to assess the stability of a finding. 

In this work, we propose stability measures to quantify the impact of distributional shifts on statistical parameters at an overall and variable-specific level, respectively, enabling a more detailed evaluation of instability. We expect that stability measures will be used in tandem with transfer learning procedures. Initially, the stability of a conclusion can be assessed with respect to various shifts, and then, once the sensitivities are determined, the data scientist can collect data from target distributions for the most sensitive covariates and update the model accordingly. As an example, researchers are often interested in a causal effect estimate for a new location. For some covariates such as age and education, partial data is available via surveys such as American National Election Study (ANES) or Cooperative Election Study (CES). Additional partial data can be cheaply obtained via Amazon Mechanical Turk. However, some covariates are hard to collect, since they require running a study in the new location. Our numerical results show that the proposed approach can help prioritize data collection.



The proposed approach has several limitations. First, we consider worst-case shifts, which can be somewhat pessimistic for situations where the distribution is expected to change due to random perturbations in background characteristics. Modelling distribution shift as random is an attractive alternative \citep{rothenhausler2022distributionally,jeong2022calibrated}. Secondly, the Kullback-Leibler divergence only allows for certain types of distribution shift. The shifted distribution might have a different support than the training distribution. In this case, it might be more appropriate to model the changes with the total variation distance or the Wasserstein distance. Thirdly, in this paper we focus on distribution shift in the covariates $X$. There is evidence that distributions shift not only in the covariates $X$, but also in $Y|X$ \citep{jin2023diagnosing,liu2023need}. Developing new tools to better understand distributional shifts in $Y|X$ is an exciting research direction.

\newpage

\vskip 0.2in
\bibliographystyle{apalike}
\bibliography{bib}

\clearpage 
\pagenumbering{arabic} 
\newpage
\section{Appendix}
\appendix


\appendix

\section{Considerations for defining stability values}\label{sec:considerations}

In the following we want to lay out the thought process that has led us to this exact formulation of stability values. To sum it up, our choice of the $s$-value is guided by familiarity, practicality, and flexibility.

One of the advantages of using the KL divergence is that the procedure is a convex optimization problem for linear estimands. There are other considerations from a practical perspective. First, all continuously differentiable $f$-divergences have an equivalent Taylor expansion in a neighborhood around $0$. From this perspective, for small shifts it does not matter which divergence to choose. Compared to other $f$-divergences, one advantage of the KL divergence is that it is widely known in the statistics and ML communities. 

Another popular distance is the total variation distance. The KL divergence is less conservative than the total variation distance which would create very drastic distributional changes. To be more precise, for the parameter $\theta(P') = \mathbb{E}_{P'}[X]$ and all distributions $P$, we would have
\begin{equation*}
  \inf_{P'} \text{TV}(P_0,P') \text{ such that } \mathbb{E}_{P'}[X] = 0.
\end{equation*}
This infimum is zero, that is the parameter is unstable for any $P$. Thus, the stability value defined via total variation distance is too coarse in the sense that infinitely small shifts will already break common parameters of interest.  Thus, its usefulness is very limited in practice.

Comparing our choice 
\begin{equation*}
  s({\mu},P_0)=\sup_{P } \exp\{-  D_{KL}(P||P_0)\} \hspace{0.1in} \text{s.t.} \hspace{0.1in} \mathbb{E}_{P}[Z]=0,
\end{equation*}
with the potential choice
\begin{align*}
      s'({\mu},P_0)=\inf_{P }  D_{KL}(P||P_0) \hspace{0.1in} \text{s.t.} \hspace{0.1in} \mathbb{E}_{P}[Z]=0,
  \end{align*}
the latter can lead to extremely unstable optimization procedures if the argmin is far away from $P_0$. Trying to estimate $s'$ could lead to unreliable and misleading stability evaluations. Our choice of transformation guarantees that the regime where estimation of the KL divergence is unstable, leads to $s$ values that are close to each other. In other words, we "compress" regimes in which the $s$-values are hard to estimate.

\begin{figure}[!htb]
  \centering
  \begin{overpic}[
  				scale=0.24]{%
     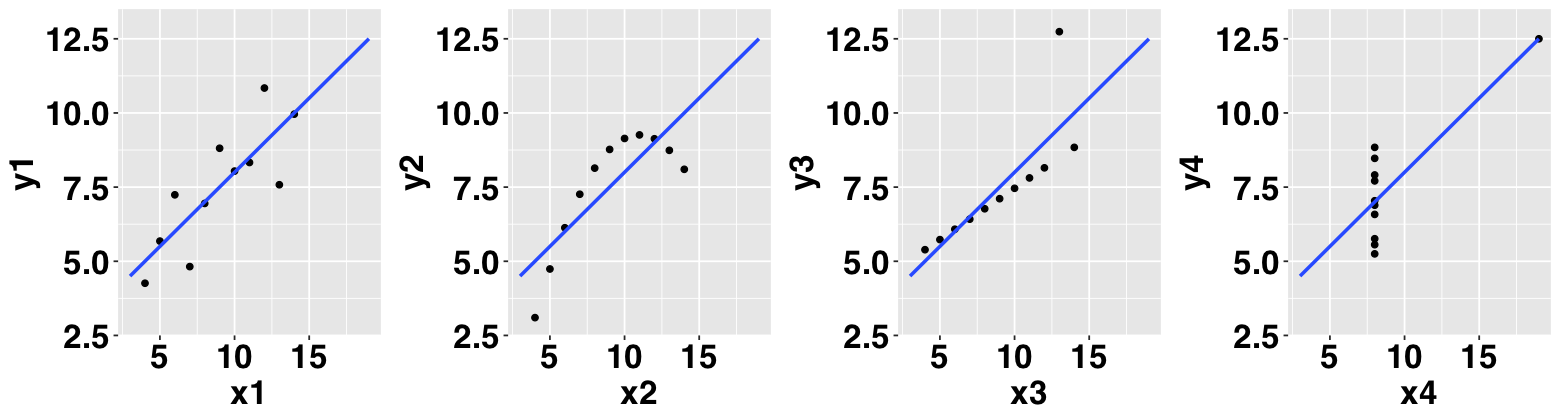}
  \end{overpic}
  \vspace{0.1in}
    \caption{
        Anscombe's quartet data}
\label{fig:Anscombe_quartet}
\end{figure}

\begin{table}
  \caption{This table exhibits the OLS estimate, $p$-values, general and variable-specific $s$-values of the regression coefficient for each data set in Anscombe's quartet. The $p$-values are the same for all data sets. The $s$-values vary drastically across the data sets, indicating that the parameter is more stable under distribution shift for data set 3 and data set 4 than for data set 1 and data set 2.}
  \label{tab:anscombe}
  \centering
  \begin{tabular}{lllll}
    \toprule
    \multicolumn{2}{c}{Part}                   \\
    \cmidrule(r){1-2}
 $Y=\beta_0+X\beta_1$&OLS estimate ($\beta_1$)& $p$-values &$s$&$s_X$\\
    \midrule
Set 1  & $0.5$  & $0.00217 $ &$0.465$&   $0$\\
Set 2& $0.5$  & $0.00217 $ &  $0.63$  &$0.63 $\\
Set 3& $0.5$  & $0.00217  $ &  $0$  &$0 $\\
Set 4& $0.5$  & $0.00217 $ &  $0$  &$0 $\\
    \bottomrule
  \end{tabular}
\end{table}

%
%
\begin{figure}[!htb]
\centering
 \begin{overpic}[
  				scale=0.23]{%
     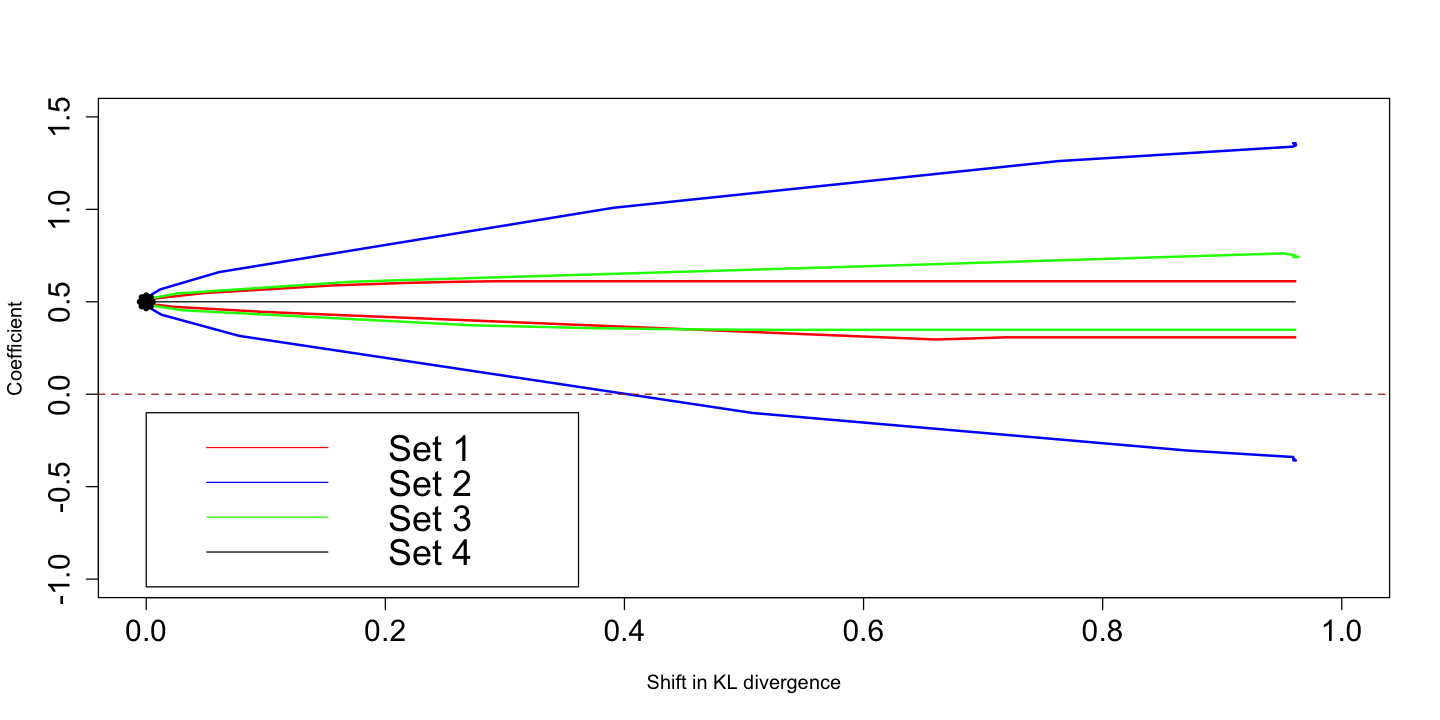}
    \put(-1,0){
      \tikz{\path[draw=white, fill=white] (0, 0) rectangle (.25cm, 5cm)}
    }
    \put(-1,20){\rotatebox{90}{
        \large $\beta$}
    }
    \put(1, -1){
      \tikz{\path[draw=white, fill=white] (0, 0) rectangle (10cm, 0.6cm)}
    }

     \put(7, 0){
        \small  Shift in marginal distribution of the covariate with respect to KL divergence }
  \end{overpic}
  \caption{The plot shows the minimum and maximum value of the regression coefficient ($\beta$) achievable by a shift in the marginal distribution of the covariate $X$. More specifically, the upper and lower bounds are estimated versions of the bounds in equation~\eqref{eq:upper-lower}.}

  \label{fig:Anscombe_profile}
\end{figure}%

\section{Example: Anscombe's quartet}
\label{exm:anscombe}

We demonstrate the usage of our method on Anscombe's quartet \citep{Anscombe73}, which comprises of four data sets that yield nearly identical OLS estimates and $p$-values (see Figure \ref{fig:Anscombe_quartet}). While $p$-values cannot unveil the difference in distributional stability of the regression coefficients among the four data sets, the proposed measure captures the stability of the regression coefficients under distribution shift.

 In Table \ref{tab:anscombe}, we display the OLS estimate, $p$-values and our $s$-values (both general and variable specific).  While the $p$-value is the same for all data sets, the $s$-values differ. The regression coefficient in set 1 has a $s$-value of $0.465$, which indicates that the regression coefficient may be null under distributional shifts. However, when considering directional shifts with $E=X$, one obtains the directional $s$-value $s_X = 0$. For Set 2, both types of $s$-values take the same non-zero value while sets 3 and 4 have $ s = s_E = 0$. Thus, the proposed stability measure coincides with the intuition that the regression coefficient is relatively stable in data sets 3 and 4 under distribution shift.

 In Figure \ref{fig:Anscombe_profile}, we plot estimated versions of the upper and lower bounds as defined in \eqref{eq:upper-lower} across $c$ for different choices of the variable $E$.

 In this example, distributional instability of regression coefficients mostly occurs due to model misspecification. In practice, we can test for model misspecification using classical approaches like the Ramsey Regression Equation Specification Error Test (RESET) test \citep{Ramsey69} or via diagnostic tests. However, such tests do not quantify instability in terms of distributional shifts, and distributional instability can occur even if models are well-specified. 

%

\section{Consistency and asymptotic normality}\label{sec:cons-norm}

\subsection{$S$-value of the mean}
\label{sec:s-val-mean}

Here we present the consistency results of the estimator of $s$-value of mean $\hat s({\mu},P_n)$ defined in equation \eqref{eqn:variational-formula-finite}.

\begin{lemma}[Consistency of $s$-value]
\label{lem:asy-norm-mean}
Let $Z \sim P_{0}$ be a real-valued non-degenerate random variable with mean $\mu(P_0)=\mathbb{E}_{P_0}[Z]$ and a finite moment generating function on $\mathbb{R}$. Then $\hat s(\mu,P_n) \overset{P}{\to} s(\mu, P_0)$ as $n \to \infty$.
\end{lemma}

We provide the proof of the above lemma in Appendix \ref{proof:asy-norm-mean}.
Now let us turn to asymptotic normality.
\begin{lemma}[Asymptotic normality of $s$-values]
  \label{lem:asy-normality-mean}
  Let $Z \sim P_{0}$ be a real-valued random variable with mean $\mu(P_0)=\mathbb{E}_{P_0}[Z]$ and a finite moment generating function on $\mathbb{R}$. Assume that $\mathbb{E}[ Z^2 e^{\lambda^*  Z}] > 0$.  Let $\hat \lambda = \arg \min \frac{1}{n} \sum_{i=1}^n e^{\lambda Z}$ and $\lambda^* = \arg \min \mathbb{E}_{P_0}[ e^{\lambda Z}]$. Then,
  \begin{equation*}
    \frac{1}{n} \sum_{i=1}^n e^{\hat \lambda Z}    - \mathbb{E}_{P_0}[e^{\lambda^* Z}] \stackrel{d}{=} \mathcal{N} \left(0,\frac{\text{Var}_{P_0}(e^{\lambda^* Z})}{n} \right) + o_P(1/n).
   \end{equation*}
  \end{lemma}
The proof for this result can be found in the Appendix, Section~\ref{sec:proof-ci}. Based on this result, one can construct confidence intervals for the $s$-value. To be more precise, for fixed $\alpha \in (0,1)$, one can define confidence intervals via $\hat s \pm z_{1-\alpha/2} \frac{\hat \sigma}{\sqrt{n}}$, where $\hat \sigma^2$ is the empirical variance of $(e^{\hat \lambda Z_i })_{i=1,\ldots,n}$ and $z_{1-\alpha/2}$ is the $1-\alpha/2$ quantile of a standard Gaussian random variable.

\textbf{Directional $s$-values}

Let us now discuss consistency of $\hat s_E(\mu,P_n)$. We make the following regularity assumption.
\begin{assumption}
\label{ass:conditional-uniform-convergence-mean}
Let $\hat f_n(\cdot)$ be an estimate of $\mathbb{E}_{P_{0}}[Z|E=\cdot]$ defined over $\mc{E}$. We assume that  $ \sup_{e \in \mc{E}} | \E_{P_{0}}[Z|E=e] - \hat f_n(e)|_{\infty} \rightarrow 0$.
\end{assumption}
\begin{lemma}[Consistency of directional $s$-value]
\label{lem:consistency-mean-cond}
Under the setting of Theorem~\ref{thm:Donsker-Varadhan-condn} and Assumption \ref{ass:conditional-uniform-convergence-mean}, we have $\hat s_E(\theta,P_n)  \overset{P}{\to}  s_E(\theta,P_0) $ as $n \to \infty$.
\end{lemma}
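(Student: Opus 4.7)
The plan is to follow the standard M-estimator consistency argument used for Lemma~\ref{lem:asy-norm-mean}, but with an extra approximation layer to handle the plug-in $\hat f_n$. Throughout, write $f(e) := \mathbb{E}_{P_0}[Z \mid E=e]$, $M(\lambda) := \mathbb{E}_{P_0}[e^{\lambda f(E)}]$, $\tilde M_n(\lambda) := \frac{1}{n}\sum_{i=1}^n e^{\lambda f(E_i)}$, and $M_n(\lambda) := \frac{1}{n}\sum_{i=1}^n e^{\lambda \hat f_n(E_i)}$. By Theorem~\ref{thm:Donsker-Varadhan-condn}, $s_E(\mu,P_0) = \inf_\lambda M(\lambda)$, while by definition $\hat s_E(\mu,P_n) = \inf_\lambda M_n(\lambda)$; all three functions are convex in $\lambda$. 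By Jensen's inequality, $M(\lambda) \le \mathbb{E}_{P_0}[e^{\lambda Z}]$, which is finite on the MGF domain of $Z$. Assuming the non-degenerate case $\mathrm{Var}(f(E)) > 0$, $M$ is strictly convex with a unique minimizer $\lambda^* \in \mathbb{R}$; the degenerate case forces $f(E) \equiv \mu$ a.s.\ and reduces to a direct computation.

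First I would localize to a fixed compact interval $[-L,L]$ containing $\lambda^*$ in its interior, and show that with probability tending to one the infimum defining $\hat s_E(\mu,P_n)$ is also attained in $[-L,L]$. Since $M(\pm L) > M(\lambda^*)$ by strict convexity, once the uniform convergence in the next step is established we will have $M_n(\pm L) > M_n(0) = 1 \ge M_n(\hat\lambda_n)$ with probability tending to one, and convexity of $M_n$ then pins its global minimum inside $[-L,L]$.

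Next I would establish the uniform bound $\sup_{|\lambda|\le L}|M_n(\lambda) - M(\lambda)| \overset{P}{\to} 0$ by splitting into plug-in and empirical pieces. For the plug-in piece, the mean value theorem and Assumption~\ref{ass:conditional-uniform-convergence-mean} give, with $\epsilon_n := \sup_{e \in \mc{E}}|\hat f_n(e) - f(e)| \to 0$,
\begin{align*}
\sup_{|\lambda|\le L} |M_n(\lambda) - \tilde M_n(\lambda)|
\;\le\; L\,\epsilon_n\, e^{L\epsilon_n}\cdot \frac{1}{n}\sum_{i=1}^n e^{L|f(E_i)|}.
\end{align*}
The empirical average on the right converges in probability to $\mathbb{E}_{P_0}[e^{L|f(E)|}]$, which is finite since Jensen's inequality bounds it by $\mathbb{E}_{P_0}[e^{L|Z|}] \le \mathbb{E}_{P_0}[e^{LZ}] + \mathbb{E}_{P_0}[e^{-LZ}] < \infty$, so the plug-in error vanishes. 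For the empirical piece, the pointwise LLN gives $\tilde M_n(\lambda) \to M(\lambda)$ almost surely for every $\lambda \in [-L,L]$; since both functions are convex in $\lambda$, the convexity lemma (pointwise convergence of convex functions to a finite convex limit implies uniform convergence on compacts) upgrades this to uniform convergence on $[-L,L]$. Combining the two pieces yields the desired uniform convergence, hence $\inf_{|\lambda|\le L} M_n(\lambda) \overset{P}{\to} \inf_{|\lambda|\le L} M(\lambda) = s_E(\mu,P_0)$. Together with the localization step, this gives $\hat s_E(\mu,P_n) \overset{P}{\to} s_E(\mu,P_0)$.

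The main obstacle is the plug-in control: converting the uniform-in-$e$ estimate on $\hat f_n - f$ into uniform-in-$\lambda$ control of the exponentiated objective requires an integrability bound of the form $\mathbb{E}_{P_0}[e^{L|f(E)|}] < \infty$, which is precisely where the MGF assumption on $Z$ (not just a moment assumption) is essential. Once this bound is in place, the rest is a routine M-estimator/convexity-lemma package.
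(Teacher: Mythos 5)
Your proof is correct and follows essentially the same route as the paper's: both decompose the error into a plug-in piece (replacing $\hat f_n$ by $\mathbb{E}_{P_0}[Z\mid E]$ under the empirical measure) and an empirical piece (LLN upgraded to uniform convergence on compacts via the convexity lemma), and both then pass from uniform convergence of the convex objectives to convergence of their infima. The only substantive difference is in the plug-in piece: the paper truncates to a high-probability set $R$ on which $\mathbb{E}_{P_0}[Z\mid E]$ is bounded and uses uniform convergence of $\hat f_n$ there, whereas you use a mean-value-theorem bound with the exponential envelope $e^{L|f(E_i)|}$, whose empirical mean is controlled by the finite MGF via Jensen --- a slightly more quantitative and arguably cleaner treatment of the same step. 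One small caveat: your claim that $\mathrm{Var}(f(E))>0$ forces the minimizer $\lambda^*$ to be attained in $\mathbb{R}$ is not quite right (e.g.\ if $\mathbb{E}_{P_0}[Z\mid E]>0$ a.s.\ the infimum is approached only as $\lambda\to-\infty$), so the localization to a fixed compact interval needs either an attainment hypothesis or a separate limiting argument; the paper's proof has the same implicit requirement through the unique-minimizer hypothesis of its Corollary~\ref{cor:min-conv}, so this is a shared gap rather than one you introduced.
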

We present the proof of Lemma~\ref{lem:consistency-mean-cond} in Appendix~\ref{proof:consistency-mean-cond}.  To obtain asymptotically valid confidence intervals for $s$-values, one has to deal with the fact that the non-parametric estimates $\hat f_n(\cdot)$ may converge very slowly to the ground truth. To deal with this problem, we employ a debiasing technique.
\begin{lemma}[Asymptotic normality of directional $s$-values]\label{lem:asy-normality-cmean}
  Let $\hat f_n(\cdot) $ be an estimate of $f(\cdot) = \mathbb{E}_{P_0}[Z|E=\cdot]$. We assume that $\hat f_n$ is fit on a held-out portion of the data set, that is $\hat f_n(\cdot)$ is independent of $D_i, i=1,\ldots,n$. We assume that $ \sup_{e \in \mathcal{E}} | \hat f_n(e) - f(e)  | = o_P(n^{-1/4})$. Furthermore, we assume that the moment generating function of $Z$ is finite on $\mathbb{R}$ and that the matrix $\mathbb{E}_{P_0}[f(E)^2 e^{\lambda^*  f(E)}] >0 $.  Let $\hat \lambda = \arg \min \frac{1}{n} \sum_{i=1}^n e^{\lambda \hat f_n(E_i)}$ and $\lambda^* = \arg \min \mathbb{E}_{P_0}[ e^{\lambda f(E)}]$. Then,
  \begin{equation*}
   \frac{1}{n} \sum_{i=1}^n (1 + \hat \lambda Z_i - \hat \lambda \hat f_n(E_i)) e^{\hat \lambda \hat f_n(E_i)}    - \mathbb{E}_{P_0}[e^{(\lambda^*) f(E)}] \stackrel{d}{=} \mathcal{N} \left(0,\frac{\sigma_s^2}{n} \right) + o_P(1/n),
  \end{equation*}
  where
  \begin{equation*}
     \sigma_s^2 =  \text{Var}_{P_0}(e^{\lambda^* f(E)})  +  \text{Var}_{P_0}( e^{\lambda^* f(E)} \lambda^* ( Z - f(E))).
  \end{equation*}
  \end{lemma}
The proof of this result can be found in the appendix, Section~\ref{sec:proof-ci}. One can construct asymptotically valid confidence intervals based on this result, analogously as discussed after Lemma~\ref{lem:asy-normality-mean}. The de-biasing technique uses sample splitting, which reduces efficiency. Full efficiency can be obtained by using cross-fitting techniques, see for example \citet{ChernozhukovChDeDuHaNeRo16}. The debiasing technique leads to asymptotically unbiased and normal estimates, but it does come at the cost of stability. This can be easily seen from the formula: the standard approach has asymptotic variance $ \text{Var}_{P_0}(e^{\lambda^* f(E)})$, which is larger than $\sigma_s^2$ unless $Z \equiv f(E)$. For this reason and for simplicity, in the main paper we stick to the estimator in equation~\eqref{eqn:variational-formula-cond-finite}, instead of the more unstable debiased estimator. 

\section{S-values of parameters defined via risk minimization}
\label{sec:multidim_M_est}

Here, we discuss how to compute $s$-values for parameters defined via risk minimization and generalize to multi-parameter settings. Let us describe two examples where this is of interest. First, for a parameter vector $\eta$, each component might correspond to the causal effect of a subgroup. One may then ask whether there is a small distribution shift that renders all causal effects zero. The theory outlined in this section is also relevant for settings where one is interested in a single parameter in the presence of nuisance parameters. For example, in causal inference one component of the parameter vector might corresponds to the causal effect of interest, while the other components of the vector might correspond to the effect of observed confounders, which are not of interest by themselves.

We consider the following setting. Let $\Theta \subseteq \mathbb{R}^p$ be the parameter (model) space, $P_0$ be the data generating distribution (training distribution) on the measure space $(\mathcal{Z},\mathcal{A})$, $Z$ be a random element of $\mathcal{Z}$, and $L: \Theta \times \mc{Z} \rightarrow \mathbb{R}$ be a loss function, which is strictly convex and differentiable in its first argument.
Define the parameter $\theta^M(P)$ for $P \in \mc{P}$ via
\begin{equation}
\label{eq:M-est}
  \theta^M(P) = \arg \min_{\theta \in \Theta} \E_{P}[L(\theta, Z)].
\end{equation}  
Let $\ell(\theta,Z) = \partial_{\theta} L(\theta, Z)$. So far, for simplicity we have only considered $s$-values of one-dimensional parameters. In practice, we need a slightly more general notion of $s$-values that can handle $p$-dimensional parameters. For $\eta \in \mathbb{R}^{p}$, define the extended $s$-value via
\begin{equation}\label{eq:4}
   s(\theta^M-\eta,P_0)=\sup_{P \in \mc{P}}\exp\{ -D_{KL}(P||P_0)\} \hspace{0.1in} \text{s.t.} \hspace{0.1in} \theta^M(P) - \eta = 0.
\end{equation}
Choosing $\eta$ is similar to choosing a null hypothesis for significance testing in statistical decision problems. For example, in linear regression it is common to test the global null, where the regression coefficient is assumed to be zero for all components. Analogously, in this setting one might ask whether just a small distributional shift can shift all components of the parameter to zero.

Similar to the one-dimensional mean case (Section \ref{sec:r-value-mean}), the $s$-value in \eqref{eq:4} can be obtained by solving a $p$-dimensional convex optimization problem that we state in the following corollary. Its proof can be found in Appendix~\ref{proof:r-values-m}.

\begin{corollary}\label{cor:r-values-m}
Let $\ell(\eta,Z)$ have a finite moment generating function on $\mathbb{R}^p$ under $P_{0}$ for all $\eta \in \Theta$. Then, the $s$-value of the parameter $\theta^M$ as defined in \eqref{eq:M-est} is given by
\begin{equation}
  s(\theta^M-\eta,P_0) = \inf_{\lambda \in \mathbb{R}^{p}} E_{P_0}[e^{\lambda^{\intercal} \ell(\eta,Z)}].
\end{equation}
\end{corollary}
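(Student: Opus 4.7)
The plan is to reduce the infinite-dimensional optimization over $P$ to the already-known variational formula by replacing the $M$-estimator constraint with a moment constraint on the score. Because $L(\cdot,Z)$ is strictly convex and differentiable in $\theta$, the first-order condition is both necessary and sufficient for the minimizer, so under any $P \in \mathcal{P}$ we have
\begin{equation*}
\theta^M(P) = \eta \quad \Longleftrightarrow \quad \mathbb{E}_P[\ell(\eta, Z)] = 0.
\end{equation*}
Thus the constraint set in \eqref{eq:4} coincides with $\{ P \in \mc{P} : \mathbb{E}_P[\ell(\eta,Z)] = 0 \}$, and the extended $s$-value becomes the $s$-value of the mean of the $p$-dimensional random vector $W := \ell(\eta, Z)$ (with the shift parameter already absorbed into the fixed point $\eta$).

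Next, I would invoke the multivariate analogue of Theorem~\ref{thm:Donsker-Varadhan}. The proof of the Donsker--Varadhan formula goes through verbatim for $\mathbb{R}^p$-valued random variables with finite joint moment generating function: one introduces Lagrange multipliers $\lambda \in \mathbb{R}^p$ for the $p$ moment constraints, performs the pointwise maximization in $dP/dP_0$ inside the KL functional, and recognizes the resulting optimum as a tilted distribution $dQ \propto e^{\lambda^\intercal W} dP_0$. Substituting back yields, after taking $\exp\{-D_{KL}\}$,
\begin{equation*}
\sup_{\substack{P \in \mc{P} \\ \mathbb{E}_P[W]=0}} \exp\{-D_{KL}(P\|P_0)\} \;=\; \inf_{\lambda \in \mathbb{R}^p} \mathbb{E}_{P_0}\bigl[e^{\lambda^\intercal W}\bigr],
\end{equation*}
which, combined with the reduction step, gives the stated formula. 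The hypothesis that $\ell(\eta, Z)$ has a finite MGF under $P_0$ guarantees that the inner expectation is well-defined in a neighborhood of $\lambda = 0$, making the inf meaningful and finite.

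The main obstacle, beyond these conceptual steps, is the justification of the FOC reduction: one needs strict convexity (to ensure the minimizer is unique, so that $\theta^M(P)$ is well-defined as a point, not a set) together with differentiability in $\theta$ to assert that $\theta^M(P) = \eta$ is equivalent to the vanishing of $\mathbb{E}_P[\ell(\eta, \cdot)]$, not merely implied by it. One also needs to be a bit careful that this equivalence holds for \emph{every} $P$ in the feasible set of \eqref{eq:4}, not only for $P_0$; but since $L$ is convex in $\theta$ pathwise (for each $Z$), taking the expectation preserves strict convexity, so this is automatic. A secondary point is stating, or briefly proving, the $\mathbb{R}^p$-valued Donsker--Varadhan identity if the paper prefers not to cite it directly; this is just the Legendre dual pairing between relative entropy and log-MGF and can be deduced in a short paragraph from the one-dimensional version applied to $\lambda^\intercal W$ for arbitrary $\lambda$, followed by optimization over $\lambda$.
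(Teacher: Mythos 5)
Your proposal is correct and follows essentially the same route as the paper: reduce the constraint $\theta^M(P)=\eta$ to the first-order condition $\mathbb{E}_P[\ell(\eta,Z)]=0$ using convexity and smoothness of $L$, and then apply the multivariate version of the Donsker--Varadhan variational formula to the random vector $\ell(\eta,Z)$. Your additional care about why the first-order condition is an equivalence (not just an implication) and how to justify the $\mathbb{R}^p$-valued identity only makes explicit what the paper leaves implicit.
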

Consistency and asymptotic normality can be obtained analogously as in Lemma~\ref{lem:asy-norm-mean} and Lemma~\ref{lem:asy-normality-mean}. A general theorem is formulated in the Appendix, Section~\ref{sec:proof-ci}.
We next present some examples of parameters defined via risk minimization.

\begin{example}[Regression]
Let $\mc{X} \subseteq \mathbb{R}^p$ be a $p$-dimensional feature space and $\mc{Y}$ be the space of response. Let $Y \in \mc{Y}$ satisfy $Y=X\theta+\epsilon$, where $\theta \in \Theta \subset \R^p$ and $\epsilon$ is uncorrelated with $X$. Then the OLS parameter $ \theta^M(P) $ is given by
\begin{align*}
\theta^M(P)= \argmin_{\theta} \E_{P}[(Y-X\theta)^2].
\end{align*}
If $X^{\intercal}(Y-X\eta)$ has finite moment generating function on $\mathbb{R}^p$ for all $\eta \in \Theta$ then using Corollary~\ref{cor:r-values-m}, we have
\begin{equation*}
 s(\theta^M-\eta,P_0) = \inf_{\lambda \in \mathbb{R}^{p}} \E_{P_0}[e^{\lambda^{\intercal} X^{\intercal} (Y - X \eta)}].
\end{equation*}

\end{example}

\begin{example}[Generalized linear models]
\label{exm:glm}
Let $\mc{X} \subseteq \mathbb{R}^p$ be a $p$-dimensional feature space and $\mc{Y}$ be the space of response. Let $Y \in \mc{Y}$ satisfy $\E[Y \mid X]=g^{-1}(X\theta)$ where $\theta \in \Theta \subset \R^p$ and $g$ is the link function. With slight abuse of notation, let $L(Y, X \theta) $ be the negative log-likelihood function. The maximum likelihood parameter $\theta^M$ is given by
\begin{align*}
\theta^M(P)= \argmin_{\theta} \E_{P}[L(Y,X\theta)].
\end{align*}

If $X^{\intercal}\partial_2 L(Y,X\eta)$ has finite moment generating function on $\mathbb{R}^p$ for all $\eta \in \Theta$ then using Corollary~\ref{cor:r-values-m}, we have
\begin{equation*}
 s(\theta^M-\eta,P_0) = \inf_{\lambda \in \mathbb{R}^{p}} \E_{P_0}[e^{\lambda^{\intercal} X^{\intercal}\partial_{2}L(Y,X\eta)}].
\end{equation*}
\end{example}

We next characterize directional $s$-values. We define the extended directional $s$-value as
\begin{equation}\label{eq:5}
   s_E(\theta^M-\eta,P_0)=\sup_{P \in \mc{P}, P[\bullet|E] = P_{0}[\bullet|E]}\exp\{- D_{KL}(P||P_0)\} \hspace{0.1in} \text{s.t.} \hspace{0.1in} \theta^M(P) - \eta = 0.
 \end{equation}
 Similarly as above, directional $s$-values can be obtained by solving a convex optimization problem that we state in the following corollary. We present the proof in Appendix~\ref{proof:r-values-m-cond}.
\begin{corollary}[Directional shifts]\label{cor:r-values-m-cond}

Let $\ell(\eta,Z)$ have a finite moment generating function on $\mathbb{R}^p$ under $P_{0}$. Then,
\begin{equation}
  s_E(\theta^M-\eta,P_0) = \inf_{\lambda } \E_{P_0}[ e^{ \lambda^{\intercal} E_{P_0}[\ell(Z,\eta)|E]}].
\end{equation}
\end{corollary}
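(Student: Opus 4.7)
The plan is to combine the two techniques already developed in the paper: the reduction from an $M$-estimator constraint to a first-order mean constraint used for Corollary~\ref{cor:r-values-m}, and the marginalization trick used for directional shifts in Theorem~\ref{thm:Donsker-Varadhan-condn}. Once these are stacked, the result falls out of a $p$-dimensional version of the Donsker--Varadhan variational formula.

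First, I would use strict convexity and differentiability of $L(\cdot, z)$ to replace the constraint $\theta^M(P) - \eta = 0$ with the first-order optimality condition $\E_P[\ell(\eta, Z)] = 0$. Since $L(\cdot, z)$ is strictly convex, $\theta^M(P)$ is the unique critical point of $\theta \mapsto \E_P[L(\theta, Z)]$ whenever it exists, so the equivalence is clean and the supremum in \eqref{eq:5} becomes
\begin{equation*}
  \sup_{P : P(\cdot|E) = P_0(\cdot|E)} \exp\{-D_{KL}(P\|P_0)\} \quad \text{s.t.} \quad \E_P[\ell(\eta,Z)] = 0.
\end{equation*}

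Next, I would exploit the directional constraint by factoring $P(dz, de) = P_0(dz|E=e)\, P_E(de)$, where $P_E$ denotes the marginal of $P$ on $E$. The chain rule for KL divergence (used in the proof of Theorem~\ref{thm:Donsker-Varadhan-condn}) gives $D_{KL}(P\|P_0) = D_{KL}(P_E \| P_{0,E})$, and the tower property yields
\begin{equation*}
  \E_P[\ell(\eta, Z)] = \E_{P_E}\bigl[\E_{P_0}[\ell(\eta, Z) \mid E]\bigr].
\end{equation*}
Thus the problem reduces to finding the closest (in KL) marginal $P_E$ to $P_{0,E}$ such that the mean of the $p$-dimensional random vector $W := \E_{P_0}[\ell(\eta, Z) \mid E]$ under $P_E$ equals zero.

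Finally, I would apply the multidimensional Donsker--Varadhan variational formula to $W$, exactly as invoked in Corollary~\ref{cor:r-values-m}. The MGF hypothesis on $\ell(\eta, Z)$ transfers to $W$ by conditional Jensen, since $\E_{P_0}[e^{\lambda^\intercal W}] \le \E_{P_0}[e^{\lambda^\intercal \ell(\eta, Z)}] < \infty$, so the formula applies and produces
\begin{equation*}
  s_E(\theta^M - \eta, P_0) = \inf_{\lambda \in \R^p} \E_{P_0}\bigl[e^{\lambda^{\intercal} \E_{P_0}[\ell(\eta, Z)\mid E]}\bigr].
\end{equation*}
The main obstacle I anticipate is ensuring strong duality in the $p$-dimensional variational step — that is, verifying no gap between the constrained KL-minimization over $P_E$ and the unconstrained infimum over the Lagrange multiplier $\lambda$. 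This is standard for exponential-family tilting under the MGF assumption, but deserves a careful citation (or a brief reproduction of the argument from Corollary~\ref{cor:r-values-m}). The reduction steps themselves are essentially bookkeeping once the two prior techniques are in hand.
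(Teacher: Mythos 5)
Your proposal is correct and follows essentially the same route as the paper: reduce the constraint $\theta^M(P)=\eta$ to the first-order condition $\E_P[\ell(\eta,Z)]=0$, then treat the problem as a directional $s$-value for the multivariate mean of $\ell(\eta,Z)$ and invoke the tower-property/KL-chain-rule argument of Theorem~\ref{thm:Donsker-Varadhan-condn} together with the multidimensional Donsker--Varadhan formula. Your additional remarks on transferring the moment generating function assumption via conditional Jensen and on checking the absence of a duality gap are sensible refinements of details the paper leaves implicit.
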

Consistency and asymptotic normality can be obtained analogously as in Lemma~\ref{lem:consistency-mean-cond} and Lemma~\ref{lem:asy-normality-cmean}. A general theorem is formulated in the Appendix, Section~\ref{sec:proof-ci}.

Next, we present an example from regression setting. 
\begin{example}[Regression]
\label{exm:reg-dir}
Let $\mc{X} \subseteq \mathbb{R}^p$ be a $p$-dimensional feature space and $\mc{Y}$ be the space of the response. Let $Y \in \mc{Y}$ satisfy $Y=X\theta+\epsilon$, where $\theta \in \Theta \subset \R^p$ and $\epsilon$ is independent of $X$. As above, the OLS parameter $ \theta^M(P) $ is defined as
\begin{align*}
\theta^M(P)= \argmin_{\theta} \E_{P}[(Y-X\theta)^2].
\end{align*}
If $X^{\intercal}(Y-X\eta)$ has finite moment generating function on $\mathbb{R}^p$ for all $\eta \in \Theta$, then Corollary~\ref{cor:r-values-m-cond} implies 
\begin{equation*}
 s_X(\theta^M-\eta,P_0) = \inf_{\lambda \in \mathbb{R}^{p}} \E_{P_0}[e^{\lambda^{\intercal} X^{\intercal} (\mathbb{E}_{P_0}[Y|X] - X \eta)}].
\end{equation*}
Now let us investigate the case with high directional distributional stability with respect to $E=X$. In the following, let us assume that $s_X(\theta^M-\eta,P_{0}) = 0$ for all $\eta \neq \theta(P_{0})$ and that $X$ has positive density with respect to the Lebesgue measure. By definition of the $s$-value  we have $\theta^M(P) = \theta^M(P_{0})$ for every measure $P$ that is absolutely continuous with respect to $P_{0}$ and satisfies $P(\cdot|X=x) = P_{0}(\cdot|X=x)$ for all $x \in \mc{X}$. By definition of OLS, we must have that almost surely
\begin{equation*}
  \mathbb{E}_{P_0}[ X^{\intercal} (Y - X \theta^M)|X] = 0.
\end{equation*}
Thus, almost surely,
\begin{equation*}
 X^{\intercal} (\E_{P_{0}}[Y|X] - X \theta^M)=\mathbb{E}_{P_0}[ X^{\intercal} (Y - X \theta^M)|X] = 0.
\end{equation*}  
If $X$ has a density with respect to the Lebesgue measure, then $\E_{P_{0}}[Y|X] = X \theta^M$ almost surely. Thus, directional distributional instability in linear models with respect to $E=X$ is related to whether the linear model is a good approximation of the regression surface, i.e.\ $\E_{P_{0}}[Y|X] \approx X \theta^M$. More specifically, if the linear model is a good approximation of the regression surface, directional stability is high. This is an example, where distributional instability can be induced by model misspecification. As discussed earlier in Section~\ref{exm:anscombe}, distributional instability can also be induced by other sources such as the presence of exogenous covariates that are correlated both with covariates and the outcome.

\end{example}

We can similarly obtain results for generalized linear models (see Example \ref{exm:glm}).

\subsection{S-values for a single component}

In many cases, practitioners may be interested in obtaining the $s$-value for a single component of $\theta^M \in \R^p$ instead of the entire vector $\theta^M$. For example, in causal inference, one component of $\theta^M$ can correspond to average treatment effect while other parameters may not be of scientific interest. In such settings, one might want to evaluate the stability of the parameter of interest (the average treatment effect) and not the stability of the nuisance components. Let $\theta^M_k$ be the $k$-th component of parameter vector $\theta^M \in \R^p$ for $k \in \{1, \ldots, p\}$. 

Intuitively, to obtain the $s$-value of a single parameter, we seek for the smallest possible shift in distribution that tilts the parameter to a pre-determined value, hence, we take supremum over other remaining nuissance parameters. To be more precise, using the definition of $s$-values and Corollary \ref{cor:r-values-m}, we have
\begin{align}
  \label{eq:r-val-one-M}
  \begin{split}
  s(\theta^M_k-\eta_k,P_0) &= \sup_{P \in \mc{P}} \exp\{- D_{KL}(P||P_0)\} \text{ s.t. } \eta_k(P) -\eta_k = 0  \\
  &= \sup_{\eta_1, \ldots,\eta_{k-1},\eta_{k+1}, \ldots, \eta_p} \sup_{P \in \mc{P}} \exp\{- D_{KL}(P||P_0)\} \text{ s.t. } \eta(P) -\eta = 0 \\
  &= \sup_{\eta_1, \ldots,\eta_{k-1},\eta_{k+1}, \ldots, \eta_p} \inf_{\lambda \in \R^p} \E_{P_0}[ e^{ \lambda^{\intercal} \ell(\eta,Z)}].
  \end{split}
  \end{align}


We next obtain a finite sample estimate of $s$-values for individual components of $\theta^M$ and show that it is consistent to the population version. Let $P_n$ be the empirical distribution of $Z_i \overset{\text{i.i.d.}}{\sim} P_0$. We propose to estimate the $s$-value via the plugin estimator

\begin{align}
\label{eq:r-val-one-M-finite}
\begin{split}
\hat s(\theta^M_k-\eta_k,P_n) &=\sup_{\eta_1, \ldots,\eta_{k-1},\eta_{k+1}, \ldots, \eta_p} \inf_{\lambda \in \R^p} \E_{P_n}[ e^{ \lambda^{\intercal} \ell(\eta,Z)}] \\
&= \sup_{\eta_1, \ldots,\eta_{k-1},\eta_{k+1}, \ldots, \eta_p} \inf_{\lambda \in \R^p} \frac{1}{n} \sum_{i=1}^n e^{ \lambda^{\intercal} \ell(\eta,Z_i)}
\end{split}
\end{align}
This is a challenging optimization problem. The optimization problem in \eqref{eq:r-val-one-M-finite} is non-convex and hence, finding the global optimum in practice is intractable. In Appendix \ref{sec:r-value-general}, we propose algorithms to solve the optimization problem in \eqref{eq:r-val-one-M-finite} with convergence guarantees. We also propose a simple plug-in estimate in Appendix~\ref{sec:plug-in}. 

The non-convexity of the optimization problem has important consequences for interpreting $s$-values. For the sake of simplicity, let us assume that $n$ is large, that is that estimation errors due to finite samples can be ignored. If the estimated $s$-value is large, the algorithm constructs a small distribution shift that changes the sign of the parameter. In other words, the parameter is certifiably unstable. On the other hand, if the estimated $s$-value is small, then this could be due the algorithm being stuck in a local optimum. Thus, for non-convex problems a large $s$-value is an indication of instability, but a small $s$-value cannot be necessarily interpreted as a proof of stability. This is similar to hypothesis testing in statistical inference, where the non-significance of a $p$-value cannot be interpreted as a proof of the null hypothesis.


We next show consistency of $\hat s(\theta^M_k-\eta_k,P_n) $ to the corresponding population stability value $s(\theta^M_k-\eta_k,P_0)$. To this end, we make the following assumption.


%
%

\begin{assumption}
\label{ass:uniform_convergence}
Let $\Sigma \subset \R^p$ be a compact subset such that the map $\eta \to \ell(\eta,Z)$ is continuous on $\Sigma$ and $\E_{P_0}[\sup_{\eta \in \Sigma}e^{\lambda^{\intercal}\ell(\eta,Z)}] < \infty$ for any $\lambda \in \R^p$. 
\end{assumption}

\begin{lemma}[Consistency of $s$-value]
\label{lem:consistency-one-comp}
Let $\Sigma_k$ denote the projection of $\Sigma$ on the $k$th co-ordinate. Under Assumption~\ref{ass:uniform_convergence}, we have $\sup_{\eta_k \in \Sigma_k}|\hat s(\theta^M_k-\eta_k,P_n)  -  s(\theta^M_k-\eta_k,P_0)| \overset{P}{\to} 0 $ for $k=\{1, \ldots, p\}$  as $n \to \infty$.
\end{lemma}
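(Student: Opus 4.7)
The plan is to reduce everything to a uniform law of large numbers for the inner minimand, combined with a uniform compactness argument for the inner $\inf$. Define
\begin{equation*}
g(\eta,\lambda) = \E_{P_0}[e^{\lambda^{\intercal}\ell(\eta,Z)}],\qquad g_n(\eta,\lambda) = \E_{P_n}[e^{\lambda^{\intercal}\ell(\eta,Z)}],
\end{equation*}
and the profile functions $\phi(\eta) = \inf_{\lambda \in \R^p} g(\eta,\lambda)$ and $\phi_n(\eta) = \inf_{\lambda \in \R^p} g_n(\eta,\lambda)$. Writing $\eta_{-k}$ for the nuisance coordinates of $\eta$, the claim becomes $\sup_{\eta_k \in \Sigma_k}|\sup_{\eta_{-k}}\phi_n(\eta) - \sup_{\eta_{-k}}\phi(\eta)| \xrightarrow{P} 0$, so by the $1$-Lipschitz property of $\sup$ in the sup-norm it suffices to prove $\sup_{\eta \in \Sigma}|\phi_n(\eta) - \phi(\eta)| \xrightarrow{P} 0$.

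The first step is a uniform Glivenko–Cantelli statement on compact rectangles of dual parameters: for any $R > 0$, set $K = \{\lambda \in \R^p : \|\lambda\|_\infty \leq R\}$ and show
\begin{equation*}
\sup_{(\eta,\lambda) \in \Sigma \times K} \bigl| g_n(\eta,\lambda) - g(\eta,\lambda) \bigr| \xrightarrow{P} 0.
\end{equation*}
The map $(\eta,\lambda) \mapsto e^{\lambda^{\intercal}\ell(\eta,z)}$ is continuous on the compact $\Sigma \times K$ by continuity of $\ell$ on $\Sigma$, and the pointwise bound $e^{\lambda^{\intercal}\ell(\eta,z)} \le \sum_{v \in \{-R,R\}^p} e^{v^{\intercal}\ell(\eta,z)}$ yields the integrable envelope $\sum_v \sup_{\eta \in \Sigma} e^{v^{\intercal}\ell(\eta,z)}$ by Assumption~\ref{ass:uniform_convergence}. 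Example 19.8 in \citet{VanDerVaart98} (or a direct argument using a finite $\varepsilon$-net in $\Sigma \times K$) then gives the uniform convergence.

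The second step is the uniform compactness of the inner infimum: for any $\varepsilon > 0$, we want $R = R(\varepsilon)$ such that
\begin{equation*}
\sup_{\eta \in \Sigma} \Bigl| \inf_{\lambda \in \R^p} g(\eta,\lambda) - \inf_{\lambda \in K} g(\eta,\lambda) \Bigr| \leq \varepsilon,
\end{equation*}
and the analogous statement for $g_n$ on an event of probability tending to $1$. Since $g(\eta,\cdot)$ is convex with $g(\eta,0) = 1$, the value $\phi(\eta)$ lies in $[0,1]$; and by joint continuity of $g$ and compactness of $\Sigma$, the level sets $\{\lambda : g(\eta,\lambda) \leq \phi(\eta) + \varepsilon\}$ can be chosen to lie in a common ball uniformly in $\eta$. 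Combining Step~1 and Step~2 with $R = R(\varepsilon)$, a standard two-sided bound yields $\sup_{\eta \in \Sigma}|\phi_n(\eta) - \phi(\eta)| \leq 2\varepsilon$ with probability $\to 1$, and passing $\varepsilon \downarrow 0$ along a suitable sequence completes the argument. Taking suprema in $\eta_{-k}$ preserves uniform convergence and gives the lemma.

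The hard step will be Step~2: on the subset of $\Sigma$ on which the unconstrained infimum is attained, continuity of the minimizer set in $\eta$ gives the bound immediately, but on the subset where $\phi(\eta) = 0$ is only approached as $\|\lambda\| \to \infty$ (for instance, when $\ell(\eta,Z)$ has a strict separating hyperplane at $0$ under $P_0$), one must argue separately that $g(\eta,\lambda)$ becomes small at a common direction and a common radius uniformly in $\eta$. This can be handled by a covering argument: upper semicontinuity of $\phi$ plus compactness of $\Sigma$ reduces the uniformity to finitely many points, and at each point one uses the convexity of $\lambda \mapsto g(\eta,\lambda)$ together with the uniform integrable envelope to transfer the control of $g$ to a full neighborhood in $\eta$.
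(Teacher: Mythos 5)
Your Steps 1 and 3 follow the same route as the paper's own proof: a uniform Glivenko--Cantelli statement over a compact rectangle $\Sigma \times K$ in $(\eta,\lambda)$ (the paper gets it from Example 19.8 of van der Vaart together with a uniform-convergence lemma for convex functions; your envelope $\sum_{v \in \{-R,R\}^p} \sup_{\eta \in \Sigma} e^{v^{\intercal}\ell(\eta,z)}$, integrable by Assumption~\ref{ass:uniform_convergence}, works just as well), followed by the two-sided near-optimizer bound to pass uniform convergence through the inner $\inf$ and the outer $\sup$. The paper, however, carries out the entire argument with the dual variable restricted to a fixed compact $\Lambda \subset \R^p$ and never reconciles $\inf_{\lambda \in \Lambda}$ with the $\inf_{\lambda \in \R^p}$ appearing in the definitions of $\hat s$ and $s$; your Step 2 is an attempt to close exactly that gap, and it is the one place where your argument does not go through as written.

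The problem with Step 2 is that the profile $\phi(\eta) = \inf_{\lambda \in \R^p} g(\eta,\lambda)$ is only upper semicontinuous, and it genuinely jumps at parameter values where $\ell(\eta,Z)$ passes from sign-changing to one-signed. Concretely, take $p=1$, $L(\theta,z) = \tfrac{1}{2}(\theta+z)^2$ so that $\ell(\eta,z) = \eta + z$, with $Z$ uniform on $\{-1,1\}$: then $\phi(\eta) > 0$ for $\eta < 1$, $\phi(1) = \tfrac{1}{2}$, and $\phi(\eta) = 0$ for $\eta > 1$, the infimum being approached only as $\lambda \to -\infty$. For the ball $K = \{|\lambda| \le R\}$ one has $g(1+1/R,\lambda) \ge \tfrac{1}{2} e^{\lambda/R} \ge \tfrac{1}{2}e^{-1}$ for all $\lambda \in K$, while $\phi(1+1/R) = 0$, so $\sup_{\eta}\bigl|\inf_{\lambda \in K} g(\eta,\lambda) - \phi(\eta)\bigr|$ stays bounded away from zero as $R \to \infty$: no common radius works uniformly in $\eta$, and the claim that the $\varepsilon$-level sets meet a common ball is false. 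The covering argument you sketch does not repair this, because continuity of $g(\cdot,\lambda_0)$ near a point $\eta_0$ only controls $g(\eta,\lambda_0)$ relative to $\phi(\eta_0)$, not relative to $\phi(\eta)$, which can drop discontinuously in every neighborhood of $\eta_0$. (In this example the conclusion of the lemma still holds, because $\hat s$ and $s$ jump at the same place; but that must be argued by exploiting cancellation between the two unbounded infima, not by a uniform truncation.) If you are content to prove the statement the paper actually establishes --- with both infima taken over a fixed compact $\Lambda$ --- then your Steps 1 and 3 already suffice and Step 2 can be dropped.
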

We present the proof of Lemma~\ref{lem:consistency-one-comp} in Appendix~\ref{proof:consistency-one-comp}.

\begin{example}[Regression]
Let $\mc{X} \subseteq \mathbb{R}^p$ be a $p$-dimensional bounded feature space and $\mc{Y}$ be the space of the response. Let $Y \in \mc{Y}$ satisfy $Y=X^{\intercal}\beta+\epsilon$, where $\beta \in \Theta \subset \R^p$ and $\epsilon$ is independent of $X$. 
We have $L(\eta,X,Y)=\frac{1}{2}(Y-X^{\intercal}\eta)^2$, $\ell(\eta, X,Y)=-X(Y-X^{\intercal}\eta)$ and hence, $\nabla_{\eta} \ell(\eta,X,Y)=XX^T$.
Now, for a compact subset $\Sigma \subset \R$, $\E_{P_0}[\sup_{\eta \in \Sigma}e^{\lambda^{\intercal}\ell(\eta,Z)}] < \infty$ if $\epsilon$ has a finite moment generating function on $\mathbb{R}$. Invoking Lemma~\ref{lem:consistency-one-comp}, we can conclude that the estimator is consistent.

Let us now discuss how to estimate directional $s$-values of individual components. Suppose we want to obtain the directional $s$-value of the $k$-th component of vector $\theta^M \in \R^p$. Using Corollary \ref{cor:r-values-m-cond}, the population directional $s$-value is given by

\begin{equation}
\label{eq:r-val-one-M-cond}
s_E(\theta^M_k-\eta_k,P_0) =\sup_{\eta_1, \ldots,\eta_{k-1},\eta_{k+1}, \ldots, \eta_p} \inf_{\lambda \in \R^p} E_{P_0}[ e^{ \lambda^{\intercal}\E_{P_0}[ \ell(\eta,Z)\mid E]}].
\end{equation}

\end{example}

Now we propose a finite sample estimator of the directional $s$-value of individual components and show consistency. 


Let $Q_n(\eta,E)$ be a finite sample estimator of $\E_{P_0}[ \ell(\eta,Z)\mid E]$, then the finite sample plugin estimator is given by
 \begin{equation}
\label{eq:r-val-one-M-finite-cond}
\hat s_E(\theta^M_k-\eta_k,P_n) =\sup_{\eta_1, \ldots,\eta_{k-1},\eta_{k+1}, \ldots, \eta_p} \inf_{\lambda \in \R^p} E_{P_n}[ e^{ \lambda^{\intercal} Q_{n}(\eta,E)}].
\end{equation}
Again, this is a challenging optimization problem. We discuss algorithms in Appendix \ref{sec:alg-cond-s}. A simplification that allows to derive upper bounds can be found in Appendix, Section~\ref{sec:plug-in}. 
We make the following additional assumption to show consistency of $\hat s_E(\theta^M_k-\eta_k,P_n)$.
\begin{assumption}
\label{ass:conditional-uniform-convergence}
 $\sup_{\eta} \sup_{e} \norm{ E_{P_{0}}[\ell(\eta,Z)|E=e] - Q_n(\eta,e)}_{\infty} \rightarrow 0$, where $Q_n(\eta,e)$ is an estimate of $E_{P_{0}}[\ell(\eta,Z)|E=e]$.
\end{assumption}
\begin{lemma}[Consistency of directional $s$-value]
\label{lem:consistency-one-comp-cond}
Under Assumptions~\ref{ass:uniform_convergence} and \ref{ass:conditional-uniform-convergence}, we have 
\begin{align*}
\sup_{\eta_k \in \Sigma_k}|\hat s_E(\theta^M_k-\eta_k,P_n)  -  s_E(\theta^M_k-\eta_k,P_0)| \overset{P}{\to} 0  \text{ for } k=\{1, \ldots, p\}
\end{align*}
 as $n \to \infty$.
%
\end{lemma}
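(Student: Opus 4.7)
The plan is to extend the proof of Lemma~\ref{lem:consistency-one-comp} by inserting one additional approximation step that handles the error $Q_n - \E_{P_0}[\ell(\eta,Z)|E]$. Define
\[
G(\eta,\lambda) := \E_{P_0}\bigl[e^{\lambda^\intercal \E_{P_0}[\ell(\eta,Z)|E]}\bigr], \qquad \hat G_n(\eta,\lambda) := \E_{P_n}\bigl[e^{\lambda^\intercal Q_n(\eta,E)}\bigr],
\]
so that $s_E(\theta^M_k-\eta_k,P_0) = \sup_{\eta_{-k}}\inf_{\lambda}G(\eta,\lambda)$ and $\hat s_E(\theta^M_k-\eta_k,P_n) = \sup_{\eta_{-k}}\inf_{\lambda}\hat G_n(\eta,\lambda)$. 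I would proceed in three steps: first, establish uniform convergence $\sup_{(\eta,\lambda) \in \Sigma \times \Lambda}|\hat G_n(\eta,\lambda) - G(\eta,\lambda)| \overset{P}{\to} 0$ on compact parameter sets; second, argue that the infimum over $\lambda$ can be restricted to a fixed compact set $\Lambda \subset \R^p$ uniformly in $\eta \in \Sigma$; and third, transfer this uniform convergence to the sup-inf via the same continuity argument that already appears in Lemma~\ref{lem:consistency-one-comp}.

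For the uniform convergence, I would decompose
\[
\hat G_n(\eta,\lambda) - G(\eta,\lambda) = (\E_{P_n} - \E_{P_0})\,e^{\lambda^\intercal \E_{P_0}[\ell(\eta,Z)|E]} + \E_{P_n}\!\bigl[e^{\lambda^\intercal Q_n(\eta,E)} - e^{\lambda^\intercal \E_{P_0}[\ell(\eta,Z)|E]}\bigr].
\]
The first term vanishes uniformly on $\Sigma \times \Lambda$ by a uniform law of large numbers (e.g.\ Example~19.8 in \citet{VanDerVaart98}) applied to the parametric family $\{(z,e) \mapsto e^{\lambda^\intercal \E_{P_0}[\ell(\eta,Z)|E=e]} : (\eta,\lambda) \in \Sigma \times \Lambda\}$. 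The required integrable envelope comes from Jensen's inequality: since $x \mapsto e^x$ is convex, $e^{\lambda^\intercal \E_{P_0}[\ell(\eta,Z)|E]} \le \E_{P_0}[e^{\lambda^\intercal \ell(\eta,Z)} \mid E]$, whose expectation is dominated by $\E_{P_0}[\sup_{\eta\in\Sigma} e^{\lambda^\intercal \ell(\eta,Z)}] < \infty$ by Assumption~\ref{ass:uniform_convergence}. The second term is handled by the elementary bound $|e^a - e^b| \le e^{\max(a,b)}|a-b|$ combined with Assumption~\ref{ass:conditional-uniform-convergence}, which gives $\sup_{\eta,e}\|Q_n(\eta,e) - \E_{P_0}[\ell(\eta,Z)|E=e]\| = o_P(1)$, and with the envelope bound above; this yields $o_P(1)$ uniformly on $\Sigma \times \Lambda$.

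The compactness reduction for $\lambda$ proceeds exactly as in Lemma~\ref{lem:consistency-one-comp}: because $G(\eta,0) = 1$ and $\lambda \mapsto G(\eta,\lambda)$ is convex with finite MGF, strict convexity together with coercivity confines minimizers to a compact set $\Lambda$ uniformly in $\eta \in \Sigma$, and the same compact set works for $\hat G_n$ with probability tending to one by uniform convergence. Uniform convergence on $\Sigma \times \Lambda$ then implies $\sup_{\eta \in \Sigma}|\inf_\lambda \hat G_n(\eta,\lambda) - \inf_\lambda G(\eta,\lambda)| \overset{P}{\to} 0$ by a standard Berge-type argument, and taking suprema over the complementary coordinates $\eta_{-k}$ preserves uniform convergence in $\eta_k \in \Sigma_k$.

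The main obstacle is controlling the nonlinear error $e^{\lambda^\intercal Q_n} - e^{\lambda^\intercal \E_{P_0}[\ell|E]}$ uniformly when $Q_n$ is itself a random approximation. The envelope $e^{\max(\lambda^\intercal Q_n,\, \lambda^\intercal \E_{P_0}[\ell|E])}$ must remain integrable under $P_n$; for large $n$ this follows because $\lambda^\intercal Q_n$ is uniformly close to $\lambda^\intercal \E_{P_0}[\ell|E]$, so the envelope is dominated, up to a multiplicative factor $e^{o_P(1)}$, by the integrable envelope produced via Jensen's inequality. A secondary technicality is the uniform compactness reduction for $\lambda$, which may require a mild non-degeneracy condition on the conditional distribution of $\ell(\eta,Z)$ given $E$ (to guarantee coercivity uniform in $\eta$), but is otherwise inherited verbatim from the argument in Lemma~\ref{lem:consistency-one-comp}.
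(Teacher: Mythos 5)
Your proposal is correct and follows essentially the same route as the paper's proof: the same decomposition of $\hat G_n - G$ into a uniform-LLN term for $e^{\lambda^\intercal \E_{P_0}[\ell(\eta,Z)\mid E]}$ and an approximation term controlled by the bound $|e^a-e^b|\le e^{\max(a,b)}|a-b|$ together with Assumption~\ref{ass:conditional-uniform-convergence}, the same Jensen-inequality envelope $e^{\lambda^\intercal \E_{P_0}[\ell(\eta,Z)\mid E]}\le \E_{P_0}[e^{\lambda^\intercal \ell(\eta,Z)}\mid E]$ drawn from Assumption~\ref{ass:uniform_convergence}, and the same final transfer of uniform convergence through the $\sup$-$\inf$. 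If anything, you are slightly more careful than the paper in explicitly flagging the reduction of $\inf_{\lambda\in\R^p}$ to a compact $\Lambda$, a step the paper's proof leaves implicit.
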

We present the proof of Lemma~\ref{lem:consistency-one-comp-cond} in Appendix~\ref{proof:consistency-one-comp-cond}. To quantify uncertainty, in this settings we find the bootstrap preferrable over asymptotic expansions (as in Section~\ref{sec:r-value}) to account for the fact that the algorithm might get stuck in a local minimum.

\section{A simple plug-in estimator}
\label{sec:plug-in}
Equation~\eqref{eq:r-val-one-M} and equation~\eqref{eq:r-val-one-M-cond} are non-convex optimization problems that are potentially difficult to solve. In practice, we can obtain a lower bound by removing the outer supremum in \eqref{eq:r-val-one-M} and \eqref{eq:r-val-one-M-cond} and using a plug-in estimator for the lower bound.

Let $\tilde{\eta}= (\hat \theta^M_1, \ldots,\hat \theta^M_{k-1}, \eta_k, \hat \theta^M_{k+1}, \ldots, \hat  \theta^M_p)$, where the  $\hat \theta^M_i$ are estimates of $\theta^M_i(P_0)$. Furthermore, let $Q_n(\eta,E)$ be an estimate of $\mathbb{E}[\ell(\eta,Z)|E]$. We can obtain plug-in estimators of $s(\theta^M_k-\eta_k,P_0)$ and $s_E(\theta^M_k-\eta_k,P_0)$ via
\begin{align*}
\hat s_{\text{plug-in}}(\theta^M_k-\eta_k,P_0)&=\inf_{\lambda \in \R^p} \E_{P_n}[ e^{ \lambda^{\intercal} \ell(\tilde{\eta},Z)}]   \\
 &= \inf_{\lambda \in \R^p} \frac{1}{n} \sum_{i=1}^n e^{ \lambda^{\intercal} \ell(\tilde{\eta},Z_i)} \text{ and } \\
 \hat s_{E,\text{plug-in}}(\theta^M_k-\eta_k,P_0)&= \inf_{\lambda \in \R^p} E_{P_n}[ e^{ \lambda^{\intercal}\E_{P_0}[ \ell(\tilde{\eta},Z)\mid E]}] \\
 & = \inf_{\lambda \in \R^p} \frac{1}{n} \sum_{i=1}^n e^{ \lambda^{\intercal}  Q_n(\tilde{\eta},E_i)}.
\end{align*}
Clearly, the objective functions are convex and hence the optimization problem is easily solvable. Large plug-in estimate certify instability of parameters. However, since these plug-in estimators are based on lower bounds of $s(\theta^M_k-\eta_k,P_0)$ and $s_E(\theta^M_k-\eta_k,P_0)$, estimates close to zero do not certify stability. Overall, the plug-in estimator can be used as a first check to evaluate distributional instability of a parameter.

\section{S-values of general estimands}
\label{sec:r-value-general}

Here, we are interested in obtaining $s$-values of individual components of parameters defined via risk minimization as in \eqref{eq:r-val-one-M}. The corresponding optimization problem to obtain $s$-value as in \eqref{eq:r-val-one-M} is generally non-convex. Hence, obtaining a globally optimal solution of the optimization problem is very challenging. Here, we characterize the form of a locally optimal solution of the corresponding optimization problem and give algorithms to solve such problems in Appendix~\ref{sec:alg}. Here we use the original definition of $s$-value as opposed to the form given in \eqref{eq:r-val-one-M}, that is,
\begin{equation}
\label{eqn:r-value-M-est}
   s(\theta^M_k,P_0)=\sup_{P \in \mc{P}}\exp\{- D_{KL}(P||P_0)\} \hspace{0.1in} \text{s.t.} \hspace{0.1in} \theta^M_k(P) = 0,
\end{equation}

For ease of presentation, from here on we denote the parameter of interest as $\theta$ instead of $\theta^M_k$ and consider a finite sample setting where we observe $n$ samples $\{Z_i\}_{i=1}^n \overset{\text{i.i.d.}}{\sim} P_0$ for some distribution $P_0 \in \mc{P}$. Let the empirical distribution of $\{Z_i\}_{i=1}^n $ be denoted by $P_{0,n}=\sum_{i=1}^n \frac{1}{n} \delta_i$, where $\delta_i$ is a dirac measure on $Z_i$. Let $W_n=[0,1]^n$ be $n$ dimensional unit cube and let $S_n=\{w \in \R^n : w_1+ \ldots + w_n=1, w_i \geq 0 \text{ for } i=1, \ldots, n\}$ be $n$ dimensional probability simplex. We focus on a one dimensional parameter $\theta : S_n \to \R$ where we define for $w \in S_n$, $\theta(w) $ as $\theta(\sum_{i=1}^n w_i \delta_i)$.
With a slight abuse of notation from now on, we redefine $\theta$ on the $n$ dimensional unit cube $W_n$ as $\theta(w)=\theta \left( \frac{\sum_{i=1}^{n} w_{i} \delta_{i}}{\sum_i w_i} \right)  $ for $w \in W_n$.
We recall that we want to obtain (extended) $s$-value of parameter $\theta$ given by

\begin{equation}
\label{eqn:r-value-finite}
   s(\theta-c,P_{0,n})=\sup_{w \in W_n}\exp\{- \sum_{i=1}^n w_i \log(n w_i)\} \hspace{0.1in} \text{s.t.} \hspace{0.1in} \theta(w) = c, \text{ } \sum_{i=1}^n w_i=1.
\end{equation}
where $c$ is a real constant.

The above optimization problem belongs to the class of general constrained minimization problems with equality constraints (see Chapter 3 of \citet{Bertsekas99}). In the following, we present necessary and sufficient conditions for a point to be a local optimum of problem in \eqref{eqn:r-value-finite}. This characterization can be used to verify if we obtained a locally optimal solution of our optimization problem \eqref{eqn:r-value-finite}. We first define a locally optimal solution to the problem in \eqref{eqn:r-value-finite}.

\begin{definition}
\label{def:local-optima}
An element $w^* \in S_n$ (the $n$ dimensional probability simplex) is said to be a locally optimal solution to problem \eqref{eqn:r-value-finite} if $\theta(w^*)=c$ and there exists a small $\epsilon>0$ such that $\sum_{i=1}^n w_i^*\log n w_i^* \leq \sum_{i=1}^n w_i\log n w_i$ for all $w \in S_n :  \theta(w)=c \text{ and }  \norm{w-w^*} <\epsilon$. 
\end{definition}

We next present a necessary condition for a point to be a local optimum of \eqref{eqn:r-value-finite} that follows immediately from Proposition 3.1.1 of \citet{Bertsekas99}.

\begin{corollary}[Necessary conditions]
\label{thm:optim-result-finite}
Assume that $\theta : \mathrm{int}(W_n) \to \R$ is continuously differentiable. Let $w^* \in S_n$ be a locally optimal solution to problem \eqref{eqn:r-value-finite}, and assume that there does not exist a constant $r \in \R$ such that $\nabla_w\theta(w^*)=r (1, \ldots, 1)$. Then there exists a constant $\lambda \in \R$ such that
\begin{equation}
\label{eqn:wopt-form}
w^*_i \propto e^{\lambda \nabla_{i}\theta(w^*)}  \text{ for all } i=\{1, \ldots, n\}.
\end{equation}

\end{corollary}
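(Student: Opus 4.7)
The plan is to invoke Proposition 3.1.1 of \citet{Bertsekas99} (the classical Lagrange multiplier theorem for equality-constrained optimization) on the equivalent minimization problem obtained via monotonicity of $\exp(-\cdot)$. Namely, $w^*$ is a local minimizer of $f(w) := \sum_{i=1}^n w_i \log(n w_i)$ over $\mathrm{int}(W_n)$ subject to the two equality constraints $h_1(w) := \theta(w) - c = 0$ and $h_2(w) := \sum_{i=1}^n w_i - 1 = 0$. The reduction to the interior is consistent with the hypotheses: both $f$ and $\theta$ are only $C^1$ on $\mathrm{int}(W_n)$, and the exponential form claimed in the conclusion automatically forces $w^*_i > 0$, so the statement is intrinsically about interior local optima; hence the nonnegativity inequalities $w_i \geq 0$ are inactive and can be ignored.

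To apply Proposition 3.1.1 I need the constraint gradients $\nabla h_1(w^*) = \nabla_w \theta(w^*)$ and $\nabla h_2(w^*) = (1,\ldots,1)$ to be linearly independent. With just two vectors this reduces to neither being a scalar multiple of the other, which is precisely the hypothesis that no $r \in \R$ satisfies $\nabla_w \theta(w^*) = r(1,\ldots,1)$. The theorem then supplies multipliers $\mu_1, \mu_2 \in \R$ with
\begin{equation*}
\nabla f(w^*) + \mu_1 \nabla_w \theta(w^*) + \mu_2 (1, \ldots, 1) = 0.
\end{equation*}
Since $\nabla_i f(w) = \log(n w_i) + 1$, the $i$-th component of this equation rearranges to $\log(n w^*_i) = -1 - \mu_2 - \mu_1 \nabla_i \theta(w^*)$, and exponentiating gives $w^*_i \propto \exp(-\mu_1 \nabla_i \theta(w^*))$, which is the claimed form with $\lambda := -\mu_1$. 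The normalization constant is absorbed into the proportionality symbol and can be pinned down afterwards by $\sum_i w^*_i = 1$.

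The entire argument is essentially bookkeeping once the reformulation is in place: the main conceptual point is recognizing that the stated regularity assumption is exactly the linear independence constraint qualification required by Proposition 3.1.1, and the main potential pitfall is the boundary of $W_n$, which is avoided because both differentiability of the entropy objective and positivity of the claimed solution restrict attention to the relative interior of the simplex. No inequality-constrained KKT machinery is needed.
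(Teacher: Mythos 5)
Your proof is correct and follows exactly the route the paper intends: the paper states that this corollary ``follows immediately from Proposition 3.1.1 of Bertsekas (1999),'' and your argument is precisely the worked-out version of that citation, with the stated non-proportionality assumption playing the role of the linear independence constraint qualification for the two equality-constraint gradients $\nabla_w\theta(w^*)$ and $(1,\ldots,1)$. Your explicit handling of the boundary of the simplex (noting that the conclusion and the differentiability hypotheses confine attention to interior optima) is a point the paper glosses over, but it does not change the substance of the argument.
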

We next present a sufficient condition for a point to be local optimum of \eqref{eqn:r-value-finite} that follows from Proposition 3.2.1 of \citet{Bertsekas99}. To that end, we introduce the Lagrangian function $h : \R^n\times \R \times \R \to \R$ that we define as
\begin{equation}
h(w,\delta, \mu)= \sum_{i=1}^{n} w_{i} \log \left( w_{i} \right) +\delta (\theta \left( w\right) -c)+\mu(\sum_{i=1}^n  w_i -1)\text{ for } w \in W_n \text{, and } \delta,\mu \in \R.
\end{equation}
\begin{corollary}[Second order sufficiency conditions]
Assume that $\theta  : \mathrm{int}(W_n) \to \R$ is twice continuously differentiable, and let $w^* \in W_n$ and $\delta^*,\mu^* \in \R$ satisfy
\begin{align*}
&\nabla_w h(w^*,\delta^*,\mu^*)=0, \text{ } \nabla_{\delta,\mu} h(w^*,\delta^*,\mu^*)=0, \\
& \gamma' \nabla_{ww}^2 h(w^*,\delta^*, \mu^*)\gamma>0, \text{ for all }\gamma \neq 0 \text{ with } \nabla \theta(w^*)'\gamma=0 \text{ and } \sum_{i=1}^n \gamma_i=0.
\end{align*}
Then $w^*$ is a strict local optimum of \eqref{eqn:r-value-finite}.
\end{corollary}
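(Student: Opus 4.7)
The plan is to reduce the claim to a direct invocation of Proposition 3.2.1 of \citet{Bertsekas99}, since the corollary is essentially a restatement of it for the particular objective $w \mapsto \sum_i w_i \log(n w_i)$ with the two equality constraints $\theta(w) = c$ and $\sum_i w_i = 1$. First I would rewrite \eqref{eqn:r-value-finite} as an equivalent minimization problem: on the simplex $\sum_i w_i = 1$ we have $\sum_i w_i \log(n w_i) = \log n + \sum_i w_i \log w_i$, so because $t \mapsto e^{-t}$ is strictly decreasing, maximizing $\exp\{-\sum_i w_i \log(n w_i)\}$ subject to the constraints is equivalent to minimizing the negative entropy $f(w) = \sum_i w_i \log w_i$ subject to $g_1(w) := \theta(w) - c = 0$ and $g_2(w) := \sum_i w_i - 1 = 0$. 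A strict local minimum in the reformulated problem is then a strict local maximum of the original objective, i.e., a strict local optimum in the sense of Definition \ref{def:local-optima}.

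Next I would identify $h$ in \eqref{eqn:general-optim-lag} (up to the irrelevant additive constant $\log n$ absorbed into the objective) as the standard Lagrangian with multipliers $\delta$ for $g_1$ and $\mu$ for $g_2$, and verify that the hypotheses of Proposition 3.2.1 are met: $f$ is $C^\infty$ on $\mathrm{int}(W_n)$, $g_1$ is $C^2$ by assumption on $\theta$, and $g_2$ is linear; the equations $\nabla_w h(w^*,\delta^*,\mu^*) = 0$ and $\nabla_{\delta,\mu} h(w^*,\delta^*,\mu^*) = 0$ are precisely the stationarity and primal feasibility conditions; and the tangent space of the constraint set at $w^*$ is exactly $\{\gamma \in \R^n : \nabla \theta(w^*)^\top \gamma = 0, \sum_i \gamma_i = 0\}$, on which the hypothesis requires positive definiteness of $\nabla_{ww}^2 h(w^*,\delta^*,\mu^*)$. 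Bertsekas's proposition then concludes that $w^*$ is a strict local minimum of the reformulated problem, and transferring through the monotone decreasing map $t \mapsto e^{-t}$ yields the desired strict local maximum.

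A small sanity check I would carry out is that $w^* \in \mathrm{int}(W_n)$, so that the derivatives in the hypothesis make sense. This follows automatically from the stationarity condition: coordinate-wise, $\partial_{w_i} h(w^*,\delta^*,\mu^*) = \log w^*_i + 1 + \delta^* \partial_i \theta(w^*) + \mu^* = 0$ forces $w^*_i > 0$, matching the form \eqref{eqn:wopt-form} from Corollary \ref{thm:optim-result-finite}. There is no real obstacle in the argument; the only substantive check is confirming that the second-order condition stated here (positive definiteness on the intersection of the two tangent hyperplanes $\nabla\theta(w^*)^\top \gamma = 0$ and $\mathbf{1}^\top \gamma = 0$) is exactly the standard condition on the tangent space of the full constraint system, which it is since the constraint gradients $\nabla g_1(w^*) = \nabla \theta(w^*)$ and $\nabla g_2(w^*) = \mathbf{1}$ are both accounted for.
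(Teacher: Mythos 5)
Your proposal is correct and takes essentially the same route as the paper, which gives no separate proof and simply asserts that the corollary follows from Proposition 3.2.1 of \citet{Bertsekas99}; you spell out the same reduction (entropy reformulation, identification of the Lagrangian, tangent-space check) in more detail. The interiority check via the stationarity condition $\log w_i^* + 1 + \delta^*\partial_i\theta(w^*) + \mu^* = 0$ is a worthwhile addition that the paper leaves implicit.
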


Based on the characterization of local optima above, we present a Majorization-Minimization based algorithm \citep{Lange13} in Appendix~\ref{sec:alg} to solve \eqref{eqn:r-value-finite} and give sufficient conditions under which the iterates of the algorithm converges to a point that satisfies the first-order necessary conditions \eqref{eqn:wopt-form}. We have similar characterization of locally optimal solution for the optimization problem involved in obtaining the directional $s$-values \eqref{eqn:r-condn} that we present in Appendix~\ref{sec:alg-cond-s-alg} along with the algorithm to solve such problems.

\subsection{Algorithms to obtain $s$-values for general estimands}\label{sec:alg}
Here we present a Majorization-Minimization (MM) based algorithm \citep{Lange13} to solve the problem in \eqref{eqn:r-value-finite} and show that it converges to a point that satisfies first-order necessary conditions \eqref{eqn:wopt-form}. We also adapt our procedure to obtain directional or variable specific $s$-value. We can use several existing algorithms to solve \eqref{eqn:r-value-finite} (see Chapter 4 of \citet{Bertsekas99}) that come with some convergence guarantees. However, the convergence guarantees of the existing algorithms typically come under the assumption that the iterates obtained by the algorithm converge (or we only have guarantees along a subsequence) whereas we present sufficient conditions under which the iterates obtained by our algorithm always converge to a point that satisfies first-order necessary conditions. Further, the existing algorithms require obtaining close approximations to the first-order stationary points of the corresponding augmented Lagrangian (for example, augmenting the objective function with a square of the parameter $\theta$ with a high penalty), however, standard approaches for obtaining first-order stationary points of such functions require slightly stronger assumptions (M-smoothness of the square of $\theta$, see Assumption \ref{ass:continuous-differentiability} and the following remark below). Further, since the constraint function involves a one-dimensional parameter $\theta$, our procedure can be efficiently adapted to obtain $s$-value over a range of constants $c$ as in equation \eqref{eqn:r-value-finite}.

To that end, we make the following smoothness assumption of our parameter $\theta$. 

\begin{assumption}
\label{ass:continuous-differentiability}
The function $\theta : \text{int}(W_n) \to \R$ is continuously differentiable and $M$ smooth for some $M \in \R$, that is, for $w, w' \in W_n$, 
\begin{equation}
\label{eqn:upper_bound-l2}
|\theta(w') -\theta(w)- \langle \nabla \theta(w), w'-w \rangle| \leq \frac{M}{2} \norm{w-w'}^2_2.
\end{equation}
\end{assumption}
Since $\ell_1$ and $\ell_2$ norms are equivalent in finite dimensional spaces, by Pinsker's inequality, we have the following relation for any $w, w' \in S_n$ for some real constant $L>0$,

\begin{equation}
\label{eqn:upper-bound}
|\theta(w')-\theta(w)- \langle \nabla \theta(w), w'-w \rangle| \leq L\sum_{i=1}^n w^{'}_i \log \frac{w^{'}_i}{w_i}.
\end{equation}
This new upper bound would help obtain a closed-form expression of update in each iteration of the algorithm (see Proposition \ref{prop:solve-upper-bound}).

\begin{remark}
In practice, the constant $L$ is often not known in which case, we need to tune it similarly as we would tune the step size in a gradient descent-based method.
\end{remark}

\begin{remark}
Although, it appears we make stronger smoothness assumptions for $\theta$ than what is needed for convergence guarantees of algorithms in Chapter 4 of \citet{Bertsekas99}, however, such assumptions are standard for convergence guarantees of gradient descent-based methods that are typically used to minimize the augmented Lagrangian at each iterate of the algorithm as needed for example, in Proposition 4.2.2 of \citet{Bertsekas99} where we need $M$-smoothness of the square of $\theta$.
\end{remark}

To obtain a solution of \eqref{eqn:r-value-finite}, we solve the Lagrangian form of the optimization problem in \eqref{eqn:r-value-finite} as given by

\begin{equation}
\label{eqn:general-optim-lag}
  \minimize_{w_{1},\ldots,w_{n}, w_{i} \ge 0, \sum w_{i} =1}  g(w)=\delta (\theta \left( w\right)-c) + \sum_{i=1}^{n} w_{i} \log \left( w_{i} \right)
\end{equation}
for any fixed $\delta$.
If there exists a $\delta$ such that the iterates obtained by our algorithm converges to some $\wopt^{\delta} \in W_n$ that satisfies $\theta(\wopt^{\delta})=c$, then we can show that $\wopt^{\delta}$ satisfies first order necessary conditions \eqref{eqn:wopt-form}. In practice, we use grid search to obtain a $\delta$ that yields $\theta(\wopt^{\delta})=c$.

Now we present the MM based algorithm to solve \eqref{eqn:general-optim-lag} for a given $\delta$ and parameters that satisfy Assumption~\ref{ass:continuous-differentiability}.
Without loss of generality, we assume that $\theta(P_{0,n})>c$ and hence, we choose $\delta>0$. First, we upper bound the objective in \eqref{eqn:general-optim-lag} using inequality \eqref{eqn:upper-bound} so that we have for $w,w' \in W_n$
\begin{align}
\label{eqn:upper-bound-alg}
g(w') \leq G_{L}(w',w) \defeq \delta \left(\theta(w) -c+ \langle \nabla \theta(w), w'-w \rangle + L\sum_{i=1}^n w_i'\log \frac{w_i'}{w_i}\right) +\sum_{i=1}^n w_i'\log w_i'.
\end{align}
Note that $G_{L}(w',w)$ is convex in $w'$ and that $g(w')=G_{L}(w',w)$ when $w'=w$. 
Our algorithm then runs iteratively where given a current solution $w^k$, we obtain the next iterate $w^{k+1}$ as $w^{k+1}=\argmin_{w: \sum_{i=1}^n w_i=1}G_{L}(w,w^k)$, which has a closed form that we present in the proposition below. We define the iteration map $M:W_n \to W_n$ as $M(w^k)=w^{k+1}$ for all $w_k \in W_n$.

\begin{proposition}
\label{prop:solve-upper-bound}
Let $w^{k+1}$ be the iterate obtained at $k$th iteration, that is, 
\begin{equation}
\label{eqn:convex-weights}
M(w^k)=w^{k+1}=\argmin_{w: \sum_{i=1}^n w_i=1} \delta \left(\theta(w^k) -c+ \langle \nabla \theta(w^k), w-w^k \rangle + L\sum_{i=1}^n w_i\log \frac{w_i}{w^k_i}\right) +\sum_{i=1}^n w_i\log w_i,
\end{equation}
then it is uniquely given by

\begin{equation}
\label{eqn:next-iter}
(M(w^k))_i = w_{i}^{k+1}   \propto e^{-\frac{\delta}{1+L\delta}  \nabla_i \theta(w^k)} (w^k_i)^{\frac{L\delta}{1+L\delta}},
\end{equation}
for all $i= \{1, \ldots, n\}$.
\end{proposition}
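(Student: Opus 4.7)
The plan is to solve the convex subproblem in~\eqref{eqn:convex-weights} directly by Lagrange multipliers. First, I would observe that, as a function of $w$, the objective is a sum of a linear term in $w$ (coming from $\delta \langle \nabla \theta(w^k), w - w^k\rangle$ and $-L\delta \sum_i w_i \log w_i^k$), plus $(1 + L\delta) \sum_i w_i \log w_i$, which is strictly convex on the positive orthant. Hence, together with the affine constraint $\sum_i w_i = 1$, the problem has a unique minimizer. The $\log w_i$ penalty blows up at the boundary, so the nonnegativity constraints $w_i \ge 0$ are inactive at the optimum (assuming $w_i^k > 0$ for all $i$, which propagates from the initialization through~\eqref{eqn:next-iter}), and it suffices to handle the equality constraint.

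Next, I would drop the additive constants in $w$ and form the Lagrangian with a single multiplier $\mu \in \R$ for $\sum_i w_i = 1$:
\begin{equation*}
\mathcal{L}(w, \mu) = \delta \sum_{i=1}^n \nabla_i \theta(w^k)\, w_i + (1 + L\delta) \sum_{i=1}^n w_i \log w_i - L\delta \sum_{i=1}^n w_i \log w_i^k + \mu\!\left(\sum_{i=1}^n w_i - 1\right).
\end{equation*}
Setting $\partial \mathcal{L}/\partial w_i = 0$ yields
\begin{equation*}
\delta\, \nabla_i \theta(w^k) + (1 + L\delta)(\log w_i + 1) - L\delta \log w_i^k + \mu = 0,
\end{equation*}
which rearranges to
\begin{equation*}
\log w_i = -\frac{\delta}{1 + L\delta}\, \nabla_i \theta(w^k) + \frac{L\delta}{1 + L\delta}\, \log w_i^k + C,
\end{equation*}
where $C$ absorbs all terms independent of $i$. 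Exponentiating gives $w_i \propto \exp\!\left\{-\tfrac{\delta}{1+L\delta} \nabla_i \theta(w^k)\right\} (w_i^k)^{\frac{L\delta}{1+L\delta}}$, which is exactly~\eqref{eqn:next-iter}, and the normalization constant is fixed by $\sum_i w_i = 1$.

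Finally, uniqueness follows from strict convexity of the objective in $w$, so the KKT conditions pin down the minimizer, and the explicit formula displayed above is that unique minimizer. There is no real obstacle here; the computation is a direct Lagrange-multiplier exercise whose cleanness stems from the fact that the linear majorization of $\theta$ in~\eqref{eqn:upper-bound-alg} combines with the KL-like term $L\delta \sum_i w_i \log(w_i/w_i^k)$ and the entropy term $\sum_i w_i \log w_i$ to produce an effective objective whose first-order condition is linear in $\log w_i$, giving a closed-form exponential/power solution.
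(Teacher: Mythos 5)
Your proof is correct and follows essentially the same route as the paper's: form the Lagrangian for the equality constraint $\sum_i w_i = 1$, set the partial derivative with respect to $w_i$ to zero, and rearrange to obtain $\log w_i$ as an affine combination of $-\nabla_i\theta(w^k)$ and $\log w_i^k$ with the stated coefficients. Your added remarks on strict convexity and the inactivity of the nonnegativity constraints only make the argument more complete than the paper's version.
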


\subsection{Proof of Proposition~\ref{prop:solve-upper-bound}}
\label{proof:solve-upper-bound}
\begin{proof}
The optimization problem in \eqref{eqn:convex-weights} is a convex optimization problem and we obtain the solution to \eqref{eqn:convex-weights} via a Lagrange multipliers.

The Lagrangian is given by

\begin{equation}
\argmin_{w: \sum_{i=1}^n w_i=1} \delta \left( \langle \nabla \theta(w^k), w-w^k \rangle + L\sum_{i=1}^n w_i\log \frac{w_i}{w^k_i}\right) +\sum_{i=1}^n w_i\log w_i +\gamma (\sum_{i=1}^n w_i-1).
\end{equation}

Differentiating with respect to $w_i$ and setting the derivative to 0 gives,

\begin{equation}
\delta\nabla_i\theta(w^k)+\delta L\log\frac{w_i}{w^k_i}+\delta L+\log w_i+1+\gamma=0.
\end{equation}
Hence, the result follows after rearranging the terms and using the constraint $\sum_{i=1}^n w_i=1$.
\end{proof}


Below we summarize our algorithm to solve \eqref{eqn:general-optim-lag}
for a fixed $\delta$.
\algbox{\label{alg:r-M}Solving \eqref{eqn:general-optim-lag} for a fixed $\delta$.
}{
  \textbf{Input:} Training distribution $P_{0,n}$, parameter $ \theta$ satisfying Assumption \ref{ass:continuous-differentiability} where without loss of generality $\theta(P_{0,n})>c$, penalty $\delta>0$, convergence tolerance $\epsilon$ .
  \newline
  \textbf{Output:} First order stationary solution of \eqref{eqn:general-optim-lag}.
  \newline
Set $k \leftarrow 0$, initialize $w^0$ with some $w \in W$, for example, $w^0_i=\frac{1}{n}$ for all $i =\{1, \ldots, n\}$.
\begin{enumerate}[1.]
  \item For $k \geq 0$, obtain $w^{k+1}$ as in \eqref{eqn:next-iter}.
 
  \item Set $k \leftarrow k+1.$
  \item Stop if $ g(w^{k+1})-g(w^{k}) \leq \epsilon.$
  \end{enumerate}
  Return $\wopt^{\delta}= w^{k+1}$.
}

We next present the convergence analysis of Algorithm~\ref{alg:r-M} in the following proposition that we prove in Section~\ref{proof:conv-optim}. First, we recall the definition of a stationary point of a constrained optimization problem where the constraint set is convex.
\begin{definition}
Consider the following optimization problem
\begin{equation}
\label{eqn:stat-opt}
\minimize_{x: x \in C} f(x) 
\end{equation}
where $f : \R^{p} \to \R$ is differentiable but possibly non-convex, $C \subset \R^p$ is a closed convex set.
We call $x^*$ a stationary point of \eqref{eqn:stat-opt} if and only if 
\begin{equation}
\langle\nabla  f(x^*),(x-x^*)\rangle \geq 0 \text{ for all } x \in C.
\end{equation}
\end{definition}

\begin{proposition}
\label{prop:conv-optim}
Let $\{w^{k}\}_{k \geq 1}$ be the sequence of probability distributions generated by Algorithm~\ref{alg:r-M}, which solves \eqref{eqn:general-optim-lag} for some fixed $\delta$ and convergence tolerance $\epsilon=0$. If there exists a constant $A$ such that $|\theta(w)| \leq A$ for all $w \in W_n$, the unit cube in $n$-dimension, then we have:
\begin{enumerate}
\item The sequence $\{g(w^k)\}_{k \geq 1}$ is decreasing and converges.

\item In addition if all stationary points of \eqref{eqn:general-optim-lag} are isolated, then the sequence $\{w^{k}\}_{k \geq 1}$ converges and if $\lim_{k \to \infty} w^k= w^*_{\delta} \neq (\frac{1}{n}, \ldots, \frac{1}{n})$, then $w^*_{\delta}$ satisfies first order necessary conditions \eqref{eqn:wopt-form}, where the constraint in \eqref{eqn:r-value-finite} is replaced with $\theta(w)=\theta(w^*_{\delta})$. 
\end{enumerate}

\end{proposition}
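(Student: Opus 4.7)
The plan is to treat Algorithm~\ref{alg:r-M} as a classical majorization-minimization scheme: part (1) will follow from the MM descent inequality and elementary boundedness, while part (2) requires a Meyer/Zangwill-style argument exploiting the isolation hypothesis. For part (1), the construction of $G_L$ gives $g(w') \le G_L(w',w)$ with equality at $w'=w$, and by Proposition~\ref{prop:solve-upper-bound} the next iterate $w^{k+1}$ is the unique minimizer of $G_L(\cdot, w^k)$ over the simplex. The sandwich
\begin{equation*}
g(w^{k+1}) \le G_L(w^{k+1}, w^k) \le G_L(w^k, w^k) = g(w^k)
\end{equation*}
yields monotone descent. Boundedness below is automatic: $|\theta(w)| \le A$ by hypothesis and $\sum_i w_i \log w_i \ge -\log n$ on the simplex give $g(w) \ge -\delta(A+|c|) - \log n$, so $\{g(w^k)\}$ converges.

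For part (2), the key observation is that $w \mapsto G_L(w, w^k)$ is the sum of a linear term and $(L\delta+1) H(w)$, where $H(w) = \sum_i w_i \log w_i$ is the negative entropy whose Bregman divergence on the simplex equals the KL divergence. Pinsker's inequality thus makes $G_L(\cdot, w^k)$ $(L\delta+1)$-strongly convex in $\|\cdot\|_1$, and since $\|\cdot\|_1 \ge \|\cdot\|_2$ in finite dimensions we obtain the quantitative descent
\begin{equation*}
g(w^k) - g(w^{k+1}) \ge G_L(w^k, w^k) - G_L(w^{k+1}, w^k) \ge \frac{L\delta+1}{2} \, \|w^{k+1} - w^k\|_2^2.
\end{equation*}
Summing and using that $\{g(w^k)\}$ converges yields $w^{k+1} - w^k \to 0$. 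The explicit formula \eqref{eqn:next-iter} makes $M$ continuous on $\mathrm{int}(W_n)$ (iterates remain strictly positive by induction from a strictly positive initialization), so every interior cluster point $w^\star$ satisfies $M(w^\star)=w^\star$: if $w^{k_j} \to w^\star$, then $w^{k_j+1} = M(w^{k_j}) \to M(w^\star)$ and also $w^{k_j+1} \to w^\star$, forcing the fixed-point identity. Taking logarithms in that identity cancels the $(w^\star_i)^{L\delta/(1+L\delta)}$ and $\log w^\star_i$ contributions to leave $w^\star_i \propto e^{-\delta\, \nabla_i \theta(w^\star)}$, which is exactly \eqref{eqn:wopt-form} with Lagrange multiplier $\lambda = -\delta$; the degenerate constant-gradient case corresponds precisely to $w^\star = (1/n,\ldots,1/n)$, which is excluded by hypothesis.

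To upgrade subsequential to full convergence, I invoke the standard connectedness lemma: since $w^{k+1} - w^k \to 0$ and $\{w^k\}$ lies in the compact simplex, its cluster set is nonempty, closed, and connected. By the previous paragraph it consists entirely of stationary points of \eqref{eqn:general-optim-lag}, and the isolation hypothesis then forces this connected set to be a singleton, giving convergence of $\{w^k\}$. \emph{Main obstacle.} The most delicate point is guaranteeing that cluster points actually lie in $\mathrm{int}(W_n)$, since $\nabla \theta$ and the continuity argument for $M$ are only available there; a uniform lower bound on the components of the iterates, obtained from the multiplicative form of the update \eqref{eqn:next-iter} together with boundedness of $\nabla \theta$ on a sublevel set of $g$, closes this gap, but it is the step where Assumption~\ref{ass:continuous-differentiability}---which only controls $\theta$ on $\mathrm{int}(W_n)$---needs to be used with some care.
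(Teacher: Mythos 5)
Your proof is correct, and its overall architecture (MM descent, cluster points as fixed points of the continuous iteration map, Ostrowski-type connectedness plus isolated stationary points forcing a singleton cluster set, and reading off the exponential form from the fixed-point identity) coincides with the paper's, which follows Proposition 12.4.4 of Lange. The one genuinely different ingredient is how you obtain $\|w^{k+1}-w^k\|\to 0$: you exploit the fact that $G_L(\cdot,w^k)$ is a linear term plus $(L\delta+1)$ times the negative entropy, hence $(L\delta+1)$-strongly convex in $\|\cdot\|_1$ by Pinsker, and derive the quantitative sufficient-decrease bound $g(w^k)-g(w^{k+1})\ge \tfrac{L\delta+1}{2}\|w^{k+1}-w^k\|_2^2$, which telescopes. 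The paper instead argues by contradiction, extracting subsequences $w^{k_l}\to u$ and $w^{k_l+1}\to v$ with $v\neq u$ and using continuity of $M$ together with $g(M(u))=g(u)$ to force $v=u$. Your route is more quantitative (it gives summability of the squared step sizes, not just vanishing steps) at the cost of the small strong-convexity computation; the paper's route is softer but relies only on compactness and continuity. You also make two points explicit that the paper leaves implicit: the lower bound $g(w)\ge -\delta(A+|c|)-\log n$ justifying convergence of $\{g(w^k)\}$, and, more importantly, the fact that $\nabla\theta$ and the continuity of $M$ are only guaranteed on $\mathrm{int}(W_n)$, so one must check that the iterates stay uniformly bounded away from the boundary. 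Your sketch for closing that gap (a uniform lower bound on the components propagated through the multiplicative update, using boundedness of $\nabla\theta$, which here follows from $|\theta|\le A$ together with $M$-smoothness) is sound and is a genuine improvement in rigor over the paper's proof, which applies the continuity of $M$ on all of $W_n$ without comment.
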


Next we use grid search to find $\delta$ (typically increase the value of $\delta$) such that $\wopt^{\delta}$ satisfies $\theta(\wopt^{\delta})=c$. 
Below we summarize the algorithm to find a solution of \eqref{eqn:r-value-finite} that satisfies first order necessary conditions \eqref{eqn:wopt-form}.
\algbox{\label{alg:r-general}$s$-value for general estimands.
}{
  \textbf{Input:} Training distribution $P_0$, parameter $ \theta$ satisfying Assumption \ref{ass:continuous-differentiability}, convergence tolerance $\epsilon$ .
  \newline
  \textbf{Output:} First order stationary point of \eqref{eqn:r-value-finite}.
  \newline
Set $k \leftarrow 1$, initialize $\delta_0=0$, $\delta_1=2\gamma$ for some small $\gamma>0.$ 
\begin{enumerate}[1.]
  \item  Run Algorithm~\ref{alg:r-M} with $\delta=\delta_k$ and obtain the output of Algorithm~\ref{alg:r-M} as $\wopt^{\delta_k}$.
  
  \item If $|\theta(\wopt^{\delta_k})-c| \leq \epsilon$, stop and return $s(\theta-c,P_{0,n})=e^{-\sum_{i=1}^n (\wopt^{\delta_k})_i\log {n(\wopt^{\delta_k}})_i}$. 
 \item If $\theta(\wopt^{\delta_k}) > c+\epsilon$, set $\delta_{k+1}=2\delta_k$,  set $k \leftarrow k+1.$ and go to step 1.
 \item If $\theta(\wopt^{\delta_k}) < c-\epsilon$, do a binary search with $\delta$ lying between lower limit as $\delta_{\text{min}}=\delta_{k-1}$ and upper limit as $\delta_{\text{max}}=\delta_{k}$ till we obtain a $\delta$ such that $|\theta(\wopt^\delta)-c|\leq \epsilon.$
  \end{enumerate}
}

In practice, we are interested in obtaining $s$-values over an arbitrary range of constants $c$, in which case, we can just fix a range of values for the penalty $\delta$ in increasing order (say $\delta_0 <\delta_1< \ldots < \delta_P$ for some $P \in \Z_{+}$) and use Algorithm~\ref{alg:r-M} to obtain corresponding $s$-value for a given $\delta \in \{\delta_1, \ldots \delta_P\}$ where we can now use warm start to initialize the algorithm for $\delta_p$ using the final iterate of the algorithm for $\delta_{p-1}$. Such heuristics give efficiency gain in practice.


\begin{remark}
The above procedure generalizes to the directional case \eqref{eqn:r-condn}. However, it requires obtaining the conditional expectation of the gradient of the parameter $\theta$ with respect to the variable $E$. We can get an exact estimate of the conditional expectation when $E$ has finite support and similar analysis as above guarantees convergence of the iterates to local optima. However, if $E$ has infinite support (for example, $E$ is a continuous random variable) then we can only obtain an approximation of the conditional expectation using (say) any non-parametric regression method in which case we do not have a guaranteed convergence to local optima. In such situations, we can modify the problem by discretizing $E$ to have such guarantees. We give more details in the next section \ref{sec:alg-cond-s}.
\end{remark}
\subsection{Proof of Proposition~\ref{prop:conv-optim}}
\label{proof:conv-optim}

We next proceed to prove Proposition~\ref{prop:conv-optim}. The proof uses similar arguments as the proof of Proposition 12.4.4 of \citet{Lange13}. The proof builds on the following lemmas.

\begin{definition}[Cluster point of a sequence]
A point $w^*$ is a cluster point of a sequence $w^k$ provided there is a subsequence $w^{k_l}$ that tends to $w^*$.
\end{definition}
\begin{lemma}[Proposition 12.4.1, \citet{Lange13}]
\label{lem:12.4.1}
If a bounded sequence $w^k \in \R^n$ satisfies
\begin{align*}
\lim_{k \to \infty} \norm{w^{k+1}-w^k}=0,
\end{align*}
then its set $T$ of cluster points is connected. If $T$ is finite, then $T$ reduces to a single point, and $\lim_{k \to \infty} w^k =w^*$ exists.
\end{lemma}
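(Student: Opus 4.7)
The plan is to attack the two claims in order. First, I would observe that the set $T$ of cluster points is closed (in general) and, since the sequence is bounded, is contained in a compact ball; hence $T$ itself is compact. The core of the first claim is a contradiction argument: suppose $T$ is disconnected, so that $T = T_1 \cup T_2$ with $T_1, T_2$ disjoint, nonempty, closed (hence compact). Then $\eta := d(T_1, T_2) > 0$. Define the open neighborhoods $U_j = \{x : d(x, T_j) < \eta/3\}$ for $j=1,2$, which are disjoint. Each $U_j$ must contain infinitely many terms of the sequence, because $T_j \neq \emptyset$ and every point of $T_j$ is a limit of a subsequence.

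The key step is to exploit the small-step hypothesis $\|w^{k+1} - w^k\| \to 0$. Choose $K$ large enough that $\|w^{k+1} - w^k\| < \eta/3$ for all $k \geq K$. Then for $k \geq K$, no single step can jump directly from $U_1$ into $U_2$, since such a jump would require a displacement of at least $\eta/3$. Since the sequence visits both $U_1$ and $U_2$ infinitely often past index $K$, each crossing must pass through the complement $W := \mathbb{R}^n \setminus (U_1 \cup U_2)$, yielding infinitely many indices $k$ with $w^k \in W$. Intersecting $W$ with the compact ball containing the sequence gives a compact set, so this subsequence has a further subsequential limit $w^* \in W$. But then $w^*$ is a cluster point of $\{w^k\}$ that lies outside $T_1 \cup T_2 = T$, contradicting the definition of $T$.

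For the second claim, a finite connected subset of $\mathbb{R}^n$ (or any Hausdorff space) must be a singleton, because any finite set with more than one point can be partitioned into two nonempty clopen subsets in the subspace topology. So $T = \{w^*\}$. It then remains to show that a bounded sequence with a unique cluster point converges to that cluster point: if not, there would exist $\varepsilon > 0$ and a subsequence staying at distance $\geq \varepsilon$ from $w^*$; this subsequence, being bounded, would have its own cluster point distinct from $w^*$, contradicting uniqueness.

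The main obstacle is the first part, and specifically making sure the ``moat'' argument produces \emph{infinitely many} points of the sequence in $W$ (not just finitely many crossings). The hypothesis $\|w^{k+1} - w^k\| \to 0$ rules out direct jumps between $U_1$ and $U_2$, which is what forces each of the infinitely many crossings to contribute at least one term lying in $W$; the rest of the argument is then standard Bolzano--Weierstrass.
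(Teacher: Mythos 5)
Your proof is correct. Note, however, that the paper does not prove this statement at all: it is imported verbatim as Proposition~12.4.1 of the cited reference (Lange, 2013) and used as a black box in the convergence analysis of the MM algorithm, so there is no in-paper argument to compare against. What you give is the standard Ostrowski-type ``moat'' argument, and all the delicate points are handled correctly: the disjoint $\eta/3$-neighborhoods $U_1,U_2$ of the two closed pieces, the observation that once $\norm{w^{k+1}-w^k}<\eta/3$ no single step can pass from $U_1$ to $U_2$ (since such a step would change $d(\cdot,T_1)$ by more than $\eta/3$), the extraction of infinitely many terms in the closed set $W=\R^n\setminus(U_1\cup U_2)$ from the infinitely many alternations, and the resulting cluster point outside $T$. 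The finite-implies-singleton step and the uniqueness-implies-convergence step are likewise standard and correctly argued.
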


\begin{lemma}
Let $\Gamma$ be the set of cluster points generated by the MM sequence $w^{k+1}=M(w^k)$ starting from some initial $w^0$. then $\Gamma$ is contained in the set $S$ of stationary points of \eqref{eqn:general-optim-lag}.
\end{lemma}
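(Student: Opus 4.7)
The plan is to show that every cluster point $w^*$ of the MM iterates must be a fixed point of $M$, and then to verify that any such fixed point satisfies the first-order stationarity condition for $g$ on the simplex $S_n$. This follows the standard majorization--minimization convergence template, exploiting strict convexity of the majorizer $G_L(\cdot,w^*)$ together with continuity of $M$.

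First, I would establish that $\{g(w^k)\}$ is non-increasing and convergent. The majorization inequality \eqref{eqn:upper-bound-alg} combined with the tangency identity $G_L(w,w)=g(w)$ gives
\[
g(w^{k+1}) \;\le\; G_L(w^{k+1},w^k) \;\le\; G_L(w^k,w^k) \;=\; g(w^k).
\]
Since $|\theta|\le A$ on $W_n$ by hypothesis and $\sum_i w_i\log w_i$ is bounded below on $S_n$, $g$ is bounded below on the simplex, so the non-increasing sequence $g(w^k)$ converges to some $g^*\in\R$.

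Next, take any cluster point $w^*$ with $w^{k_l}\to w^*$. Continuity of $g$ (from continuous differentiability of $\theta$) and continuity of $M$ (the closed form \eqref{eqn:next-iter} is continuous in its argument, because the exponent $L\delta/(1+L\delta)\in(0,1)$ keeps $(w^k_i)^{L\delta/(1+L\delta)}$ continuous up to the boundary and the normalization is always positive) give $g(w^{k_l})\to g(w^*)$ and $g(M(w^{k_l}))=g(w^{k_l+1})\to g(M(w^*))$. Both subsequential limits of $\{g(w^k)\}$ must coincide with $g^*$, so $g(M(w^*))=g(w^*)$. Inserting this equality into the chain
\[
g(M(w^*))\;\le\;G_L(M(w^*),w^*)\;\le\;G_L(w^*,w^*)\;=\;g(w^*)
\]
forces $G_L(M(w^*),w^*)=G_L(w^*,w^*)$. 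Because $G_L(\cdot,w^*)$ is strictly convex in its first argument (the entropy term $\sum_i w_i\log w_i$ is strictly convex), its minimizer over $S_n$ is unique, so $M(w^*)=w^*$.

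Finally, I would translate $M(w^*)=w^*$ into stationarity of $g$ on $S_n$. Substituting into \eqref{eqn:next-iter} and using $1-L\delta/(1+L\delta)=1/(1+L\delta)$ to collect the $w^*_i$ factors yields $w^*_i\propto e^{-\delta\nabla_i\theta(w^*)}$, equivalently $\delta\nabla_i\theta(w^*)+\log w^*_i+1=\mu$ for a single $\mu\in\R$, together with $\sum_j w^*_j=1$. These are exactly the first-order KKT conditions $\langle\nabla g(w^*),w-w^*\rangle\ge 0$ for all $w\in S_n$ at an interior point of the simplex, so $w^*\in S$. The main technical obstacle I anticipate is the continuity of $M$ at the boundary of $S_n$; the choice of exponent in \eqref{eqn:next-iter} together with the entropic penalty in $g$, which discourages iterates from collapsing onto the boundary, is what makes the argument go through cleanly.
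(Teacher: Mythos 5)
Your proof is correct and follows essentially the same route as the paper: monotone descent of $g$ along the MM iterates, compactness of $W_n$, continuity of $M$ and $g$ to get $g(M(w^*))=g(w^*)$ at any cluster point, and then fixed-point $\Rightarrow$ stationarity. You are in fact slightly more careful than the paper, which asserts that $g(M(z))=g(z)$ implies $z$ is a fixed point without invoking the strict convexity of $G_L(\cdot,z)$ and uniqueness of its minimizer, a step you supply explicitly.
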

\begin{proof}
First observe that the iteration map $M$ in \eqref{eqn:next-iter} is continuous as $\theta$ is continuously differentiable. Now, the sequence $w^k$ stays within the compact set $W_n$. Consider a cluster point $z= \lim_{l \to \infty} w^{k_l}$. Since the sequence $g(w^k)$ is monotonically decreasing and bounded below, $\lim_{k \to \infty} g(w^k)$ exists. Hence, taking limits in the inequality $g(M(w^k)) \leq g(w^k)$ and using the continuity of functions $M$ and $g$ imply $g(M(z))=g(z)$. Thus, $z$ is a fixed point of $M$ and also a stationary point of \eqref{eqn:general-optim-lag}.
\end{proof}

\begin{lemma}
The set of cluster points $\Gamma$ of $w^{k+1}=M(w^k)$ is compact and connected.
\end{lemma}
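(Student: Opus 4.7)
The plan is to handle compactness by a short general argument and then reduce connectedness to Lemma 12.4.1 by verifying its hypothesis $\lim_{k\to\infty}\|w^{k+1}-w^k\|=0$.

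For compactness, the entire sequence $\{w^k\}$ lies in the probability simplex $S_n \subset W_n$, which is compact. The set of cluster points of a sequence in a metric space is always closed (it equals $\bigcap_K \overline{\{w^k : k \ge K\}}$), so $\Gamma$ is a closed subset of a compact set and therefore compact.

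For connectedness, the strategy is to invoke Lemma 12.4.1, so the task reduces to showing $\|w^{k+1}-w^k\|_2 \to 0$. The key observation is that the majorizer $G_L(\cdot, w^k)$ is strongly convex on the simplex: computing its Hessian in $w'$, the contributions from $\delta L \sum_i w'_i \log(w'_i/w_i^k)$ and $\sum_i w'_i \log w'_i$ give a diagonal matrix with entries $(1+\delta L)/w'_i \ge 1+\delta L$ (since $w'_i \le 1$), while the remaining terms of $G_L$ are affine in $w'$. Since $w^{k+1}$ is the (unique) minimizer of $G_L(\cdot, w^k)$ over the simplex by Proposition~\ref{prop:solve-upper-bound}, strong convexity applied at the minimizer yields
\begin{equation*}
G_L(w^k, w^k) - G_L(w^{k+1}, w^k) \ge \frac{1+\delta L}{2}\|w^{k+1}-w^k\|_2^2 .
\end{equation*}
Combined with the majorization chain $g(w^{k+1}) \le G_L(w^{k+1}, w^k)$ and the tightness $G_L(w^k, w^k) = g(w^k)$, this gives
\begin{equation*}
g(w^k) - g(w^{k+1}) \ge \frac{1+\delta L}{2}\|w^{k+1}-w^k\|_2^2 .
\end{equation*}
By Part~1 of the proposition, $\{g(w^k)\}$ is monotone decreasing and convergent, so the left-hand side is summable. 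Hence $\|w^{k+1}-w^k\|_2 \to 0$, and Lemma 12.4.1 concludes that $\Gamma$ is connected.

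The main obstacle I expect is just the clean verification of the strong convexity constant on the affine simplex, in particular confirming that the strong-convexity inequality at the constrained minimizer does not pick up a first-order defect from the Lagrange multiplier of $\sum_i w_i = 1$; this is routine since the constraint is affine and the stationarity condition absorbs that term. Everything else is a packaging of standard majorization-minimization bookkeeping together with the cited lemma from \citet{Lange13}.
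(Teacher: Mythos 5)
Your proof is correct, and it follows the same overall skeleton as the paper's: compactness is obtained exactly as you do (the cluster set is a closed subset of the compact simplex), and connectedness is reduced in both cases to verifying the hypothesis $\lim_{k\to\infty}\|w^{k+1}-w^k\|=0$ of Lemma~\ref{lem:12.4.1}. Where you genuinely diverge is in how that hypothesis is verified. The paper argues by contradiction: if the step sizes do not vanish, compactness yields a subsequence with $w^{k_l}\to u$ and $w^{k_l+1}\to v\neq u$; continuity of the iteration map $M$ forces $v=M(u)$, and convergence of $\{g(w^k)\}$ forces $g(M(u))=g(u)$, which by strict descent at non-fixed points forces $M(u)=u$, a contradiction. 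You instead prove a quantitative sufficient-decrease inequality, $g(w^k)-g(w^{k+1})\ge \frac{1+\delta L}{2}\|w^{k+1}-w^k\|_2^2$, from the $(1+\delta L)$-strong convexity of the majorizer $G_L(\cdot,w^k)$ on the simplex together with first-order optimality at the constrained minimizer, and then let the telescoping sum do the work. Your Hessian computation is right (the non-affine part of $G_L$ is $(1+\delta L)\sum_i w_i'\log w_i'$ up to terms linear in $w'$, giving diagonal entries $(1+\delta L)/w_i'\ge 1+\delta L$), and your remark that the affine constraint contributes no first-order defect is the correct resolution of the only delicate point. The trade-off: your route does not need continuity of $M$ (hence not even continuity of $\nabla\theta$ beyond what is needed to define the iterates) and yields an explicit decrease rate that could be reused elsewhere, while the paper's route avoids any second-order computation at the cost of an indirect argument. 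Both are valid proofs of the statement.
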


\begin{proof}
$\Gamma$ is a closed subset of the compact set $W_n$ and is hence, compact. By Lemma \ref{lem:12.4.1}, $\Gamma$ is connected provided $\lim_{k \to \infty} \norm{w^{k+1}-w^k}=0.$ If this sufficient condition fails, then by compactness of $W_n$, we can extract a subsequence $w^{k_l}$ such that $\lim_{l \to \infty} w^{k_l}=u$ and $\lim_{l \to \infty}w^{k_l +1}=v$ both exist, however, $v \neq u$. Further, continuity of function $M$ requires $v=M(u)$ while the descent condition implies 
\begin{align*}
g(v)=g(M(u))=g(u)=\lim_{k \to \infty} g(w^k).
\end{align*}
Hence, $u$ is a fixed point of $M$, which is a contradiction. Hence, the sufficient condition that $\lim_{k \to \infty} \norm{w^{k+1}-w^k}=0$ holds.
\end{proof}

From \eqref{eqn:upper-bound-alg}, we observe that $G_L(w',w)$ is strictly convex in $w'$ and hence, we have the following chain of inequalities
\begin{align}
g(w^{k+1}) \leq G_L(w^{k+1},w^k) <G_L(w^k,w^k) =g(w^k).
\end{align}
Since $g$ is lower bounded, hence the sequence $g(w^k)$ decreases and converges which proves 1. 

Now, if all stationary points of \eqref{eqn:general-optim-lag} are isolated and since the domain $W_n$ is compact, then there can only be a finite number of stationary points as an infinite number of them would admit a convergent sequence whose limit will not be isolated. Since, the set of cluster points $\Gamma$ of $w^{k+1}=M(w^k)$ is a connected subset of the finite set of stationary points, $\Gamma$ is a singleton, and hence, the bounded sequence $w^k$ has the single element of $\Gamma$ as its limit. Let $\lim_{k \to \infty} w^k= w^*$, then by Proposition \ref{prop:solve-upper-bound}, we have $w^*_i \propto e^{-\delta \nabla_i \theta(w^k)}$ for all $i=\{1,2, \ldots, n\}$. Hence, by Corollary~\ref{thm:optim-result-finite}, we have the result.

\section{Directional $s$-values of general estimands}
\label{sec:alg-cond-s}

Here, we want to obtain directional $s$-values (with respect to some variable $E$) as in \eqref{eqn:r-condn} for more general one dimensional parameters defined over the space of probability distributions, $\theta : \mc{P} \to \R$. We first characterize the form of a locally optimal solution of the optimization problem in \eqref{eqn:r-condn} and present algorithm to solve the corresponding optimization problem in Appendix \ref{sec:alg-cond-s-alg}.

We assume that random variable $E$ has finite support of size $K$ (say) and $E$ takes values in the set $\{e_1, \ldots, e_K\}$. We consider a finite sample setting where we observe $n$ samples $\{Z_i, E_i\}_{i=1}^n \overset{\text{i.i.d.}}{\sim} P_0$ for some distribution $P_0 \in \mc{P}$ where $\{Z_i,E_i\}_{i=1}^n$ are i.i.d.\ realizations of the random variable $(Z,E)$.  Let the empirical distribution of $\{Z_i, E_i\}_{i=1}^n $ be denoted by $P_{0,n}=\sum_{i=1}^n \frac{1}{n} \delta_i$, where $\delta_i$ is a dirac measure on $(Z_i, E_i)$. We recall that $W_n=[0,1]^n$ is $n$ dimensional unit cube and $S_n=\{w \in \R^n : w_0+ \ldots + w_n=1, w_i \geq 0 \text{ for } i=1, \ldots, n\}$ is $n$ dimensional probability simplex. Let $P_w$ denote the probability distribution corresponding to $w \in S_n$ that is, it puts mass $w_i$ on the $i$th sample. We focus on one dimensional parameter $\theta : S_n \to \R$ where we define for $w \in S_n$, $\theta(w) $ as $\theta(\sum_{i=1}^n w_i \delta_i)$.
With a slight abuse of notation from now on, we redefine $\theta$ on the $n$ dimensional unit cube $W_n$ as $\theta(w)=\theta \left( \frac{\sum_{i=1}^{n} w_{i} \delta_{i}}{\sum_i w_i} \right)  $ for $w \in W_n$.
We recall that we want to obtain the conditional $s$-value of parameter $\theta$ (with respect to the variable $E$) given by

\begin{align}
\label{eqn:r-value-finite-cond}
\begin{split}
   s_E(\theta-c,P_{0,n})=\exp\{-\min_{w \in W} \sum_{i=1}^n w_i \log(n w_i)\} &\hspace{0.1in} \text{ s.t. } \theta(w) = c, \text{ }\sum_{i=1}^n w_i=1 \text{ and }\\ &P_{0,n}(\cdot \mid E=e_k)= P_w(\cdot \mid E=e_k) \text{ for all } k\in [K].
   \end{split}
\end{align}
The constraints $P_{0,n}(\cdot \mid E=e_k)= P_w(\cdot \mid E=e_k) \text{ for all } k\in [K]$ are linear in weights $w$ that we justify next. Let $I_k$ denote the set of indices such that $E_j=e_k$ for all $j \in I_k$ and each $k \in [K]$, then we have for each $k \in [K]$ and $i \in I_k$
\begin{align}
\begin{split}
P_{0,n}(Z_i \mid E=e_k)&= P_w(Z_i \mid E=e_k)  \\
\implies \frac{w_i}{\sum_{j \in I_k} w_j}=\frac{1}{|I_k|}.
\end{split}
\end{align}
Hence, the above constraint implies that for each $k \in [K]$, all $w_i$ such that $i \in I_k$ are equal. That is, the constraints $P_{0,n}(\cdot \mid E=e_k)= P_w(\cdot \mid E=e_k) \text{ for all } k\in [K]$ are equivalent to the  constraint that $w_i = w_j$ for all $(i,j)$ with $E_i = E_j$. We can rewrite the above constraints by a collection of pairwise equality constraints using a minimum collection of functions $\mc{U}$ such that for any $u: W_n \to \R$  such that $u \in \mc{U}$, $u$ is given by $u(w)=w_a-w_b$ for some $a\neq b $ where $a,b \in [n]$. Hence, the above optimization problem belongs to the class of general constrained minimization problems with equality constraints (see Chapter 3 of \citet{Bertsekas99}). Now we present necessary and sufficient conditions for a point to be a local optimum of \eqref{eqn:r-value-finite-cond}, which can be used to verify that we obtained a locally optimal solution of our optimization problem \eqref{eqn:r-value-finite-cond}.
Let $M$ be a random variable taking values in the set $\{\nabla_1 \theta(w), \ldots, \nabla_n \theta(w)\}$. Now, for any given probability distribution $P\in \mc{P}$, let there be a probability distribution $Q$ such that $\{Z_i, E_i,M_i\}_{i=1}^n \overset{\text{i.i.d.}}{\sim}Q$ where $Q$ is the push-forward of $(Z,E) \sim P$, that is, $Q((Z,E,M)=(Z_i,E_i,\nabla_i \theta))=P((Z,E)=(Z_i,E_i))$ for $i \in [n]$ and $P \in \mc{P}$. In particular, we denote the push forward of $(Z,E) \sim P_0$ under the above mapping by $Q_0$.

We first give a necessary condition for a point to be a local optimum of \eqref{eqn:r-value-finite-cond} that follows from Proposition 3.1.1 of \citet{Bertsekas99}. 

\begin{corollary}[Necessary conditions]
\label{thm:optim-result-finite-cond}
Assume that $\theta : \mathrm{int}(W_n) \to \R$ is continuously differentiable. Let $w^* \in S_n$ be a locally optimal solution to problem \eqref{eqn:r-value-finite-cond}, and assume that there does not exist a constant $r \in \R$ such that $(\E_{Q_0}[M \mid E=e_1], \ldots, \E_{Q_0}[M \mid E=e_K])=r (1, \ldots, 1)$. Then there exists a constant $\lambda \in \R$ such that
\begin{equation}
w^*_i \propto e^{\lambda \E_{Q_0}[M \mid E=E_i]} \text{ for all } i=\{1, \ldots, n\}.
\end{equation}
\end{corollary}

\begin{proof}
Under the given assumption of Corollary~\ref{thm:optim-result-finite-cond}, the assumption that vectors $\nabla \theta$, $(1, \ldots, 1)$, $\nabla_w u$ for $u \in \mc{U}$ are linearly independent holds as otherwise we get a contradiction.
Now, without loss of generality, we assume that $Z_i$'s are distinct and let $E_1=E_2=\ldots=E_m=e_1$ for some $m <n$. We show that 
\begin{align*}
w_1=w_2=\ldots=w_m \propto e^{\lambda \E_{P_0}[M \mid E=e_1]} .
\end{align*}
We need to take the derivative of the Lagrangian \eqref{eqn:lag-cond}. Without loss of generality, let the functions in $\mc{U}$ corresponding to the pair wise equality of $w_1,w_2, \ldots, w_m$ be given by 
\begin{align*}
u_1(w)&=w_1-w_2 \\
u_2(w)&=w_1-w_3 \\
u_3(w)&=w_1-w_3 \\
&\vdots\\
u_{m-1}(w)&=w_1-w_m. \\
\end{align*}
Other functions $u \in \mc{U}$ do not depend on any of $w_1, \ldots, w_m$.

Hence, the Lagrangian now becomes

\begin{align}
\begin{split}
h(w,\delta, \mu)= \sum_{i=1}^{n} w_{i} \log \left( w_{i} \right) +\delta (\theta \left( w\right)-c) &+\mu(\sum_{i=1}^n  w_i-1) +\sum_{k=1}^{m-1}\alpha_i (w_1-w_{i+1})+ \sum_{u \in \mc{U}-\{u_1, \ldots, u_{m-1}\}} \alpha_u u\\
& \text{ for } w \in W_n \text{, and } \delta,\mu, \alpha_u \in \R.
\end{split}
\end{align}
Taking partial derivatives of $h$ with respect to $w_1, \ldots, w_m$, we get

\begin{align*}
\log w_1+1+\delta \nabla_1\theta(w) +\mu+\alpha_1+\ldots+\alpha_{m-1}&=0\\
\log w_2+1+\delta \nabla_2\theta(w) +\mu-\alpha_1&=0\\
&\vdots\\
\log w_m+\delta \nabla_m\theta(w) +\mu-\alpha_{m-1}&=0.
\end{align*}
Now, invoking the constraint $w_1=\ldots=w_m$, and adding the above equations, the result follows from Proposition 3.1.1 of \citet{Bertsekas99}.
\end{proof}

We next present the sufficient condition for a point to be local optima of \eqref{eqn:r-value-finite-cond} that again follows from Proposition 3.2.1 of \citet{Bertsekas99}. To that end, we introduce the Lagrangian function $h : \R^n\times \R \times \R \to \R$ that we define as
\begin{equation}
\label{eqn:lag-cond}
h(w,\delta, \mu)= \sum_{i=1}^{n} w_{i} \log \left( w_{i} \right) +\delta (\theta \left( w\right)-c) +\mu(\sum_{i=1}^n  w_i-1) +\sum_{u \in \mc{U}} \alpha_u u \text{ for } w \in W_n \text{, and } \delta,\mu, \alpha_u \in \R.
\end{equation}
\begin{corollary}[Second order sufficiency conditions]
Assume that $\theta  : \mathrm{int}(W_n) \to \R$ is twice continuously differentiable, and let $w^* \in W_n$, $\delta^*,\mu^* \in \R$ and $\alpha^* \in \R^{|U|}$ satisfy
\begin{align*}
&\nabla_w h(w^*,\delta^*,\mu^*, \alpha^*)=0, \text{ } \nabla_{\delta,\mu, \alpha} h(w^*,\delta^*,\mu^*, \alpha^*)=0, \\
& \gamma' \nabla_{ww}^2 h(w^*,\delta^*, \mu^*, \alpha^*)\gamma>0, \text{ for all }\gamma \neq 0 \text{ with } \\
&\nabla \theta(w^*)'\gamma=0, \sum_{i=1}^n \gamma_i=0   \text{ and } \nabla u(w^*)'\gamma=0 \text{ for all } u \in \mc{U}.
\end{align*}
Then $w^*$ is a strict local optima of \eqref{eqn:r-value-finite-cond}.
\end{corollary}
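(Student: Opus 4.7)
The strategy is to recognize \eqref{eqn:r-value-finite-cond} as a standard smooth equality-constrained nonlinear program on $\mathrm{int}(W_n)$ and apply Proposition 3.2.1 of \citet{Bertsekas99} directly, exactly as was done for the unconditional second-order sufficiency corollary earlier in this appendix. The hypotheses of the corollary have been written so as to line up with the hypotheses of that general result after a change of notation.

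First I would assemble the constraint set explicitly. In addition to $\theta(w) = c$ and the simplex constraint $\sum_{i=1}^n w_i = 1$, the conditional constraints $P_{0,n}(\cdot\mid E = e_k) = P_w(\cdot\mid E = e_k)$ have, as explained just after \eqref{eqn:r-value-finite-cond}, already been reduced to finitely many linear pairwise equalities $u(w) = w_a - w_b = 0$ indexed by $u \in \mc{U}$. The feasible set is therefore the zero set of a finite collection of $C^2$ equality constraints, and $h$ in \eqref{eqn:lag-cond} is precisely the associated Lagrangian, with multiplier vector $(\delta^*, \mu^*, \alpha^*)$.

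Second, I would verify the smoothness and first-order requirements of Bertsekas' theorem. By hypothesis $\theta$ is twice continuously differentiable on $\mathrm{int}(W_n)$; the entropy $w \mapsto \sum_i w_i \log w_i$ is $C^\infty$ there; and the remaining constraints are linear, hence smooth. The KKT stationarity condition $\nabla_w h(w^*, \delta^*, \mu^*, \alpha^*) = 0$ and the primal feasibility $\nabla_{(\delta, \mu, \alpha)} h(w^*, \delta^*, \mu^*, \alpha^*) = 0$ are both listed as assumptions. The subspace of vectors $\gamma \neq 0$ satisfying $\nabla \theta(w^*)^\intercal \gamma = 0$, $\sum_i \gamma_i = 0$, and $\nabla u(w^*)^\intercal \gamma = 0$ for every $u \in \mc{U}$ is exactly the null space of the Jacobian of the constraint map at $w^*$, i.e.\ the tangent space to the linearized constraint surface. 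Positive definiteness of $\nabla_{ww}^2 h(w^*, \delta^*, \mu^*, \alpha^*)$ on this subspace is precisely the curvature hypothesis of Proposition 3.2.1 of \citet{Bertsekas99}; invoking that proposition yields that $w^*$ is a strict local optimum of \eqref{eqn:r-value-finite-cond}, which is the claim.

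The main delicate point is the boundary behaviour of the entropy. If some coordinate $w^*_i$ equals $0$, the objective fails to be differentiable at $w^*$ and the partial derivative $\log w^*_i + 1$ is undefined, so Bertsekas' smooth theorem cannot literally be invoked. However, the stated assumption $\nabla_w h(w^*, \delta^*, \mu^*, \alpha^*) = 0$ already implicitly forces $w^* \in \mathrm{int}(W_n)$, matching the interior-domain assumption on $\theta$. Under this reading, no additional work is required beyond the notational translation to Bertsekas' framework, and in particular no constraint qualification is needed because second-order sufficient conditions in Proposition 3.2.1 are self-contained.
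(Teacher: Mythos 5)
Your proposal is correct and follows exactly the route the paper intends: the paper offers no written proof beyond the remark that the corollary ``follows from Proposition 3.2.1 of \citet{Bertsekas99}'', and your write-up simply spells out that translation — reducing the conditional constraints to the linear pairwise equalities $u(w)=w_a-w_b$, identifying $h$ as the Lagrangian and the stated subspace as the null space of the constraint Jacobian, and invoking the second-order sufficiency theorem. Your additional observation that the stationarity hypothesis $\nabla_w h(w^*,\delta^*,\mu^*,\alpha^*)=0$ implicitly forces $w^*$ into the interior of $W_n$ (so the entropy term is smooth where needed) is a sensible clarification that the paper leaves implicit.
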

Next, we present a Majorization-minimization based algorithm to solve \eqref{eqn:r-value-finite-cond} that relies on this characterization.

\subsection{Algorithms to obtain directional $s$-values of general estimands}
\label{sec:alg-cond-s-alg}
Here, we solve the optimization problem in \eqref{eqn:r-value-finite-cond}.
Following similar arguments as in Section~\ref{sec:alg}, we solve the Lagrangian form given by

\begin{equation}
\label{eqn:general-optim-lag-cond}
  \minimize_{w_{1},\ldots,w_{n}, w_{i} \ge 0, \sum w_{i} =1, P_{0,n}(\cdot \mid E)=P_w(\cdot \mid E)}  g(w)=\delta (\theta \left( w\right)-c) + \sum_{i=1}^{n} w_{i} \log \left( w_{i} \right).
\end{equation}

We solve \eqref{eqn:general-optim-lag-cond} using Majorization-Minimization algorithm. We obtain the majorizer of the objective function in \eqref{eqn:general-optim-lag-cond} using \eqref{eqn:upper-bound} under Assumption~\ref{ass:continuous-differentiability} as follows

\begin{align}
\label{eqn:upper-bound-alg-cond}
g(w') \leq G_{L}(w',w) \defeq \delta \left(\theta(w)-c + \langle \nabla \theta(w), w'-w \rangle + L\sum_{i=1}^n w_i'\log \frac{w_i'}{w_i}\right) +\sum_{i=1}^n w_i'\log w_i'
\end{align}
for $w,w' \in W_n$.

First we observe that $\langle \nabla \theta(w), w'-w \rangle= \E_{Q_{w'}}[M]-\E_{Q_w}[M]$.  Now we want to minimize the right hand side of inequality \eqref{eqn:general-optim-lag-cond} with respect to $w'$ under the additional constraint $P_{0,n}(\cdot \mid E=e_k)=P_{w'}(\cdot \mid E=e_k)$ for all $k\in [K]$ which gives 
\begin{align*}
\langle \nabla \theta(w), w'-w \rangle 
&= \E_{Q_{w'}}[M]-\E_{Q_w}[M]  \\
&= \E_{Q_{w'}}[\E_{Q_{w'}}[M \mid E]]-\E_{Q_w}[M] \\
&=\E_{Q_{w'}}[\E_{Q_{0,n}}[M \mid E]]-\E_{Q_w}[M] .
\end{align*}
Hence, under the additional constraint, the majorizer now becomes

\begin{align}
\label{eqn:upper-bound-alg-cond-cons}
\begin{split}
g(w') \leq   G_{L}(w',w) \defeq \delta  & \left(\theta(w)-c + \E_{Q_{w'}}[\E_{Q_{0,n}}[M \mid E]]-\E_{Q_w}[M] 
+ L\sum_{i=1}^n w_i'\log \frac{w_i'}{w_i}\right) \\
& +\sum_{i=1}^n w_i'\log w_i'
\end{split}
\end{align}
for $w,w' \in W_n$.

We next show that minimizing the majorizer $G_L(w',w)$ actually involves solving a $K$-dimensional convex optimization problem. The random variable $E$ takes values in the set $\{e_1, \ldots, e_K\}$. Suppose out of the $n$ realizations $\{Z_i, E_i\}_{i=1}^n$, $e_k$ occurs $n_k$ times for $k \in [K]$ and $\sum_{k=1}^Kn_k=n$. Now under the constraint $P_{0,n}(\cdot \mid E)=P_{w'}(\cdot \mid E)$, it is equivalent to considering only probability distributions on the set $\{e_1, \ldots, e_K\}$ as conditional on $E=e_k$ for any $k \in [K]$, the corresponding samples are equally likely to occur. Hence, now we can restrict our domain to $K$ dimensional unit cube $W_K$ and minimizing the majorizer in \eqref{eqn:upper-bound-alg-cond-cons} is equivalent to solving the following optimization problem

\begin{equation}
\minimize_{v' \in W_K, \sum_{k=1}^Kv_k^{'}=1} \delta \left(\sum_{k=1}^K v_k^{'} \E_{Q_{0,n}}[M \mid E=e_k] +L \sum_{k=1}^K v_k^{'}\log \frac{v_k^{'}}{v_k}\right)+\sum_{k=1}^K v_k^{'}\log \frac{v_k^{'}}{n_k}\
\end{equation}
which is a $K$ dimensional convex optimization problem. Hence, the convergence analysis follows as in Section~\ref{sec:alg}.

If the variable $E$ is continuous-valued, then we can discretize $E$ to use the similar procedure as outlined above or use any non-parametric estimator to approximate the conditional expectation $\E_{Q_0}[M \mid E]$.

\section{Confidence intervals}\label{sec:proof-ci}

In this section, we prove a general theorem that contains the asymptotic normality results from Section~\ref{sec:cons-norm} as a special case. In particular, Lemma~\ref{lem:asy-normality-mean} can be recovered with $E=Z = \ell(\eta,Z)$ and $\hat f_n(E) = E$, $f(E) = E$. Lemma~\ref{lem:asy-normality-cmean} can be recovered with $Z = \ell(\eta,Z)$.
\begin{theorem}
  Let $\hat f_n(\cdot) $ be an estimate of $f(\cdot) = \mathbb{E}_{P_0}[\ell(\eta,Z)|E=\cdot]$. We assume that $\hat f_n$ and $\hat \lambda$ are fit on a held-out portion of the data set, that is $\hat f_n(\cdot)$ and $\hat \lambda$ are independent of $(Z_i,E_i), i=1,\ldots,n$. We assume that $ \sup_{e \in \mathcal{E}} | \hat f_n(e) - f(e)  | = o_P(n^{-1/4})$. Furthermore, we assume that the moment generating function of $\ell(\eta,Z_i)$ is finite on $\mathbb{R}^p$ and that $\mathbb{E}[f(E) f(E)^\intercal e^{(\lambda^*)^\intercal  f(E)}] > 0$.  Let $\hat \lambda = \arg \min \frac{1}{n} \sum_{i=1}^n e^{\lambda^\intercal \hat f_n(E_i)}$ and $\lambda^* = \arg \min \mathbb{E}_{P_0}[e^{(\lambda)^\intercal f(E)}]$. Then,
  \begin{equation*}
   \frac{1}{n} \sum_{i=1}^n (1 + \hat \lambda^\intercal \ell(\eta,Z_i) - \hat \lambda^\intercal \hat f_n(E_i)) e^{\hat \lambda^\intercal \hat f_n(E_i)}    - \mathbb{E}_{P_0}[e^{(\lambda^*)^\intercal f(E)}] \stackrel{d}{=} \mathcal{N} \left(0,\frac{\sigma^2}{n} \right) + o_P(1/n),
  \end{equation*}
  where
  \begin{equation*}
     \sigma^2 =  \text{Var}_{P_0}(e^{\lambda^* f(E)})  +  \text{Var}_{P_0}( e^{(\lambda^*)^\intercal f(E)} (\lambda^*)^\intercal ( \ell(\eta,Z) - f(E))).
  \end{equation*}
  \end{theorem}
  
  \begin{proof}
  Using Lemma~\ref{lem:consistency-mean-cond}, we have $\hat \lambda \rightarrow \lambda^*$ in probability. By definition of $\hat \lambda$,
  \begin{equation*}
    \frac{1}{n} \sum_{i=1}^n  \hat f_n(E_i) e^{\hat \lambda^\intercal \hat f_n(E_i)} = 0.
  \end{equation*}
  Using a Taylor expansion on the left, 
  \begin{equation*}
      \frac{1}{n} \sum_{i=1}^n f(E_i) e^{\hat \lambda^\intercal  f(E_i)} = o_P(n^{-1/4}).
  \end{equation*}
  Thus, using another Taylor expansion,
  \begin{equation*}
      (\hat \lambda - \lambda^*)^\intercal \frac{1}{n}\sum_{i=1}^n f(E_i) f(E_i)^\intercal e^{(\lambda^*)^\intercal  f(E_i)} = 
      O_P(\|\hat \lambda - \lambda^*\|_2^2 + \frac{1}{\sqrt{n}})  + o_P(n^{-1/4}).
  \end{equation*}
  Since $\hat \lambda - \lambda^* \rightarrow 0$ and $\mathbb{E}[ f(E) f(E)^\intercal e^{ (\lambda^*)^\intercal  f(E)}] > 0$ we have
  \begin{equation*}
      \hat \lambda - \lambda^* = o_P(n^{-1/4}).
  \end{equation*}
  To summarize, we know that $\hat \lambda - \lambda^* = o_P(n^{-1/4})$ and that $ \sup_{e \in \mathcal{E}} | \hat f_n(e) - f(e) | = o_P(n^{-1/4})$. 
  Thus,
  \begin{align*}
        \frac{1}{n} \sum_{i=1}^n e^{\hat \lambda^\intercal \hat f_n(E_i)} - \mathbb{E}[e^{(\lambda^*)^\intercal f(E)}] &=  (\hat \lambda - \lambda^0)^\intercal \frac{1}{n} \sum_{i=1}^n f(E_i) e^{(\lambda^*)^\intercal f(E_i)} \\
    &+ \frac{1}{n} \sum_{i=1}^n (\hat \lambda)^\intercal (\hat f_n(E_i)-f(E_i)) e^{( \hat \lambda)^\intercal \hat f_n(E_i)} \\
    & + \frac{1}{n} \sum_{i=1}^n e^{(\lambda^*)^\intercal f(E_i)} - \mathbb{E}_{P_0}[e^{(\lambda^*)^\intercal f(E)}] + o_P(n^{-1/2}) \\
  \end{align*}
  Using the CLT, the first term goes to zero at rate $O_{P}(n^{-3/4})$. The last term can be computed with a CLT (since we assume that the moment generating function is finite, the variance is finite). Let us focus on the second term. 
  Note that the second term in the previous equation can be re-written:
  \begin{align*}
     \frac{1}{n} \sum_{i=1}^n (\hat \lambda)^\intercal  ( \hat f_n(E_i) -  f(E_i)) e^{(\hat \lambda)^\intercal \hat f_n(E_i) } &= \frac{1}{n} \sum_{i=1}^n (\hat \lambda)^\intercal  ( \ell(\eta,Z_i) -  f(E_i)) e^{(\hat \lambda)^\intercal \hat f_n(E_i) }  \\
    &+ \frac{1}{n} \sum_{i=1}^n (\hat \lambda)^\intercal ( \hat f_n(E_i) -  \ell(\eta,Z_i)) e^{(\hat \lambda)^\intercal \hat f_n(E_i) } 
  \end{align*}
Using the last two equations,
\begin{align}\label{eq:big-form}
  \begin{split}
  &\frac{1}{n} \sum_{i=1}^n (1 + \hat \lambda^\intercal \ell(\eta,Z_i) - \hat \lambda^\intercal \hat f_n(E_i)) e^{\hat \lambda^\intercal \hat f_n(E_i)}    - \mathbb{E}_{P_0}[e^{(\lambda^*)^\intercal f(E)}]\\
  &= \frac{1}{n} \sum_{i=1}^n (\hat \lambda)^\intercal  ( \ell(\eta,Z_i) -  f(E_i)) e^{(\hat \lambda)^\intercal \hat f_n(E_i) } + \frac{1}{n} \sum_{i=1}^n e^{(\lambda^*)^\intercal f(E_i)} - \mathbb{E}_{P_0}[e^{(\lambda^*)^\intercal f(E)}] + o_P(n^{-1/2})
  \end{split}
\end{align}
Using that 
  \begin{equation*}
  \mathbb{E}_{P_0}[  ( \ell(\eta,Z) -  f(E)) | E ] = 0,
  \end{equation*}
  by conditioning on the $E_1,\ldots,E_n$ using a CLT we get
  \begin{equation*}
      \frac{1}{n} \sum_{i=1}^n   ( \ell(\eta,Z_i) -  f(E_i)) e^{(\hat \lambda)^\intercal \hat f_n(E_i) } = \frac{1}{n} \sum_{i=1}^n   ( \ell(\eta,Z_i) -  f(E_i)) e^{(\lambda^*)^\intercal f(E_i) } + o_P(1/\sqrt{n}).
  \end{equation*}
  Here we used that $\hat f_n$ is computed on a separate data set and thus is independent of $(Z_i,E_i)$, $i=1,\ldots,n$. Furthermore, we used that $\hat \lambda$ depends on the $(Z_i,E_i)$ only through the $E_i$. Using this in equation~\eqref{eq:big-form}, we get
  \begin{align*}
      &\frac{1}{n} \sum_{i=1}^n (1 + \hat \lambda^\intercal \ell(\eta,Z_i) - \hat \lambda^\intercal \hat f_n(E_i)) e^{\hat \lambda^\intercal \hat f_n(E_i)}    - \mathbb{E}_{P_0}[e^{(\lambda^*)^\intercal f(E)}] \\
      &= \frac{1}{n} \sum_{i=1}^n (  \lambda^*)^\intercal  ( \ell(\eta,Z_i) -  f(E_i)) e^{( \lambda^*)^\intercal  f(E_i) } \\
      &+ \frac{1}{n} \sum_{i=1}^n e^{(\lambda^*)^\intercal f(E_i)} - \mathbb{E}_{P_0}[e^{(\lambda^*)^\intercal f(E)}] + o_P(1/\sqrt{n})
  \end{align*}
  Thus, asymptotically, we have that
  \begin{equation*}
  \frac{1}{n} \sum_{i=1}^n (1 + \hat \lambda^\intercal \ell(\eta,Z_i) - \hat \lambda^\intercal \hat f_n(E_i)) e^{\hat \lambda^\intercal \hat f_n(E_i)}    - \mathbb{E}_{P_0}[e^{(\lambda^*)^\intercal f(E)}]  \stackrel{d}{=} \mathcal{N}(0,\sigma^2) + o_P(1/\sqrt{n}),
  \end{equation*}
  where
  \begin{equation*}
     \sigma^2 = \frac{1}{n} \text{Var}(e^{(\lambda^*)^\intercal f(E_i)})  + \frac{1}{n} \text{Var}( e^{(\lambda^*)^\intercal f(E_i)} (\lambda^*)^\intercal ( \ell(\eta,Z) - f(E))).
  \end{equation*}
  \end{proof}

\section{Other technical proofs and appendices}

%

\begin{theorem}\label{thm:uniform-convex}[Theorem II.1, \citet{AndersenGi82}]
Let $E$ be an open convex subset of $\R^p$ and let $F_1, F_2, \ldots,$ be a sequence of random concave functions on $E$ such that $F_n(x) \overset{P}{\to} f(x)$ as $n \to \infty$ for every $x \in E$, where $f$ is some real function on $E$. Then $f$ is also concave and for all compact $A \subset E$,
\begin{equation*}
\sup_{x \in A} |F_{n}(x)-f(x)| \overset{P}{\to} 0 \text{ as } n \to \infty.
\end{equation*} 
\end{theorem}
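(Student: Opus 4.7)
The plan is to reduce the probabilistic statement to a purely deterministic fact about concave functions on an open convex set and then lift it by a subsequence argument.

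First I would verify that $f$ is itself concave. For any $x,y \in E$ and $t \in [0,1]$, the concavity inequality $F_n(tx+(1-t)y) \ge t F_n(x) + (1-t) F_n(y)$ holds for every $n$. Since each of the three relevant values converges in probability to the corresponding value of $f$, passing to an almost-surely convergent subsequence and taking limits preserves the inequality, yielding concavity of $f$ on $E$.

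Next I would isolate the deterministic heart of the theorem: \emph{if $g_n : E \to \R$ are concave functions, $g : E \to \R$ is concave, and $g_n(x) \to g(x)$ for every $x$ in some countable dense subset $D$ of $E$, then $\sup_{x \in A}|g_n(x)-g(x)| \to 0$ for every compact $A \subset E$.} The proof of this fact exploits that concave functions on an open convex set are locally Lipschitz, with Lipschitz constant controlled by local bounds. Concretely, for a compact $A \subset E$ one selects a slightly enlarged compact $A' \subset E$ with $A \subset \operatorname{int}(A')$, covers $A'$ by simplices whose vertices lie in $D$, and uses pointwise convergence at these finitely many vertices together with concavity to get uniform upper and lower bounds on $g_n$ over $A'$. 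These bounds translate into uniform Lipschitz estimates on $A$, giving equicontinuity of $\{g_n\}$ on $A$. Combined with pointwise convergence on the dense set $D \cap A$, a standard $\varepsilon/3$ argument upgrades pointwise convergence to uniform convergence on $A$.

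To pass from this deterministic result to the in-probability statement, I would invoke the subsequence principle. Fix a countable dense subset $D \subset E$. Given any subsequence of $(F_n)$, the hypothesis $F_n(x) \overset{P}{\to} f(x)$ for every $x \in D$ combined with a diagonal extraction produces a further subsequence $(F_{n_k})$ along which $F_{n_k}(x) \to f(x)$ almost surely for every $x \in D$ simultaneously. On this almost sure event, the deterministic lemma applies to the concave functions $F_{n_k}$ and yields $\sup_{x \in A}|F_{n_k}(x) - f(x)| \to 0$ for every compact $A \subset E$. Since every subsequence has a further subsequence along which $\sup_{x \in A}|F_n - f|$ tends to $0$ almost surely, the original sequence $\sup_{x \in A}|F_n - f|$ converges to $0$ in probability.

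The main obstacle is the deterministic equicontinuity argument, where one must use concavity and local boundedness carefully to get uniform Lipschitz control on a compact subset of the open set $E$; everything else is a standard subsequence reduction.
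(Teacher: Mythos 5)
Your proposal is correct and follows essentially the same route as the source the paper cites (Andersen and Gill's Theorem II.1) and as the paper's own proofs of the analogous results Lemma~\ref{lem:gen-rockafellar-10.8} and Lemma~\ref{lem:unif-conv-convex}: a deterministic Rockafellar-style equi-Lipschitz argument showing that pointwise convergence of concave functions on a countable dense set implies uniform convergence on compacts, lifted to convergence in probability via a diagonal subsequence extraction. The paper itself states this theorem without proof as an imported result, so there is nothing further to compare.
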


\begin{corollary}\label{cor:argmin-conv}[Corollary II.1, \citet{AndersenGi82}]
Let $E$ be an open convex subset of $\R^p$ and let $F_1, F_2, \ldots,$ be a sequence of random concave functions on $E$ such that $F_n(x) \overset{P}{\to} f(x)$ as $n \to \infty$ for every $x \in E$, where $f$ is some real function on $E$. Suppose $f$ has a unique maximum at $\hat x \in E$. Let $\hat X_n$ maximize $F_n$. Then $\hat X_n \overset{P}{\to} \hat x$ as $n \to \infty$.
\end{corollary}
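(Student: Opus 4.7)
The plan is to combine the uniform-on-compacta convergence in Theorem~\ref{thm:uniform-convex} with a separation argument at the unique maximum, and, crucially, to exploit the concavity of $F_n$ itself to rule out far-away maximizers. First I would invoke Theorem~\ref{thm:uniform-convex} to conclude two things: (i) the pointwise limit $f$ is concave on $E$, hence continuous (concave functions on an open convex subset of $\R^p$ are continuous); and (ii) for any compact $A \subset E$, $\sup_{x \in A}|F_n(x) - f(x)| \overset{P}{\to} 0$.

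Next, fix an arbitrary $\epsilon>0$ small enough that the closed ball $\overline{B}(\hat x, \epsilon) \subset E$. Since $\hat x$ is the unique maximizer of $f$ and $\partial B(\hat x, \epsilon)$ is compact, set
\[
\delta := f(\hat x) - \max_{y \in \partial B(\hat x,\epsilon)} f(y) > 0.
\]
By the uniform convergence on $\overline{B}(\hat x, \epsilon)$, with probability tending to one we have $\sup_{y \in \overline{B}(\hat x,\epsilon)}|F_n(y)-f(y)| < \delta/3$. On this event,
\[
F_n(\hat x) > f(\hat x) - \delta/3 \quad\text{and}\quad F_n(y) < f(\hat x) - 2\delta/3 \;\;\text{for all } y \in \partial B(\hat x,\epsilon),
\]
so $F_n(\hat x) > F_n(y)$ for every $y \in \partial B(\hat x,\epsilon)$.

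Finally I would use concavity of $F_n$ to force $\hat X_n \in \overline{B}(\hat x,\epsilon)$. If $\hat X_n$ lay outside this ball, the segment from $\hat x$ to $\hat X_n$ would cross $\partial B(\hat x,\epsilon)$ at some point $y^\star = t\hat X_n + (1-t)\hat x$ with $t\in(0,1)$. Concavity of $F_n$ gives
\[
F_n(y^\star) \ge t F_n(\hat X_n) + (1-t) F_n(\hat x) \ge F_n(\hat x),
\]
where the second inequality uses that $\hat X_n$ maximizes $F_n$. This contradicts the strict dominance on $\partial B(\hat x,\epsilon)$ established above. Hence $\hat X_n \in \overline{B}(\hat x,\epsilon)$ on the high-probability event, and since $\epsilon$ was arbitrary, $\hat X_n \overset{P}{\to} \hat x$.

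The only delicate point is why pointwise dominance near $\hat x$ excludes \emph{global} maximizers of $F_n$ drifting away (note that $E$ itself need not be compact). Concavity of $F_n$ is exactly what supplies this control via the line-segment argument above; without it, $F_n$ could in principle develop spurious far-away peaks despite approximating $f$ well on a neighborhood of $\hat x$.
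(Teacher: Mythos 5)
Your proof is correct. Note that the paper itself does not prove this statement --- it is quoted verbatim as Corollary II.1 of \citet{AndersenGi82} and used as an imported tool --- so there is no in-paper argument to compare against; your argument is essentially the standard one underlying the Andersen--Gill result (uniform convergence on the compact sphere $\partial B(\hat x,\epsilon)$ from Theorem~\ref{thm:uniform-convex}, strict separation $\delta>0$ from uniqueness of the maximizer and continuity of the concave limit, and the line-segment concavity argument to exclude maximizers outside the ball). The one step worth making explicit is that the segment from $\hat x$ to $\hat X_n$ lies in $E$ because $E$ is convex, so concavity of $F_n$ applies along it; with that observation, and the fact that $\epsilon>0$ was arbitrary, the conclusion $\hat X_n \overset{P}{\to} \hat x$ follows exactly as you wrote.
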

\begin{corollary}
\label{cor:min-conv}
Let $E$ be an open convex subset of $\R^p$ and let $F_1, F_2, \ldots,$ be a sequence of random concave functions on $E$ such that $F_n(x) \overset{P}{\to} f(x)$ as $n \to \infty$ for every $x \in E$, where $f$ is some real function on $E$. Suppose $f$ has a unique maximum at $\hat x \in E$. Let $\hat X_n$ maximize $F_n$. Then $F_n(\hat X_n) \overset{P}{\to} f(\hat x)$ as $n \to \infty$.
\end{corollary}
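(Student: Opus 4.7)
The plan is to decompose the error using the triangle inequality:
\begin{equation*}
  |F_n(\hat X_n) - f(\hat x)| \;\le\; |F_n(\hat X_n) - f(\hat X_n)| \;+\; |f(\hat X_n) - f(\hat x)|,
\end{equation*}
and then show both terms converge to zero in probability. The first term will be controlled by uniform convergence on compacta (Theorem~\ref{thm:uniform-convex}), and the second by continuity of $f$ together with Corollary~\ref{cor:argmin-conv}.

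For the second term, I would first note that, as the pointwise limit of concave functions, $f$ is concave on the open convex set $E$, and hence continuous there (a standard fact for concave functions on open convex subsets of $\R^{p}$). Corollary~\ref{cor:argmin-conv} gives $\hat X_n \overset{P}{\to} \hat x$. By the continuous mapping theorem, $f(\hat X_n) \overset{P}{\to} f(\hat x)$, so $|f(\hat X_n) - f(\hat x)| \overset{P}{\to} 0$.

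For the first term, the approach is to localize to a compact neighborhood of $\hat x$. Fix any closed ball $A \subset E$ centered at $\hat x$ (possible since $E$ is open). Since $\hat X_n \overset{P}{\to} \hat x$, the event $B_n = \{\hat X_n \in A\}$ satisfies $\Pr(B_n) \to 1$. On $B_n$,
\begin{equation*}
  |F_n(\hat X_n) - f(\hat X_n)| \;\le\; \sup_{x \in A}|F_n(x) - f(x)|,
\end{equation*}
and the right-hand side tends to zero in probability by Theorem~\ref{thm:uniform-convex}. A standard $\varepsilon/\delta$ argument (bounding $\Pr(|F_n(\hat X_n)-f(\hat X_n)|>\varepsilon)$ by $\Pr(B_n^c) + \Pr(\sup_{x \in A}|F_n - f|>\varepsilon)$) then yields $|F_n(\hat X_n) - f(\hat X_n)| \overset{P}{\to} 0$.

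Combining the two bounds gives $F_n(\hat X_n) \overset{P}{\to} f(\hat x)$. The only mild subtlety is justifying that one need not worry about $\hat X_n$ wandering out of $E$: this is handled by the localization to $A \subset E$ above, which is why the proof requires us first to invoke Corollary~\ref{cor:argmin-conv} before applying uniform convergence. There are no substantial obstacles here; this is essentially a packaging of the two previously cited results.
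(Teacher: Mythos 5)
Your proposal is correct and follows essentially the same route as the paper: both arguments localize to a compact ball around $\hat x$ inside $E$ using the consistency $\hat X_n \overset{P}{\to} \hat x$ from Corollary~\ref{cor:argmin-conv}, and then invoke the uniform convergence on compacta from Theorem~\ref{thm:uniform-convex}. The only cosmetic difference is that the paper compares $\sup_{x\in B}F_n(x)$ with $\sup_{x\in B}f(x)$ directly via near-maximizers, whereas you split off the term $|f(\hat X_n)-f(\hat x)|$ and dispose of it by continuity of the concave limit $f$; both versions are valid.
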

\begin{proof}
 We define a set $B$ as $B=\{x : \norm{x-\hat x} \leq \gamma\}$ for some arbitrary small $\gamma>0$ such that $B \subseteq E$. Clearly, set $B$ is compact. From Corollary~\ref{cor:argmin-conv}, we have $\hat X_n \overset{P}{\to} \hat x.$ Hence, there exists positive integer $N_1$ such that $\hat X_n \in B$ for all $n>N_1$ with probability at least $1-\delta$ for some small $\delta>0$.

Since $\sup_{x \in B}|F_n(x) -f(x)| \overset{P}{\to} 0$. Hence, for any $\epsilon >0$, there exists positive integer $N_2$ such that 
\begin{equation}
\label{eqn:uniform-conv}
|F_n(x) -f(x)| <\epsilon \text{ for all } x \in B \text{ and } n>N_2
\end{equation}
with probability at least $1-\delta$.
Let $x_0 \in B$ be such that $f(x_0) \geq \sup_{x \in B} f(x) -\epsilon$. Hence, using \eqref{eqn:uniform-conv}, we have for all $n>N_2$
\begin{align}
\label{eqn:lower-conv}
\sup_{x \in B} f(x) &\leq  f(x_0) +\epsilon  \leq F_n (x_0) +2\epsilon \leq \sup_{x \in B}F_n(x)+2\epsilon
\end{align}
with probability at least $1-\delta$. 

Now, we choose sequence $x_n \in B$ such that $F_n(x_n) \geq \sup_{x \in B} F_n(x) -\epsilon$. Using \eqref{eqn:uniform-conv}, we have for all $n>N_2$
\begin{align}
\label{eqn:upper-conv}
\sup_{x \in B} f(x) +\epsilon \geq  F_{n}(x_n) \geq \sup_{x \in B} F_{n}(x) -\epsilon
\end{align}
with probability at least $1-\delta$.
Combining \eqref{eqn:lower-conv} and \eqref{eqn:upper-conv}, we have for all $n>N_2$,
\begin{align}
|\sup_{x \in B} F_n(x) - \sup_{x \in B} f(x)| < 2\epsilon
\end{align}
with probability at least $1-\delta$. 
We choose $N=\max\{N_1,N_2\}$. Since, $\hat X_n \in B$ for all $n>N$ with probability at least $1-\delta$. We have for all $n>N$, with probability at least $1-2\delta$,

\begin{align}
|F_n(\hat X_n) -  f(\hat x)| < 2\epsilon.
\end{align}

Hence, the proof follows.
\end{proof}
\subsection{Proof of Lemma~\ref{lem:asy-norm-mean}}
\label{proof:asy-norm-mean}

First, if $Z \ge 0$ with probability 1 or if $Z \le 0$ with probability 1, the statement is immediate. Thus, in the following we assume that $Z > 0 $ with non-vanishing probability and that $Z < 0 $ with non-vanishing probability.

Let us now prove that the minimum is achieved for some unique $\lambda^* \in \mathbb{R} \cup \{ - \infty, \infty \} $. We will do the proof by contradiction. If the minimum is attained for multiple $\lambda^*$, then by convexity there must exist a nonempty open interval $(\lambda_1,\lambda_2)$ of values $\lambda^*$ that attain the minimum. Using a second order Taylor expansion, one can show that in this case we must have $Z \equiv c$ almost surely. However, we assumed that $Z$ is non-degenerate. Thus, the minimum is achieved for some unique $\lambda^* \in \mathbb{R} \cup \{ - \infty, \infty \} $.

Furthermore, if $Z > 0$ with probability $>0$ then $\mathbb{E}[e^{\lambda Z}] \rightarrow \infty$ for $\lambda \rightarrow \infty$. Similarly if $Z < 0 $ with probability $< 0$ then $\mathbb{E}[e^{\lambda Z}] \rightarrow \infty$ for $\lambda \rightarrow -\infty$. Thus, the minimum is achieved for $\lambda^* \in \mathbb{R}$.

The proof follows from Corollary \ref{cor:min-conv} and using the fact that the negative of a convex function is concave.

\subsection{Proof of Theorem~\ref{thm:Donsker-Varadhan-condn}}\label{sec:proof-cond}

\begin{proof}
Any distribution $\mathbb{P}$ that satisfies $\mathbb{P}[\cdot |E=e] = \mathbb{P}_{0}[ \cdot|E=e] $ for all $e \in \mc{E}$ satisfies
\begin{equation}
  \mathbb{E}_P[Z] = \mathbb{E}_P[\mathbb{E}_{P_{0}}[Z|E]].
\end{equation}
Thus,
\begin{align*}
  s_E(\theta,P_0) &=\exp\{-\min_{P \in \mc{P} :  P(\cdot | E=e ) = P_0(\cdot | E=e)  \text{ for all} e \in \mc{E}} D_{KL}(P||P_{0})\} \hspace{0.1in} \text{s.t.} \hspace{0.1in}  \mathbb{E}_P[Z] = 0. \\
  &= \exp\{-\min_{P \in \mc{P} :  P(\cdot | E=e) = P_0(\cdot | E=e) \text{ for all} e \in \mc{E}} D_{KL}(P||P_{0})\} \hspace{0.1in} \text{s.t.} \hspace{0.1in} \mathbb{E}_P[\mathbb{E}_{P_{0}}[Z|E]] = 0.
\end{align*}
Since $\mathbb{E}_{P_{0}}[Z|E]$ is a function of $E$, using the chain rule for KL divergence,
\begin{align*}
  s_E(\theta,P_0) &=  \exp\{-\min_{P \in \mc{P} :  P(\cdot | E=e) = P_0(\cdot | E=e)  \text{ for all} e \in \mc{E}} D_{KL}(P||P_{0})\} \hspace{0.1in} \text{s.t.} \hspace{0.1in} \mathbb{E}_P[\mathbb{E}_{P_{0}}[Z|E]] = 0\\
  &=\exp\{-\min_{P \in \mc{P} } D_{KL}(P||P_{0})\} \hspace{0.1in} \text{s.t.} \hspace{0.1in} \mathbb{E}_P[\mathbb{E}_{P_{0}}[Z|E]] = 0.
\end{align*}
Now we can use Theorem~\ref{thm:Donsker-Varadhan} for the random variable $\mathbb{E}_{P_{0}}[Z|E]$, which completes the proof.
\end{proof}

\subsection{Proof of Lemma~\ref{lem:consistency-mean-cond}}
\label{proof:consistency-mean-cond}

We will show that for any compact subset $\Lambda \subset \R$,

\begin{equation}
\label{eqn:final-conv-mean}
\sup_{\lambda \in \Lambda}| E_{P_n}[ e^{ \lambda \hat f_n(E)}]-E_{P_0}[ e^{ \lambda \E_{P_0}[Z\mid E]}]| \overset{P}{\to}0.
\end{equation}
Since $E_{P_n}[ e^{ \lambda \hat f_n(E)}]$ and $E_{P_0}[ e^{ \lambda \E_{P_0}[Z\mid E]}]$ are convex functions in $\lambda$, hence, the proof follows from Corollary~\ref{cor:min-conv}.
In order to show \eqref{eqn:final-conv-mean}, it suffices to show the following:

\begin{equation}
\label{eqn:first-ineq-mean}
\sup_{\lambda \in \Lambda}| E_{P_n}[ e^{ \lambda \hat f_n(E)}]-E_{P_n}[ e^{ \lambda \E_{P_0}[Z\mid E]}]| \overset{P}{\to}0 \text{ and }
\end{equation}

\begin{equation}
\label{eqn:second-ineq-mean}
\sup_{\lambda \in \Lambda}| E_{P_n}[ e^{ \lambda \E_{P_0}[Z \mid E]}]-E_{P_0}[ e^{ \lambda \E_{P_0}[Z\mid E]}]| \overset{P}{\to}0.
\end{equation}
\eqref{eqn:second-ineq-mean} follows from Theorem \ref{thm:uniform-convex}. We next show \eqref{eqn:first-ineq-mean}. 
Since $Z$ has finite moment generating function, hence, the random variable $\E_{P_0}[Z \mid E]$ also has finite moment generating function. Hence, for any small $\epsilon >0$, we can choose a large $M \in \R$ such that the set $R=\{e \in \R^d \mid |\E_{P_0}[Z \mid E=e]| \leq M\} $ satisfies $P_0(R) \geq 1-\epsilon$. Now, from Assumption~\ref{ass:conditional-uniform-convergence-mean}, we have 
\begin{equation}
\sup_{\lambda \in \Lambda} \sup_{e \in R}|e^{ \lambda \hat f_n(e)}- e^{ \lambda \E_{P_0}[Z\mid E=e]}| \overset{P}{\to}0.
\end{equation}
Hence, 
\begin{equation}
\sup_{\lambda \in \Lambda}| E_{P_n}[ e^{ \lambda \hat f_n(E)} \indic{E \in R}]-E_{P_n}[ e^{ \lambda \E_{P_0}[Z\mid E]} \indic{E \in R}]|  \leq \sup_{\lambda \in \Lambda} \sup_{e \in R}|e^{ \lambda \hat f_n(e)}- e^{ \lambda \E_{P_0}[Z\mid E=e]}| \overset{P}{\to}0.
\end{equation}
Since we can choose the set $R$ with an arbitrarily large probability, we have \eqref{eqn:first-ineq-mean}.

\subsection{Proof of Corollary~\ref{cor:r-values-m}}
\label{proof:r-values-m}
Since $L$ is convex and smooth in its first argument, hence, the minimizer in \eqref{eq:M-est} is equivalently a solution of $E_{P}[\ell(\theta^M, Z)]=0.$ To obtain $s$-value in \eqref{eq:4}, we need to find the distribution $P$ closest to $P_0$ such that $E_{P}[\ell(\eta, Z)]=0.$ Hence, $s$-value in \eqref{eq:4} can be rewritten as

\begin{equation}
   s(\theta^M-\eta,P_0)=\exp\{-\min_{P \in \mc{P}} D_{KL}(P||P_0)\} \hspace{0.1in} \text{s.t.} \hspace{0.1in}   E_{P}[\ell(\eta,Z)]= 0.
\end{equation}
This is the same problem as obtaining $s$-value for a multivariate mean of the random variable $\ell(\eta,Z)$. Hence, following similar arguments as the proof for Theorem \ref{thm:Donsker-Varadhan}, we have the result.

\subsection{Proof of Corollary~\ref{cor:r-values-m-cond}}
\label{proof:r-values-m-cond}

Since $L$ is convex and smooth in its first argument, hence, the minimizer in \eqref{eq:M-est} is equivalently a solution of $E_{P}[\ell(\theta^M, Z)]=0.$ To obtain $s$-value in \eqref{eq:5}, we need to find the distribution $P$ closest to $P_0$ such that $P[\bullet \mid E=e]=P_0[\bullet \mid E=e]$ \text{ for all } $e \in \mc{E}$ and $E_{P}[\ell(\eta, Z)]=0.$ Hence, the $s$-value in \eqref{eq:5} can be rewritten as
\begin{equation}
   s(\theta^M-\eta,P_0)=\exp\{-\min_{P \in \mc{P}, P[\bullet|E=e] = P_{0}[\bullet|E=e] \text{ for all } e \in \mc{E}} D_{KL}(P||P_0)\} \text{ s.t. }  E_{P}[\ell(\eta,Z)]= 0.
\end{equation}
This is the same problem as obtaining directional $s$-value for the multivariate mean of the random variable $\ell(\eta,Z)$. Hence, following similar arguments as the proof for Theorem \ref{thm:Donsker-Varadhan-condn}, we have the result.

\begin{lemma}
\label{lem:gen-rockafellar-10.8}
Let $\mc{X} \subseteq \R^m$ be an open convex set and $\mc{Y} \subset \R^d$ be a compact set. Let $\{f_n\}_{n \geq 1}$ be a sequence of real valued functions defined on $\mc{X} \times \mc{Y}$, where each of the function $f_n$ is convex in the first variable and converges pointwise on $\mc{X} \times \mc{Y}$ to a function $f$, that is 
\begin{align*}
f(x,y)=\lim_{n \to \infty} f_{n}(x,y) \text{ for all } (x,y) \in \mc{X}\times \mc{Y}.
\end{align*}
Suppose that 
\begin{align}
\label{eqn:unif-conv-y}
g_{n}(x)=\sup_{y \in \mc{Y}} |f_{n}(x,y)- f(x,y)| \to 0 \text{ for each } x\in \mc{X} \text{ as } n \to \infty
\end{align}
and 
\begin{align}
\label{eqn:finite-sup}
\sup_{y \in \mc{Y}} |f(x,y)| < \infty \text{ for each } x \in \mc{X}.  
\end{align} 
Then $\sup_{y \in \mc{Y}} |f_{n}(x,y)- f(x,y)| \to 0$ uniformly on each compact  $ S \subset \mc{X}$ as $n \to \infty$.
\end{lemma}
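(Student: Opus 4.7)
The plan is to follow the classical argument behind Rockafellar's Theorem 10.8 but track the extra parameter $y$ uniformly. Throughout, note that since $f(\cdot,y)$ is a pointwise limit of convex functions $f_n(\cdot,y)$, it is itself convex in $x$ for every fixed $y$, and the hypothesis $g_n(x)\to 0$ guarantees $\sup_n g_n(x)<\infty$ at every $x\in\mc{X}$.

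First I would reduce to a thickened compact neighborhood. Fix a compact $S\subset\mc{X}$. Since $\mc{X}$ is open, one can find an open set $U$ and compact set $T$ with $S\subset U\subset T\subset\mc{X}$, and then choose finitely many points $x_{1},\ldots,x_{N}\in\mc{X}$ whose convex hull $C$ contains $T$ in its interior. At each of the finitely many vertices $x_{i}$, the hypothesis combined with $\sup_{y}|f(x_{i},y)|<\infty$ yields
\begin{equation*}
K:=\max_{i}\sup_{n,\,y\in\mc{Y}}|f_{n}(x_{i},y)|<\infty,
\end{equation*}
because $\sup_{n,y}|f_{n}(x_{i},y)|\le \sup_{n}g_{n}(x_{i})+\sup_{y}|f(x_{i},y)|$. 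Convexity of $f_{n}(\cdot,y)$ then gives the one-sided bound $f_{n}(x,y)\le K$ for every $x\in C$, $y\in\mc{Y}$, $n\ge 1$; the same bound holds for $f(\cdot,y)$.

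The second step is to convert this uniform upper bound into an equi-Lipschitz bound on $S$, using the standard convex-analytic fact that a convex function bounded above by $M$ on a ball $B(x_{0},2r)$ is Lipschitz on $B(x_{0},r)$ with constant depending only on $M$, $r$, and the value at $x_{0}$. Since $T$ sits in the interior of $C$, there exists $r>0$ with $T+B(0,2r)\subset\text{int}(C)$; applying this fact and using the uniform upper bound $K$ together with the uniform bound on $|f_{n}(x_{i},y)|$ yields a single Lipschitz constant $L$ such that
\begin{equation*}
\sup_{n,\,y\in\mc{Y}}\,|f_{n}(x,y)-f_{n}(x',y)|\le L\|x-x'\|\quad\text{and}\quad\sup_{y}|f(x,y)-f(x',y)|\le L\|x-x'\|
\end{equation*}
for all $x,x'\in S$. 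This equi-Lipschitz property is the heart of the proof and is the main technical obstacle, because the usual Rockafellar argument must be applied in a way that is uniform in $y$, which is possible precisely because convexity is only required in the first variable and the bound $K$ does not depend on $y$.

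Finally, I would conclude by a standard $\epsilon$-net argument. Given $\epsilon>0$, cover $S$ by finitely many balls $B(\xi_{1},\epsilon/L),\ldots,B(\xi_{J},\epsilon/L)$ with centers $\xi_{j}\in S\subset\mc{X}$. For $x\in B(\xi_{j},\epsilon/L)$ and any $y\in\mc{Y}$,
\begin{equation*}
|f_{n}(x,y)-f(x,y)|\le |f_{n}(x,y)-f_{n}(\xi_{j},y)|+|f_{n}(\xi_{j},y)-f(\xi_{j},y)|+|f(\xi_{j},y)-f(x,y)|\le 2\epsilon+g_{n}(\xi_{j}).
\end{equation*}
Taking $\sup_{y}$ and then $\sup_{x\in S}$ gives $\sup_{x\in S}g_{n}(x)\le 2\epsilon+\max_{j}g_{n}(\xi_{j})$, and since the maximum is over finitely many fixed points, the hypothesis $g_{n}(\xi_{j})\to 0$ forces $\limsup_{n}\sup_{x\in S}g_{n}(x)\le 2\epsilon$. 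Letting $\epsilon\downarrow 0$ completes the proof.
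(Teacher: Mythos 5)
Your proof is correct and follows essentially the same architecture as the paper's: first establish that the family $\{f_n(\cdot,y): n\ge 1,\ y\in\mc{Y}\}$ is equi-Lipschitz on compact subsets of $\mc{X}$, then finish with a finite-net argument that transfers the pointwise (in $x$, uniform in $y$) convergence at the net points to uniform convergence on $S$. The only substantive difference is that the paper obtains the equi-Lipschitz property by observing pointwise boundedness of the family and citing Theorem 10.6 of \citet{Rockafellar70} as a black box, and then closes with a uniform Cauchy estimate comparing $f_{n_1}$ with $f_{n_2}$, whereas you unpack the cited theorem (upper bound on a convex hull via the vertices, then the local-Lipschitz lemma for convex functions bounded above) and compare $f_n$ directly with $f$, using that $f$ inherits convexity and the same Lipschitz constant; both routes are equally valid, and yours is marginally more self-contained. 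One small point to tighten in your second step: the two-sided bound you establish at the vertices $x_i$ does not by itself give the lower bound on $f_n(x_0,y)$ at the \emph{centers} $x_0$ of the balls $B(x_0,2r)$ that the local-Lipschitz lemma requires (a vertex of $C$ generally cannot be written as a nontrivial convex combination of an interior point and another point of $C$). This is harmless, since the same estimate $\sup_{n,y}|f_n(x,y)|\le\sup_n g_n(x)+\sup_y|f(x,y)|<\infty$ holds at \emph{every} $x\in\mc{X}$, so you may simply invoke it at the finitely many ball centers covering $T$; but as written the sentence glosses over this.
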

\begin{proof}
The proof works along similar lines as the proof of Theorem 10.8 in \citet{Rockafellar70}. First, we observe that the collection $\{f_n(\cdot, y) \mid n\geq 1 \text{ and } y \in \mc{Y} \}$ is pointwise bounded on $\mc{X}$ using \eqref{eqn:unif-conv-y} and \eqref{eqn:finite-sup}. Hence, by Theorem 10.6 of \citet{Rockafellar70} it is equi-Lipschitzian on each closed bounded subset of $\mc{X}$. Then there exists a real number $\alpha>0$ such that
\begin{align}
|f_{n}(x_1,y)-f_{n}(x_2,y)| \leq \alpha|x_1-x_2|, \text{ for all } x_1,x_2 \in S, n\geq 1 \text{ and } y \in \mc{Y}.
\end{align}

Since $S$ is compact, hence, there exists a finite subset $C_0 $ of $S$ such that each point of $S$ lies within $\frac{\epsilon}{3\alpha}$ distance of at least one point of $C_0$. Since $C_0$ is finite and the functions $g_n$ converge pointwise on $C_0$, there exists an integer $N_0$ such that 
\begin{align}
|f_{n_1}(x,y)-f_{n_2}(x,y)| \leq \frac{\epsilon}{3\alpha} \text{ for all } n_1, n_2 \geq N_0, x \in C_0 \text{ and } y \in \mc{Y}.
\end{align}

Given any $x \in S$, let $z$ be one of the points of $C_0$ such that $|z-x| \leq \frac{\epsilon}{3\alpha}$. Then for all $n_1, n_2 \geq N_0$ and $y \in \mc{Y}$, we have
\begin{align*}
|f_{n_1}(x,y)-f_{n_2}(x,y)| &\leq |f_{n_1}(x,y)-f_{n_1}(z,y)| +|f_{n_1}(z,y)-f_{n_2}(z,y)|+ |f_{n_2}(z,y)-f_{n_2}(x,y)| \\
& \leq \alpha|x-z| +\frac{\epsilon}{3} +\alpha|z-x| \leq \epsilon.
\end{align*}
Hence, the sequence $\{f_n\}_{n \geq 1}$ is cauchy uniformly in $x \in S$ and $y \in \mc{Y}$. Hence, the proof follows.
\end{proof}

\begin{lemma}
\label{lem:unif-conv-convex}
Let $\mc{X} \subseteq \R^m$ be an open convex set and $\mc{Y} \subset \R^d$ be a compact set. Let $\{F_n\}_{n \geq 1}$ be a sequence of real valued random functions defined on $\mc{X} \times \mc{Y}$, where each of the function $F_n$ is convex in the first variable. 

Suppose that 
\begin{align}
\label{eqn:unif-conv-y-prob}
g_{n}(x)&=\sup_{y \in \mc{Y}} |F_{n}(x,y)- f(x,y)| \overset{P}{\to} 0 \text{ for each } x\in \mc{X}  \text{ as } n \to \infty \text{ and }\\
& \sup_{y\in \mc{Y}} |f(x,y)| < \infty \text{ for each } x \in \mc{X}.
\end{align}

Then $\sup_{y \in \mc{Y}} |F_{n}(x,y)- f(x,y)| \overset{P}{\to} 0$ uniformly on each compact $S \subset \mc{X}$ as $ n \to \infty$.
\end{lemma}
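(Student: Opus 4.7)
The plan is to reduce Lemma~\ref{lem:unif-conv-convex} to the deterministic Lemma~\ref{lem:gen-rockafellar-10.8} via a standard subsequence argument combined with a diagonal extraction along a countable dense subset of $\mc{X}$.

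Fix a compact $S \subset \mc{X}$ and set $Z_n = \sup_{x \in S} \sup_{y \in \mc{Y}} |F_n(x,y) - f(x,y)|$. I will use the well-known characterisation that $Z_n \overset{P}{\to} 0$ iff every subsequence of $\{Z_n\}$ contains a further subsequence converging to $0$ almost surely. Pick a countable dense subset $D \subset \mc{X}$. For each $x \in D$ the hypothesis yields $g_n(x) \overset{P}{\to} 0$, so given any subsequence of indices, a standard diagonal extraction produces a further subsequence along which $g_n(x) \to 0$ almost surely \emph{simultaneously} for every $x \in D$. Let $\Omega_0$ denote the corresponding probability-one event.

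On $\Omega_0$ the situation becomes purely deterministic: the convex-in-$x$ functions $F_n(\cdot, y)$ converge, uniformly in $y \in \mc{Y}$, to $f(\cdot, y)$ at every point of the dense set $D$; moreover $\sup_y |f(x,y)| < \infty$ for every $x \in \mc{X}$, and $f(\cdot, y)$ is itself convex as a pointwise limit of convex functions. What remains is to show that $\sup_{x \in S}\sup_{y \in \mc{Y}} |F_n(x,y) - f(x,y)| \to 0$, which is the conclusion of Lemma~\ref{lem:gen-rockafellar-10.8} under the weakened hypothesis that pointwise convergence is assumed only on the dense subset $D$ rather than on all of $\mc{X}$.

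This mild strengthening of Lemma~\ref{lem:gen-rockafellar-10.8} should go through almost verbatim, and this is the only place I expect to have to do any real work. The proof of Lemma~\ref{lem:gen-rockafellar-10.8} uses convergence at only finitely many points $C_0$, chosen as an $\epsilon/(3\alpha)$-net of $S$, together with an equi-Lipschitz bound obtained via Theorem 10.6 of Rockafellar. Since $D$ is dense in $\mc{X}$ one may choose $C_0 \subset D$ from the outset; the equi-Lipschitz bound only needs pointwise boundedness of the family $\{F_n(\cdot, y) : n \ge 1,\, y \in \mc{Y}\}$ on a dense subset of $\mc{X}$, and this is supplied by $g_n(x) \to 0$ on $D$ together with $\sup_y |f(x,y)| < \infty$. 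Once this deterministic conclusion is established on $\Omega_0$, the subsequence characterisation gives $Z_n \overset{P}{\to} 0$ and the lemma follows.
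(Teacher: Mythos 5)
Your proposal is correct and follows essentially the same route as the paper: a diagonal subsequence extraction along a countable dense subset of $\mc{X}$, reduction to the deterministic Lemma~\ref{lem:gen-rockafellar-10.8}, and the subsequence characterisation of convergence in probability. You are in fact slightly more careful than the paper, which invokes Lemma~\ref{lem:gen-rockafellar-10.8} directly even though the extraction only yields almost-sure convergence on the dense set rather than on all of $\mc{X}$; your observation that the deterministic proof only needs a finite net drawn from the dense set (plus pointwise boundedness on a dense set for the equi-Lipschitz bound) correctly closes that gap.
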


\begin{proof}
The proof uses subsequence arguments very similar to that in the proof of Theorem II.1 of \citet{AndersenGi82}. Let $x_1, x_2, \ldots$ be a countable dense set of points in $\mc{X}$. Since $g_n(x_1)\overset{P}{\to} 0$ as $n \to \infty$ there exists a subsequence along which convergence holds almost surely. Along this subsequence $g_n(x_2) \overset{P}{\to} 0$, hence, a further subsequence exists along which $g_n(x_2) \overset{\text{a.s.}}{\to} 0$. By repeating the argument, along a $\text{sub}_k$ sequence, $g_n(x_j) \overset{\text{a.s.}}{\to} 0$ for $j=1, \ldots, k$. By considering the new subsequence formed by taking the first element of the first subsequence, the second element of the second subsequence and so on, we have $g_n(x_j) \overset{\text{a.s.}}{\to} 0$  for each $j=1,2, \ldots$.

Hence, by Lemma \ref{lem:gen-rockafellar-10.8}, it follows that
\begin{align*}
\sup_{x \in S} g_n(x) \overset{\text{a.s.}}{\to}0 \text{ along this subsequence}.
\end{align*}
Since, for any subsequence, there exists a further subsequence along which $\sup_{x \in S} g_n(x) \overset{\text{a.s.}}{\to}0$. It then follows that $\sup_{x \in S} g_n(x) \overset{\text{P}}{\to}0$ along the whole sequence.
\end{proof}

\subsection{Proof of Lemma~\ref{lem:consistency-one-comp}}
\label{proof:consistency-one-comp}
By Assumption~\ref{ass:uniform_convergence}, it follows that $\sup_{\eta \in \Sigma}| E_{P_n}[ e^{ \lambda^{\intercal} \ell(\eta,Z)}]-E_{P_0}[ e^{ \lambda^{\intercal} \ell(\eta,Z)}]| \overset{P}{\to}0$ (see Theorem 19.4 and Example 19.8 of \citet{VanDerVaart02}). Let $\Lambda \subset \R^p$ be a compact subset. Since $e^{ \lambda^{\intercal} \ell(\eta,Z)}$ is convex in $\lambda$, by Assumption~\ref{ass:uniform_convergence} and Lemma~\ref{lem:unif-conv-convex}, we have $\sup_{\lambda \in \Lambda}\sup_{\eta \in \Sigma}| E_{P_n}[ e^{ \lambda^{\intercal} \ell(\eta,Z)}]-E_{P_0}[ e^{ \lambda^{\intercal} \ell(\eta,Z)}]| \overset{P}{\to}0$.

Let $f_n(\lambda, \eta)= E_{P_n}[ e^{ \lambda^{\intercal} \ell(\eta,Z)}]$ and $f(\lambda, \eta)= E_{P_0}[ e^{ \lambda^{\intercal} \ell(\eta,Z)}]$. Since $\sup_{\eta} \sup_{\lambda}|f_n(\lambda, \eta) -f(\lambda,\eta)| \overset{P}{\to} 0$, for any $\epsilon >0$, there exists $N$ such that 
\begin{equation}
\label{eqn:uniform}
|f_n(\lambda, \eta) -f(\lambda,\eta)| <\epsilon \text{ for all } \lambda \in \Lambda, \eta \in \Sigma \text{ and } n>N
\end{equation}
with probability at least $1-\delta$ for some small $\delta>0$.
We first show that $\sup_{\eta \in \Sigma}|\inf_{\lambda \in \Lambda} f_n(\lambda, \eta) - \inf_{\lambda} f(\lambda, \eta) | \overset{P}{\to}0$.

For $\eta \in \Sigma$, let $\lambda_0 \in \Lambda$ be such that $f(\lambda_0, \eta) \leq \inf_{\lambda} f(\lambda, \eta) +\epsilon$. Hence, using \eqref{eqn:uniform}, we have for all $n>N$ 
\begin{align}
\label{eqn:lower}
\inf_{\lambda \in \Lambda} f(\lambda, \eta) &\geq  f(\lambda_0, \eta) -\epsilon  \geq f_n (\lambda_0,\eta) -2\epsilon \geq \inf_{\lambda}f_n(\lambda, \eta)-2\epsilon
\end{align}
with probability at least $1-\delta$. 
Now, for $\eta \in \Sigma$, we choose $\lambda_n \in \Lambda$ such that $f_n(\lambda_n) \leq \inf_{\lambda} f_n(\lambda, \eta) +\epsilon$. Using \eqref{eqn:uniform}, we have
\begin{align}
\label{eqn:upper}
\inf_{\lambda \in \Lambda} f(\lambda, \eta) -\epsilon \leq  f_{n}(\lambda_n, \eta) \leq \inf_{\lambda \in \Lambda} f_{n}(\lambda, \eta) +\epsilon
\end{align}
with probability at least $1-\delta$.

Combining \eqref{eqn:lower} and \eqref{eqn:upper}, we have 
\begin{align}
\label{eqn:bound_mod}
|\inf_{\lambda \in \Lambda} f_n(\lambda, \eta) - \inf_{\lambda \in \Lambda} f(\lambda, \eta)| < 2\epsilon
\end{align}
for all $\eta$ and $n>N$ with probability at least $1-\delta$.

Let $g_{n}( \eta)= \inf_{\lambda \in \Lambda}f_{n}(\lambda, \eta)$ and $g(\eta)=\inf_{\lambda \in \Lambda}f(\lambda, \eta)$, then $\sup_{\eta}|g_n(\eta)-g(\eta)| \overset{P}{\to} 0$. Now we need to show $\sup_{\eta_k}|\sup_{\eta_1,\ldots,\eta_{k-1},\eta_k,\ldots \eta_p} g_n(\eta)-\sup_{\eta_1,\ldots,\eta_{k-1},\eta_k,\ldots \eta_p} g(\eta)| \overset{P}{\to} 0$, which follows similarly as the proof of \eqref{eqn:bound_mod}.

\subsection{Proof of Lemma~\ref{lem:consistency-one-comp-cond}}
\label{proof:consistency-one-comp-cond}
We need to show that for any compact subset $\Lambda \subset \R^p$,
\begin{equation}
\label{eqn:final-conv}
\sup_{\lambda \in \Lambda}\sup_{\eta \in \Sigma}| E_{P_n}[ e^{ \lambda^{\intercal} Q_n(\eta,E)}]-E_{P_0}[ e^{ \lambda^{\intercal} \E_{P_0}[\ell(\eta,Z)\mid E]}]| \overset{P}{\to}0
\end{equation}
and then the rest of the proof follows similarly as in the proof of Lemma~\ref{lem:consistency-one-comp}.
In order to show \eqref{eqn:final-conv}, it suffices to show the following:

\begin{equation}
\label{eqn:first-ineq}
\sup_{\lambda \in \Lambda}\sup_{\eta \in \Sigma}| E_{P_n}[ e^{ \lambda^{\intercal} Q_n(\eta,E)}]-E_{P_n}[ e^{ \lambda^{\intercal} \E_{P_0}[\ell(\eta,Z)\mid E]}]| \overset{P}{\to}0.
\end{equation}

\begin{equation}
\label{eqn:second-ineq}
\sup_{\lambda \in \Lambda}\sup_{\eta \in \Sigma}| E_{P_n}[ e^{ \lambda^{\intercal} \E_{P_0}[\ell(\eta,Z)\mid E]}]-E_{P_0}[ e^{ \lambda^{\intercal} \E_{P_0}[\ell(\eta,Z)\mid E]}]| \overset{P}{\to}0.
\end{equation}
\eqref{eqn:second-ineq} follows similarly as in the proof of Lemma~\ref{lem:consistency-one-comp}.
hence, it remains to show show \eqref{eqn:first-ineq}.

Since $\Lambda$ is a compact set, there exists a real constant $M$ such that $\norm{\lambda}_1 \leq M$ for all $\lambda \in \Lambda$.
Hence,
\begin{align}
\label{eqn:unif-b1}
\begin{split}
| \lambda^{\intercal} Q_n(\eta,e)-\lambda^{\intercal} \E_{P_0}[\ell(\eta,Z)\mid E=e]| &\leq \norm{\lambda}_1 \norm{Q_n(\eta,e)-\E_{P_0}[\ell(\eta,Z)\mid E=e]}_{\infty} \\ &\leq M \norm{Q_n(\eta,e)-\E_{P_0}[\ell(\eta,Z)\mid E=e]}_{\infty}.
\end{split}
\end{align}
Now, for any fixed value of $E=e$, for some $\psi_{n,\lambda,e} \in \R$ such that $\lambda^{\intercal} Q_n(\eta,e) \leq \psi_{n,\lambda,e} \leq \lambda^{\intercal} \E_{P_0}[\ell(\eta,Z)\mid E=e]$, we have by Taylor's expansion
\begin{align}
\begin{split}
&|e^{ \lambda^{\intercal} Q_n(\eta,e)}-e^{ \lambda^{\intercal} \E_{P_0}[\ell(\eta,Z)\mid E=e]}|\leq e^{  \psi_{n,\lambda,e}} | \lambda^{\intercal} Q_n(\eta,e)- \lambda^{\intercal} \E_{P_0}[\ell(\eta,Z)\mid E=e]| \\
& \leq e^{ \lambda^{\intercal} \E_{P_0}[\ell(\eta,Z)\mid E=e] } e^{ M \norm{Q_n(\eta,e)-\E_{P_0}[\ell(\eta,Z)\mid E=e]}_{\infty}}M \norm{Q_n(\eta,e)-\E_{P_0}[\ell(\eta,Z)\mid E=e]}_{\infty}
\end{split}
\end{align}
where the last inequality follows from \eqref{eqn:unif-b1}. 

Since by Assumption~\ref{ass:conditional-uniform-convergence}, we have  $\sup_{\eta} \sup_{e} \norm{ E_{P_{0}}[\ell(\eta,Z)|E=e] - Q_n(\eta,e)}_{\infty} \rightarrow 0$. Hence, in order to show \eqref{eqn:first-ineq}, it suffices to show that $\sup_{\lambda \in \Lambda}\sup_{\eta \in \Sigma}\E_{P_n}[e^{ \lambda^{\intercal} \E_{P_0}[\ell(\eta,Z)\mid E] } ] \leq C <\infty$ for some numerical constant $C$ independent of $n$ with high probability. Now, by Assumption~\ref{ass:uniform_convergence}, we have $\E_{P_0}[\sup_{\eta \in \Sigma}e^{\lambda^{\intercal}\ell(\eta,Z)}] < \infty$ for any $\lambda \in \R^p$. Hence, by Jensen's inequality, we have $\E_{P_0}[\sup_{\eta \in \Sigma}e^{\lambda^{\intercal}\E_{P_0}[\ell(\eta,Z)\mid E]}] < \infty$ for any $\lambda \in \R^p$. Also, $\lambda \to \sup_{\eta \in \Sigma}e^{\lambda^{\intercal}\E_{P_0}[\ell(\eta,Z)\mid E=e]}$ is a convex function for any $e$.  Hence, by Theorem~\ref{thm:uniform-convex}, we have the result.

 \section{Additional experimental details}
 \label{sec:appendix-expt}

\begin{table}[h]
  \caption{National supported work demonstration (NSW) data. Table showing the sample means of covariates for the subset extracted by \citet{DehejiaWa99} (DJW) and the subset containing the remaining samples (DJWC) along with $p$-values for testing the difference of means between the two treated and control groups respectively using Welch two sample t-test. 
\newline
Age=age in years; Education=number of years of schooling; Black=1 if black, 0 otherwise; Hispanic=1 if Hispanic, 0 otherwise; Nodegree=1 if no high school degree, 0 otherwise; Married =1 if married, 0 otherwise; RE75= Earnings in 1975.}
 \label{tab:lalonde}
  \centering
  \begin{tabular}{lllllllll}
    \toprule
    \multicolumn{2}{c}{Part}                   \\
    \cmidrule(r){1-2}
    &No. of Obs.& Age &Education&Black&Hispanic\\
    \midrule
 Treated (DJW)   & 185  & 25.81 &10.35& 0.84 &0.06 \\

Treated (DJWC)& 112 &22.66 &10.44    &0.73&0.15\\
 \midrule
p-values for diff. in means&&\textbf{1.95e-05}&0.6501&\textbf{0.027}&\textbf{0.017} \\
\midrule
Control (DJW)& 260 &  25.05& 10.09   &0.83&0.11\\

Control (DJWC)& 165 & 23.49 &10.35   &0.76&0.12\\

\midrule
p-values for diff. in means&&\textbf{0.0123}&0.1111&\textbf{0.09}&0.6724\\
    \bottomrule
  \end{tabular}
\end{table}

\begin{table}[h]
  \caption{National supported work demonstration (NSW) data. Table showing the sample means of covariates for the subset extracted by \citet{DehejiaWa99} (DJW) and the subset containing the remaining samples (DJWC) along with $p$-values for testing the difference of means between the two treated and control groups respectively using Welch two sample t-test. 
\newline
Age=age in years; Education=number of years of schooling; Black=1 if black, 0 otherwise; Hispanic=1 if Hispanic, 0 otherwise; Nodegree=1 if no high school degree, 0 otherwise; Married =1 if married, 0 otherwise; RE75= Earnings in 1975.}
 \label{tab:lalonde}
  \centering
  \begin{tabular}{lllll}
    \toprule
    \multicolumn{2}{c}{Part}                   \\
    \cmidrule(r){1-2}
    &No. of Obs.& Nodegree&Married&RE75\\
    \midrule
 Treated (DJW)   & 185  & 0.19&0.71& 1532.1\\

Treated (DJWC)& 112 &0.13&0.77&5600\\
 \midrule
p-values for diff. in means&&0.2035&0.2539&\textbf{5.42e-10} \\
\midrule
Control (DJW)& 260 &0.15&0.83&1266.9\\

Control (DJWC)& 165 &0.16 &0.78 &5799.66\\

\midrule
p-values for diff. in means&&0.7891&0.1841& \textbf{6.967e-15}\\
    \bottomrule
  \end{tabular}
\end{table}

\end{document}